\newcommand{\QCBZ}{\mathsf{QCB_0}}
\newcommand{\QCB}{\mathsf{QCB}}
\newcommand{\QTE}{\mathsf{QTE}}
\newcommand{\QTEZ}{\mathsf{QTE_0}}
\newcommand{\CBZ}{\mathsf{CB_0}}
\newcommand{\Pomega}{{P\hspace*{-1pt}\omega}}
\newcommand{\bfSig}{\mathbf{\Sigma}}
\newcommand{\bfPi}{\mathbf{\Pi}}
\newcommand{\bfGamma}{\mathbf{\Gamma}}
\newcommand{\bfDelta}{\mathbf{\Delta}}
\newcommand{\dom}{\mathit{dom}}
\newcommand{\rng}{\mathit{rng}}
\newcommand{\graph}{\mathit{graph}}
\newcommand{\EQ}{\mathit{EQ}}
\newcommand{\sfC}{\mathsf{C}}
\newcommand{\sfF}{\mathsf{F}}
\newcommand{\calN}{\mathcal{N}}
\newcommand{\Nk}[1]{\mathbb{N}\langle#1\rangle}
\newcommand{\Rk}[1]{\mathbb{R}\langle#1\rangle}
\newcommand{\Nalpha}{\Nk{\alpha}}
\newcommand{\Nbeta}{\Nk{\beta}}
\newcommand{\Nlambda}{\Nk{\lambda}}
\newcommand{\Ra}{\mathbb{R}\langle\alpha\rangle}
\newcommand{\IS}{\mathbb{S}}
\newcommand{\Tprod}{\mathop{\textstyle\prod}}
\newtheorem{lemma}{Lemma}[section]
\newtheorem{theorem}[lemma]{Theorem}
\newtheorem{corollary}[lemma]{Corollary}
\newtheorem{proposition}[lemma]{Proposition}
\newtheorem{remark}[lemma]{Remark}
\newtheorem{remarks}[lemma]{Remarks}
\newtheorem{definition}[lemma]{Definition}
\newenvironment{proof}{{\noindent\bf Proof.}}{\hspace*{\fill}$\Box$\par\bigskip}
\newenvironment{proof*}[1]{{\noindent\bf Proof} (#1){\bf.}}{\hspace*{\fill}$\Box$\par\bigskip}
\begin{document}

\title{Hyperprojective Hierarchy of qcb$_0$-Spaces}

\author{Matthias Schr\"oder
\thanks{Supported by the FWF project ``Definability and computability''.}
\\Kurt G\"odel Research Center, University of Vienna
\\Austria
\\ and
\\Victor Selivanov\thanks{Supported by the DFG Mercator professorship at the University of W\"urzburg, by the RFBR-FWF project ``Definability and computability'', and by the RFBR project 13-01-00015.}\\A.P. Ershov
Institute of Informatics
Systems SB RAS\\
Novosibirsk, Russia}
\date{}

\maketitle

\begin{abstract} 
We extend the Luzin hierarchy of qcb$_0$-spaces introduced in \cite{scs13} to all countable ordinals, 
obtaining in this way the hyperprojective hierarchy of qcb$_0$-spaces. 
We generalize all main results of \cite{scs13} to this larger hierarchy. 
In particular, we extend the Kleene-Kreisel continuous functionals of finite types to the continuous functionals
of countable types and relate them to the new hierarchy. 
We show that the category of hyperprojective qcb$_0$-spaces has much better closure properties 
than the category of projective qcb$_0$-space. 
As a result, there are natural examples of spaces that are hyperprojective but not projective.

{\bf Key words.} Hyperprojective hierarchy, qcb$_0$-space, continuous functionals of countable types, 
cartesian closed category.

\bigskip

\emph{\begin{small}
 This is an extended version of the conference paper \cite{scs14}.
\end{small}}

\end{abstract}

%
%

\section{Introduction}\label{in}

A basic notion of Computable Analysis  \cite{wei00} is
the notion of an {\em admissible representation} of a
topological space $X$. This is a partial continuous surjection
$\delta$ from the Baire space $\calN$ onto $X$
satisfying a certain universality property
(see Subsection~\ref{sub:admiss} for some more details).
Such a representation of $X$ usually induces a reasonable
computability theory on $X$, and the class of admissibly
represented spaces is wide enough to include most spaces of
interest for Analysis or Numerical Mathematics. As shown by the
first author \cite{sch:phd}, this class coincides with the
class of the so-called qcb$_0$-spaces, i.e. $T_0$-spaces which are
quotients of countably based spaces, and it forms a cartesian
closed category (with the continuous functions as morphisms).
Thus, among qcb$_0$-spaces one meets many important
function spaces including the continuous functionals of finite
types \cite{kl59,kr59} interesting for several branches of logic
and computability theory.

Along with the mentioned nice properties of qcb$_0$-spaces, this
class seems to be too broad to admit a deep understanding.
Hence, it makes sense to search for natural
subclasses of this class which still include ``practically''
important spaces but are (hopefully) easier to study. Interesting
examples of such subclasses are obtained if we consider, for each
level $\bfGamma$ of the classical Borel or Luzin (projective)
hierarchies of Descriptive Set Theory  \cite{ke95}, the
class of spaces which have an admissible representation of the
complexity $\bfGamma$ (below we make this precise). A study of the
resulting Borel and Luzin
hierarchies of qcb$_0$-spaces was undertaken in \cite{scs13}. 
In particular, it was shown that the Luzin hierarchy of qcb$_0$-spaces is closely related 
to the Kleene-Kreisel continuous functionals of finite types, 
and that the category of projective qcb$_0$-spaces is cartesian closed.

However, the class of  projective qcb$_0$-spaces is in a sense too restricted. 
In particular, it is not closed under some natural  constructions 
(e.g., countable products and coproducts) and does not contain some spaces of interest for Computable Analysis.

In this paper we extend the Luzin hierarchy of qcb$_0$-spaces  to all countable ordinals, 
obtaining in this way  the hyperprojective hierarchy of qcb$_0$-spaces.
We generalize to this larger hierarchy all main results of \cite{scs13} concerning the Luzin hierarchy. 
In particular, we extend the Kleene-Kreisel continuous functionals of finite types 
to the continuous functionals of countable types and relate them to the new hierarchy. 
We show that the category of hyperprojective qcb$_0$-spaces has much better closure properties 
than the category of projective qcb$_0$-space. 
As a result, there are natural examples of spaces that are hyperprojective but not projective.

After recalling some notions and known facts in the next section,
we summarize some basic  facts on the hyperprojective hierarchy of sets in Section \ref{sec:hph:sets}.
In Section \ref{sec:hph:qcb} we study the hyperprojective hierarchy of qcb$_0$-spaces,  in particular we show that 
the category of hyperprojective qcb$_0$-spaces is closed under countable limits and countable colimits
and function spaces. 
In Section \ref{sec:functional} we introduce and study the continuous functionals of countable types
and relate them in Section~\ref{sec:KKCF:HP} to the hyperprojective hierarchy of qcb$_0$-spaces. 
In Section \ref{categ} we establish some properties of  categories of hyperprojective qcb$_0$-spaces.
Finally, in Section \ref{sec:final} we provide additional natural examples 
of hyperprojective qcb$_0$-spaces.

%
%

\section{Notation and preliminaries}\label{prelim}

\subsection{Notation}\label{subnot}

We freely use the standard set-theoretic notation like
$\dom(f),\rng(f)$ and $\graph(f)$ for the domain, range and graph of
a function $f$, respectively, $X\times Y$ for the Cartesian
product, $X \oplus Y$ for the disjoint union of sets $X$
and $Y$, $Y^X$ for the set of functions $f \colon X\to Y$
(but in the case when $X,Y$ are qcb$_0$-spaces we use the same
notation to denote the set of continuous functions
from $X$ to $Y$), and $P(X)$ for the set of all subsets of $X$.
For $A\subseteq X$, $\overline{A}$ denotes the complement
$X\setminus A$ of $A$ in $X$. We identify the set of natural
numbers with the first infinite ordinal $\omega$. The first
uncountable ordinal is denoted by $\omega_1$.
The notation $f:X\to Y$
means that $f$ is a (total) function from a set $X$ to a set $Y$.

\subsection{Topological spaces}\label{sub:topspaces}

We assume the reader to be familiar with the basic notions of
topology. The collection of all open subsets of a topological space $X$
(i.e.\ the topology of $X$) is denoted by $\tau_X$;
for the underlying set of $X$ we will write $X$ in abuse of notation.
We will usually abbreviate ``topological space'' to ``space''.
Remember that a space is {\em zero-dimensional}, if it has a basis of clopen sets. 
A \emph{basis} for the topology on $X$ is a set $\cal B$ of open subsets of $X$  such that 
for every $x\in X$ and open $U$ containing $x$, there is $B\in \cal B$ satisfying $x\in B\subseteq U$. 
A space is \emph{countably based}, if it has a countable basis. 
By a cb$_0$-space we mean a countably based $T_0$-space. 
The class of $cb_0$-spaces is denoted by $\CBZ$.
We write $X \cong Y$, if $X$ and $Y$ are homeomorphic.

A space $Y$ is called a \emph{(continuous) retract} of a space $X$ if there
are continuous functions $s:Y\to X$ and $r:X\to Y$ such that
composition $rs$ coincides with the identity function $id_Y$ on
$Y$. Such a pair of functions $(s,r)$ is called a
{\em section-retraction} pair. Note that the section $s$ is a
homeomorphism between $Y$ and the subspace $s(Y)=\{x\in X\mid
sr(x)=x\}$ of $X$, and $s^{-1}=r|_{s(Y)}$.

Let $\omega$ be the space of non-negative integers with the
discrete topology. Of course, the spaces
$\omega\times\omega=\omega^2$, and $\omega\oplus\omega$ are
homeomorphic to $\omega$, the first homeomorphism is realized by
the Cantor pairing function $\langle \cdot,\cdot\rangle$.

Let ${\mathcal N}=\omega^\omega$ be the set of all infinite
sequences of natural numbers (i.e., of all functions $\xi \colon
\omega \to \omega$). Let $\omega^*$ be the set of finite sequences
of elements of $\omega$, including the empty sequence. For
$\sigma\in\omega^*$ and $\xi\in{\mathcal N}$, we write
$\sigma\sqsubseteq \xi$ to denote that $\sigma$ is an initial
segment of the sequence $\xi$. By $\sigma\xi=\sigma\cdot\xi$ we
denote the concatenation of $\sigma$ and $\xi$, and by
$\sigma\cdot\calN$ the set of all extensions of $\sigma$ in
$\calN$. For $x\in\calN$, we can write
$x=x(0)x(1)\dotsc$ where $x(i)\in\omega$ for each $i\in \omega$. For
$x\in\calN$ and $n\in \omega$, let $x^{<n}=x(0)\dotsc x(n-1)$
denote the initial segment of $x$ of length $n$. 

By endowing $\calN$ with the product of the discrete
topologies on $\omega$, we obtain the so-called \emph{Baire space}.
The product topology coincides with the topology
generated by the collection of sets of the form
$\sigma\cdot\calN$ for $\sigma\in\omega^*$. The Baire space
is of primary importance for Descriptive Set Theory  and Computable Analysis. The importance stems from
the fact that many countable objects are coded straightforwardly
by elements of $\calN$, and it has very specific topological properties. 
In particular, it is a perfect zero-dimensional space,
and the spaces ${\mathcal N}^2$, ${\mathcal N}^\omega$,
$\omega\times \calN=\calN \oplus \calN \oplus\dotsc$ (endowed with the product topology) 
are all homeomorphic to $\calN$. 
Let $(x,y)\mapsto\langle x,y\rangle$ be a homeomorphism between $\calN^2$ and $\calN$. 
Let $(x_0,x_1,\dotsc)\mapsto\langle x_0,x_1,\dotsc\rangle$ be the homeomorphism between 
$\calN^\omega$ and $\calN$ defined by $\langle x_0,x_1,\ldots\rangle\langle m,n\rangle=x_m(n)$.

The space $\Pomega$ is formed by the set of subsets of $\omega$ equipped with the Scott topology,
the basic open sets of which are the sets $\{A\subseteq\omega\mid F\subseteq A\}$, 
where $F$ ranges over the finite subsets of $\omega$.
It has the following well-known universality property:

\begin{proposition}\label{inj2}
 A topological space $X$ embeds into $\Pomega$ iff $X$ is a $cb_0$-space.
\end{proposition}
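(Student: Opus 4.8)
The plan is to prove both implications directly. For the ``only if'' direction, I would first observe that $\Pomega$ is itself a $cb_0$-space: the countably many sets $\{A\subseteq\omega\mid F\subseteq A\}$ with $F$ a finite subset of $\omega$ form a basis of the Scott topology, and $\Pomega$ is $T_0$ because distinct $A\neq B$ are separated by the basic open set $\{C\subseteq\omega\mid n\in C\}$ for any $n$ in the symmetric difference of $A$ and $B$. Since being countably based and being $T_0$ are both inherited by subspaces, every space that embeds into $\Pomega$ is a $cb_0$-space.

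For the ``if'' direction, fix a countable basis $(B_n)_{n\in\omega}$ of the given $cb_0$-space $X$ and define $e\colon X\to\Pomega$ by $e(x)=\{n\in\omega\mid x\in B_n\}$. I would then check the three standard conditions for $e$ to be an embedding. Injectivity is exactly the $T_0$ property: if $x\neq y$, some open set, hence some basic $B_n$, contains precisely one of $x,y$, so $e(x)\neq e(y)$. Continuity follows by computing preimages of basic Scott-open sets, $e^{-1}(\{A\mid F\subseteq A\})=\bigcap_{n\in F}B_n$, which is open because $F$ is finite. To see that $e$ is open onto its image, I would verify the identity $e(B_n)=e(X)\cap\{A\subseteq\omega\mid n\in A\}$ for every $n$; since the $B_n$ form a basis of $X$ and $e$ commutes with unions, this shows that $e$ maps open subsets of $X$ to relatively open subsets of $e(X)$, so $e$ is a homeomorphism onto $e(X)$.

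There is no real obstacle here; the only points requiring care are to use the correct description of the basic Scott-open sets of $\Pomega$ (finitely generated) when computing preimages, and to remember that $e(X)$ must be given the subspace topology when checking that $e$ is open onto its image. The essential content is merely that the ``characteristic-set'' map $x\mapsto\{n\mid x\in B_n\}$ converts the statement ``$x$ lies in the basic open set $B_n$'' into ``$e(x)$ lies in the subbasic Scott-open set $\{A\mid n\in A\}$''.
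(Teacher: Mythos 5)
Your proof is correct and complete: the characteristic-set embedding $x\mapsto\{n\mid x\in B_n\}$, together with the observation that $\Pomega$ is itself a $cb_0$-space and that both properties are hereditary, is exactly the standard argument behind this well-known fact, which the paper states without proof. Nothing is missing; the only minor remark is that the relative openness identity $e(B_n)=e(X)\cap\{A\subseteq\omega\mid n\in A\}$ you verify is precisely where the $T_0$-independent part ends and is valid as stated, so your argument stands as is.
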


Remember that a space $ X $ is \emph{Polish}, if it is countably based
and metrizable with a metric $d$ such that $(X,d)$
is a complete metric space. Important examples of Polish spaces are $\omega$, $\mathcal{N}$,  
the space of reals $ \mathbb{R} $ and its Cartesian powers $ \mathbb{R}^n $ ($ n < \omega $),
the closed unit interval $ [0,1] $,
the Hilbert cube $ [0,1]^\omega $ and the Hilbert space $ \mathbb{R}^\omega $. 
Simple examples of  non-Polish spaces are the \emph{Sierpinski space} $\IS=\{\bot,\top\}$,
where the set $\{\top\}$ is open but not closed, and the space  of rationals.


\subsection{Admissible representations and qcb$_0$-spaces}
\label{sub:admiss}

A {\em representation} of a space $X$ is a
surjection of a subspace of the Baire space $\calN$ onto $X$.
A representation $\delta$ of $X$ is {\em admissible}, if
it is continuous and
any continuous function $\nu:Z \to X$ from a subspace
$Z$ of $\calN$ to $X$ is continuously reducible to $\delta$,
i.e. $\nu=\delta g$ for some continuous function $g:Z \to \calN$.
A topological space is {\em admissibly representable} if it has
an admissible representation.

The notion of admissibility was introduced in \cite{kw85}
for representations of countably based spaces
(in a different but equivalent formulation)
and was extensively studied by many authors.
In \cite{sch:ext,sch:phd} the notion was extended to non-countably based spaces
and a nice characterization of the admissibly represented spaces was achieved. 
Namely, the admissibly represented sequential topological spaces
coincide with the qcb$_0$-spaces.
Spaces which arise as topological quotients of countably based spaces are called \emph{qcb-spaces}, 
and qcb-spaces that have the $T_0$-property are called qcb$_0$-spaces.

The category $\QCB$ of qcb-spaces as objects and continuous functions 
as morphisms is known to be cartesian closed (cf.\ \cite{ELS04,sch:phd}).
The same is true for its full subcategory $\QCBZ$ of qcb$_0$-spaces.
The exponential $Y^X$ to qcb-spaces $X,Y$ has
the set of continuous functions from $X$ to $Y$ as the underlying set,
and its topology is the sequentialization of the compact-open topology on $Y^X$.
By the \emph{sequentialization} of a topology $\tau$ we mean
the family of all sequentially open sets pertaining to this topology.
(Remember that \emph{sequentially open} sets are defined to be the complements
of the sets that are closed under forming limits of converging sequences.)
The sequentialization of $\tau$ is finer than or equal to $\tau$.
The topology of the $\QCB$-product to $X$ and $Y$, which we denote by $X \times Y$,
is the sequentialization of the well-known Tychonoff topology
on the cartesian product of the underlying sets of $X$ and $Y$.
If $X$ and $Y$ are additionally $T_0$, then $X \times Y$ and $Y^X$ are $T_0$ as well.
So products and exponentials in $\QCB$ and in $\QCBZ$ are formed in the same way
as in its supercategory $\mathsf{Seq}$ of sequential topological spaces.

We will also use the following well-known facts (see e.g.\ \cite{sch:phd,wei00}).

\begin{proposition}\label{p:univ}
There is a partial continuous function $u:\subseteq\calN^2\to\calN$ such that
$\dom(u)\in\bfPi^0_2(\calN^2)$, 
for any partial continuous function $g$ on $\calN$ there is some $p\in\calN$ 
such that $u_p:=\lambda x.u(p,x)$ is an extension of $g$,
and for any partial continuous function $G:\subseteq \calN \times \calN \to \calN$ 
there is a total continuous function $g$ on $\calN$ such that $u(g(p),q)=F(p,q)$ for all $(p,q) \in \dom(G)$. 
\end{proposition}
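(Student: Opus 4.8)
The plan is to build $u$ as a ``universal approximation machine'': on input $p$ it decodes from $p$ a monotone finite-approximation map and then runs it on $x$. Fix a bijection $\sigma\mapsto\langle\sigma\rangle$ of $\omega^*$ with $\omega$. For $p\in\calN$ define $\psi_p\colon\omega^*\to\omega^*$ by recursion on length: let $\psi_p(\varepsilon)$ be the string $\tau$ with $\langle\tau\rangle=p(\langle\varepsilon\rangle)$; and, having defined $\psi_p(\sigma)$, put $\psi_p(\sigma^\frown n):=\tau$ where $\langle\tau\rangle=p(\langle\sigma^\frown n\rangle)$ if $\psi_p(\sigma)\sqsubseteq\tau$, and $\psi_p(\sigma^\frown n):=\psi_p(\sigma)$ otherwise. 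Thus $\psi_p$ is always $\sqsubseteq$-monotone with $\psi_p(\sigma)\sqsubseteq\psi_p(\sigma^\frown n)$. Now set $u(p,x)=y$ iff $|\psi_p(x^{<n})|\to\infty$ as $n\to\infty$, in which case $y:=\bigcup_n\psi_p(x^{<n})\in\calN$ is uniquely determined, the strings $\psi_p(x^{<n})$ forming a $\sqsubseteq$-chain.

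To verify the properties, note first that $\dom(u)=\{(p,x)\in\calN^2:\ \forall k\,\exists n\ |\psi_p(x^{<n})|\ge k\}$, and the predicate ``$|\psi_p(x^{<n})|\ge k$'' depends only on finitely many values of $p$ and of $x$, hence is clopen; so $\dom(u)$ is a countable intersection of open sets, i.e.\ in $\bfPi^0_2(\calN^2)$, and the same observation shows $u$ is continuous on $\dom(u)$, since to compute $y^{<k}$ one searches for any $n$ with $|\psi_p(x^{<n})|\ge k$ and reads off $\psi_p(x^{<n})$. For the second property, use the standard fact that every partial continuous $g\colon\subseteq\calN\to\calN$ has a monotone approximation: let $\varphi(\sigma)$ be the longest string of length $\le|\sigma|$ that is an initial segment of $g(x)$ for every $x\in\dom(g)$ with $\sigma\sqsubseteq x$. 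Then $\varphi$ is monotone, $\varphi(\sigma)\sqsubseteq\varphi(\sigma^\frown n)$, and continuity of $g$ gives $|\varphi(x^{<n})|\to\infty$ and $\bigcup_n\varphi(x^{<n})=g(x)$ for all $x\in\dom(g)$. Taking $p$ with $p(\langle\sigma\rangle):=\langle\varphi(\sigma)\rangle$, a straightforward induction shows $\psi_p=\varphi$; hence $\dom(u_p)\supseteq\dom(g)$ and $u_p$ agrees with $g$ on $\dom(g)$, i.e.\ $u_p$ extends $g$.

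For the third property, represent the given partial continuous $G\colon\subseteq\calN\times\calN\to\calN$ by a map $\Phi\colon\omega^*\times\omega^*\to\omega^*$ that is monotone in the pair $(\sigma,\tau)$, satisfies $\Phi(\varepsilon,\varepsilon)=\varepsilon$, and has $\bigcup_n\Phi(p^{<n},q^{<n})=G(p,q)$ with lengths tending to infinity for all $(p,q)\in\dom(G)$; this exists by the same longest-common-initial-segment construction performed on the product. Define $g\colon\calN\to\calN$ by letting $g(p)$ be the point with $g(p)(\langle\tau\rangle):=\langle\Phi(p^{<|\tau|},\tau)\rangle$. Since $\Phi(p^{<|\tau|},\tau)$ depends only on $p^{<|\tau|}$, the map $p\mapsto g(p)$ is total and continuous. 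The map $\tau\mapsto\Phi(p^{<|\tau|},\tau)$ is $\sqsubseteq$-monotone, increases along one-step extensions, and sends $\varepsilon$ to $\varepsilon$, so the induction of the previous paragraph gives $\psi_{g(p)}(\tau)=\Phi(p^{<|\tau|},\tau)$. Therefore $u(g(p),q)=\bigcup_n\Phi(p^{<n},q^{<n})$ whenever those lengths are unbounded in $n$; this holds for every $(p,q)\in\dom(G)$, and there equals $G(p,q)$, as required (the symbol $F$ in the statement being $G$).

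The delicate point — and the main obstacle — is that a single $u$ must simultaneously have a $\bfPi^0_2$ domain (so the decoding must not involve any unbounded search that could fail in a more complex way), be universal, and admit the continuous parametrization $g$. The self-correcting recursion $p\mapsto\psi_p$ is precisely what reconciles these: it turns an \emph{arbitrary} $p\in\calN$ into a genuine monotone map while recovering the intended map from canonical codes, and it confines the only quantifier complexity to ``lengths tend to infinity'', which is $\bfPi^0_2$. The remaining bookkeeping — in particular presenting $G$ via the two-variable approximation $\Phi$ rather than through the homeomorphism $\calN^2\cong\calN$ — is then routine.
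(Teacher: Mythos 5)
Your construction is correct, and it is essentially the standard one underlying this fact: the paper offers no proof of its own but cites it as well known from Weihrauch's and Schr\"oder's work, where exactly this representation of partial continuous functions on $\calN$ by monotone word functions (with the self-correcting decoding making every $p\in\calN$ a code, the $\bfPi^0_2$ domain coming from ``lengths tend to infinity'', and the smn-style continuous $g$ obtained by re-coding a two-variable monotone approximation) is used. The only nitpick is that your ``longest common initial segment'' $\varphi(\sigma)$ (and $\Phi(\sigma,\tau)$) needs a convention when no element of $\dom(g)$ (resp.\ $\dom(G)$) extends $\sigma$, but this is harmless since along branches through the domain that case never occurs and the self-correcting recursion ignores the other branches.
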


We describe the construction of canonical admissible representations
for products and function spaces formed in $\QCBZ$ (cf.\ \cite{sch:phd}).

\begin{proposition}\label{p:fspace:rep}
 Let $\delta$ and $\gamma$ be admissible representations
 for qcb$_0$-spaces $X$ and $Y$, respectively.
 Then canonical admissible representations $[\delta\times\gamma]$ for the $\QCBZ$-product $X \times Y$
 and $[\gamma^\delta]$ for the $\QCBZ$-exponential $Y^X$ can be defined by:
 \[
   \text{ $[\delta\times\gamma](\langle p,q \rangle)=(x,y)$
          iff $\delta(p)=x \wedge \gamma(q)=y$}
   \quad\mathrm{and}\quad
   \text{$[\gamma^\delta](p)=f$ iff $f\delta=\gamma u_p|_{\dom(\delta)}$.}
 \]
 for $p,q \in \calN$, $x \in X$, $y \in Y$,
 and $f\colon X \to Y$.
\end{proposition}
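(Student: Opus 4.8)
The plan is to check, one representation at a time, that $[\delta\times\gamma]$ and $[\gamma^\delta]$ are well-defined surjections onto the respective $\QCBZ$-spaces, that they are continuous, and that they are admissible. Everything reduces to Proposition~\ref{p:univ} together with two standard structural facts that I will use freely: every admissible representation of a qcb$_0$-space is a topological quotient map; and the $\QCBZ$-product and the $\QCBZ$-exponential carry the sequentialisations of, respectively, the Tychonoff product topology and the compact-open topology (as recorded in the excerpt). Since the domains of all representations occurring in the argument are subspaces of $\calN$, hence metrisable, and since the sequentialisation of a topology has the same convergent sequences as the topology itself, continuity of a map out of such a domain into a $\QCBZ$-product may be tested against the plain Tychonoff topology, and continuity of a map into a $\QCBZ$-exponential may be handled via the cartesian-closed structure of $\QCBZ$.

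\textbf{The product representation.} I would treat $[\delta\times\gamma]$ first, since it is used in the treatment of $[\gamma^\delta]$. Its domain $\{\langle p,q\rangle\mid p\in\dom(\delta),\ q\in\dom(\gamma)\}$ is a subspace of $\calN$, and surjectivity onto $X\times Y$ is immediate from surjectivity of $\delta$ and $\gamma$ and from $\langle\cdot,\cdot\rangle$ being a homeomorphism of $\calN^2$ onto $\calN$. Modulo this homeomorphism, $[\delta\times\gamma]$ is the map $\delta\times\gamma$, which is continuous into the Tychonoff product of the underlying sets of $X$ and $Y$; by the remark above this is continuity into the $\QCBZ$-product $X\times Y$. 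For admissibility, given a continuous $\nu\colon Z\to X\times Y$ from a subspace $Z$ of $\calN$, compose with the two (still continuous) projections to obtain continuous maps $Z\to X$ and $Z\to Y$, reduce them through $\delta$ and $\gamma$ to continuous $g_1,g_2\colon Z\to\calN$, and set $g=\langle g_1,g_2\rangle$; then $[\delta\times\gamma]\circ g=\nu$.

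\textbf{The function-space representation.} For $[\gamma^\delta]$ I would argue as follows. Well-definedness: if $f_1\delta=\gamma u_p|_{\dom(\delta)}=f_2\delta$ then $f_1=f_2$ since $\delta$ is onto; and whenever $f\delta=\gamma u_p|_{\dom(\delta)}$ holds, the right-hand side is continuous (as $\gamma$ and $u$ are continuous and $\dom(\delta)\subseteq\dom(u_p)$), so $f$ is continuous because $\delta$ is a quotient map, i.e.\ $f\in Y^X$. Surjectivity: given $f\in Y^X$, the map $f\delta\colon\dom(\delta)\to Y$ is continuous, hence factors as $\gamma\circ g$ for some continuous $g\colon\dom(\delta)\to\calN$ by admissibility of $\gamma$, and by Proposition~\ref{p:univ} some $u_p$ extends $g$; then $\gamma u_p|_{\dom(\delta)}=\gamma g=f\delta$, so $[\gamma^\delta](p)=f$. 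Continuity: writing $D$ for the domain of $[\gamma^\delta]$, which is a $cb_0$-space, the product part already proven shows that $D\times X$ has an admissible -- hence quotient -- representation $[\rho\times\delta]$ for an admissible representation $\rho$ of $D$; by cartesian closedness of $\QCBZ$ it suffices to show that the transpose $(p,x)\mapsto[\gamma^\delta](p)(x)$ of $[\gamma^\delta]$ is a continuous map $D\times X\to Y$, and precomposing this transpose with $[\rho\times\delta]$ gives $\langle r,q\rangle\mapsto\gamma(u(\rho(r),q))$, which is continuous; the quotient property then delivers continuity of the transpose, hence of $[\gamma^\delta]$. Admissibility: given a continuous $\nu\colon Z\to Y^X$ with $Z$ a subspace of $\calN$, pass to the continuous transpose $\widehat\nu\colon Z\times X\to Y$, precompose with the admissible representation $[\mathrm{id}_Z\times\delta]$ of $Z\times X$ to get a continuous map on a subspace of $\calN$, reduce it through $\gamma$ to a continuous partial $G\colon\subseteq\calN^2\to\calN$ with $\gamma(G(z,q))=\nu(z)(\delta(q))$ for $z\in Z$, $q\in\dom(\delta)$, and apply the second part of Proposition~\ref{p:univ} to obtain a total continuous $g$ with $u(g(z),q)=G(z,q)$ on $\dom(G)$; then $\gamma u_{g(z)}|_{\dom(\delta)}=\nu(z)\delta$, so $[\gamma^\delta](g(z))=\nu(z)$, and $g|_Z$ reduces $\nu$ to $[\gamma^\delta]$.

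I expect the one genuinely delicate point to be the continuity of $[\gamma^\delta]$. A direct computation of preimages of subbasic compact-open sets is awkward, so the argument must instead exploit that the exponential in $\QCBZ$ carries the \emph{sequentialised} compact-open topology -- which is what licenses passing to the transpose via the exponential adjunction -- together with the fact that admissible representations are quotient maps, so that continuity of the transpose can be read off from continuity of $\gamma\circ u$ along a suitable admissible representation of $D\times X$. All the remaining steps are routine once Proposition~\ref{p:univ} is available: they consist of lifting continuous maps along $\delta$ and $\gamma$ and feeding the resulting realisers through the universal function $u$.
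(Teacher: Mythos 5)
Your argument is correct. Note that the paper itself gives no proof of Proposition~\ref{p:fspace:rep}: it simply records the construction and cites Schr\"oder's thesis \cite{sch:phd}, so there is no in-paper proof to compare against; your realizer-theoretic route (surjectivity and continuity of $[\delta\times\gamma]$ via the Tychonoff topology and sequentiality of the domain, then well-definedness, surjectivity, continuity and universality of $[\gamma^\delta]$ via the universal function $u$ of Proposition~\ref{p:univ}, the quotient property of admissible representations of sequential spaces, and the exponential transpose in $\QCBZ$) is exactly the standard argument from that source, and the two delicate points you single out -- that admissibility plus sequentiality yields the quotient property, and that continuity of $[\gamma^\delta]$ is read off from continuity of $\langle r,q\rangle\mapsto\gamma(u(\rho(r),q))$ through a quotient representation of $D\times X$ -- are handled correctly.
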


Along with the binary product, the category $\QCBZ$ is  closed under countable product. The countable product of a sequence $X_0,X_1,\ldots$ of qcb$_0$-spaces is denoted by $\prod_{n\in \omega}X_n$. If $\delta_n$ is an admissible representation of $X_n$ for each $n$, the induced admissible representation of $\prod_{n\in \omega}X_n$ is denoted by $\prod_{n\in \omega}\delta_n$.  The category $\QCBZ$ is also closed under  other constructions including countable limits and colimits.
We will discuss some such constructions in Section~\ref{sec:hph:qcb} where we also recall some relevant notions of Category Theory.

For any qcb$_0$-space $X$, 
let $\mathcal{O}(X)$ be the hyperspace of open sets in $X$ endowed with the Scott topology.
The space $\mathcal{O}(X)$ is well known to be homeomorphic to the function space $\mathbb{S}^X$,
where $\mathbb{S}$ is the Sierpinski space.

%
%

\section{Hyperprojective hierarchy of sets}\label{sec:hph:sets}

Here we recall some facts on hierarchies in arbitrary spaces, with the emphasis on the hyperprojective hierarchy 
in the Baire space. Additional information on the hyperprojective hierarchy may be found in \cite{ke83}.

A {\em pointclass} on $ X $ is simply a collection $
\bfGamma(X) $ of subsets of $ X $. A
{\em family of pointclasses} \cite{s13} is a family $
\bfGamma=\{\bfGamma(X)\} $ indexed by arbitrary
topological spaces $X$ such that each $ \bfGamma(X) $ is
a pointclass on $ X $ and $ \bfGamma $ is closed under
continuous preimages, i.e. $ f^{-1}(A)\in\bfGamma(X) $
for every $ A\in\bfGamma(Y) $ and every continuous
function $ f \colon X\to Y $. A basic example of a
family of pointclasses is given by the family
$\mathcal{O}=\{\tau_X\}$ of the topologies of all the spaces $X$.

We will use some operations on families of pointclasses. First, the usual
set-theoretic operations will be applied to the families of
pointclasses pointwise: for example, the union $\bigcup_i
\bfGamma_i$ of the families of pointclasses
$\bfGamma_0,\bfGamma_1,\ldots$ is defined by
$(\bigcup_i\bfGamma_i)(X)=\bigcup_i\bfGamma_i(X)$.

Second, a large class of such operations is induced by the
set-theoretic operations of L.V. Kantorovich and E.M. Livenson
(see e.g. \cite{s13} for the general definition). 
Among them are the operation $\bfGamma\mapsto\bfGamma_\sigma$,
where $\bfGamma(X)_\sigma$ is the set of all countable unions of sets in $\bfGamma(X)$, 
the operation $\bfGamma\mapsto\bfGamma_\delta$,
where $\bfGamma(X)_\delta$ is the set of all countable intersections of sets in $\bfGamma(X)$, 
the operation $\bfGamma\mapsto\bfGamma_c$,
where $\bfGamma(X)_c$ is the set of all complements of sets in $\bfGamma(X)$, 
the operation $\bfGamma\mapsto\bfGamma_d$,
where $\bfGamma(X)_d$ is the set of all differences of sets in $\bfGamma(X)$, 
the operation $\bfGamma\mapsto\bfGamma_\exists$ defined by 
$\bfGamma_\exists(X):=\{\exists^\mathcal{N}(A)\mid A\in\bfGamma(\calN\times X)\}$,
where $\exists^\mathcal{N}(A):=\{x\in X\mid \exists p\in\calN.(p,x)\in A\}$ is the projection of $A\subseteq\calN\times X$ 
along the axis $\calN$,
and finally the operation $\bfGamma\mapsto\bfGamma_\forall$ defined by 
$\bfGamma_\forall(X):=\{ \forall^\mathcal{N}(A) \mid A\in\bfGamma(\calN\times X)\}$,
where $\forall^\mathcal{N}(A):=\{x\in X\mid \forall p\in\calN.(p,x)\in A\}$.

The  operations on families of pointclasses enable to provide short uniform descriptions 
of the classical hierarchies in arbitrary spaces. 
E.g., the Borel  hierarchy is the family of pointclasses
$\{\bfSig^0_\alpha\}_{ \alpha<\omega_1}$ defined by induction on $\alpha$ as follows \cite{s06,br}:
 $$\bfSig^0_0(X):=\{\emptyset\}, \bfSig^0_1 := \mathcal{O}, \bfSig^0_2 := (\bfSig^0_1)_{d\sigma},\text{ and } \bfSig^0_\alpha(X) :=(\bigcup_{\beta<\alpha}\bfSig^0_\beta(X))_{c\sigma}$$
 for $\alpha>2$.
The sequence $\{\bfSig^0_\alpha(X)\}_{ \alpha<\omega_1}$ is called \emph{the Borel hierarchy} in $X$.
We also let $\bfPi^0_\beta(X) := (\bfSig^0_\beta(X))_c $
and $\bfDelta^0_\alpha(X) := \bfSig^0_\alpha(X) \cap \bfPi^0_\alpha (X)$.
The classes $\bfSig^0_\alpha(X),\bfPi^0_\alpha(X),{\bf\Delta}^0_\alpha(X)$
are called the \emph{levels} of the Borel hierarchy in $X$.

For this paper, the hyperprojective hierarchy is of main interest. 

\begin{definition}\label{hpro}
 The hyperprojective hierarchy is the family of pointclasses
 $\{\bfSig^1_\alpha\}_{ \alpha<\omega_1}$ defined by induction on $\alpha$ as follows:  
 $\bfSig^1_0=\bfSig^0_2$, 
 $\bfSig^1_{\alpha+1}=(\bfSig^1_\alpha)_{c\exists}$, 
 $\bfSig^1_\lambda=(\bfSig^1_{<\lambda})_{\delta \exists}$, 
 where $\alpha,\lambda<\omega_1$, $\lambda$ is a limit ordinal, 
 and $\bfSig^1_{<\lambda}(X):=\bigcup_{\alpha<\lambda}\bfSig^1_\alpha(X)$.
\end{definition}

 In this way, we obtain for any topological space $X$ the sequence 
 $\{\bfSig^1_\alpha(X)\}_{\alpha<\omega_1}$, which we call here \emph{the hyperprojective hierarchy in $X$}. 
 The pointclasses $\bfSig^1_\alpha(X)$, $\bfPi^1_\alpha(X):=(\bfSig^1_\alpha(X))_c$ and 
 $\bfDelta^1_\alpha(X):=\bfSig^1_\alpha(X)\cap\bfPi^1_\alpha(X)$ are called 
 \emph{levels of the hyperprojective hierarchy in $X$}. 
 The finite non-zero levels of the hyperprojective hierarchy coincide with the corresponding levels 
 of the Luzin's projective hierarchy \cite{br,scs13}. 
 The class of \emph{hyperprojective sets} in $X$ is defined as the union of all levels 
 of the hyperprojective hierarchy in $X$. 
 
\begin{remarks}\label{p:rem} \rm
\begin{enumerate}
\item
 If $X$ is Polish then one can equivalently take $\bfSig^1_0=\bfSig^0_1$ 
 in the definition of the hyperprojective hierarchy and obtain the same non-zero levels  as above. For non-Polish spaces our definition guarantees the ``right'' inclusions of the levels,  as the first item of the next proposition states. 

\item 
 In the case of Polish spaces our ``hyperprojective hierarchy'' is in fact an initial segment of the hyperprojective hierarchy  from \cite{ke83}, so ``$\omega_1$-hyperprojective'' would be more precise name for our hierarchy; nevertheless, we prefer to use the easier term ``hyperprojective'' for our hierarchy. 
  
\item In the literature one can find two slightly different definitions of hyperprojective hierarchy. Our definition corresponds to that in \cite{ke83}. The other one (see e.g. Exercise 39.18 in \cite{ke95}) differs from ours only for limit levels, namely it takes $(\bfSig^1_{<\lambda})_\sigma$ instead of our $\bfSig^1_\lambda$. Our choice simplifies several formulations in Sections \ref{sec:functional} and \ref{categ}. The classes $(\bfSig^1_{<\lambda})_\sigma$ and their duals are also of interest for this paper, they are used in Sections \ref{sec:hph:qcb} and \ref{sec:functional}.

\end{enumerate}
\end{remarks}

 The next assertion collects  some   properties of the hyperprojective hierarchy. 
 They are proved just in the same way as for the classical projective hierarchy in Polish spaces \cite{ke95}, we omit the corresponding details.
 
\begin{proposition}\label{p:hpro1}
\begin{enumerate}
\item
 For any $\alpha < \beta<\omega_1$, 
 $\bfSig^1_\alpha\cup\bfPi^1_\alpha\subseteq\bfDelta^1_\beta$.
\item
  For any limit countable ordinal $\lambda$, $\bfSig^1_{<\lambda}=\bfPi^1_{<\lambda}$ and $(\bfSig^1_{<\lambda})_\delta=(\bfPi^1_{<\lambda})_{\sigma c}$.
\item
 For any non-zero $\alpha<\omega_1$, 
 $\bfSig^1_\alpha=(\bfSig^1_\alpha)_\sigma=(\bfSig^1_\alpha)_\delta=(\bfSig^1_\alpha)_\exists$. 
 In particular, the class $\bfSig^1_\alpha(\mathcal{N})$ is closed under countable unions, 
 countable intersections, continuous images, 
 and continuous preimages of functions with a $\bfPi^0_2$-domain. 
\item
 For any non-zero $\alpha<\omega_1$, 
 $\bfPi^1_\alpha=(\bfPi^1_\alpha)_\sigma=(\bfPi^1_\alpha)_\delta=(\bfPi^1_\alpha)_\forall$.  
 In particular, the class $\bfPi^1_\alpha(\mathcal{N})$ is closed under countable unions and countable intersections,
 and continuous preimages of functions with a $\bfPi^0_2$-domain.
 \item
  For any limit countable ordinal $\lambda$, the family of pintclasses $(\bfSig^1_{<\lambda})_\sigma$ is closed under continuous preimages and the operations $\sigma$ of countable union,  $\exists$ of projection along $\mathcal{N}$ and finite intersection but not under the operation $\delta$ of countable intersection. 
\item
 For any uncountable Polish space (and also for any uncountable  quasi-Polish space \cite{br}) $X$, 
 the hyperprojective hierarchy in $X$ does not collapse, i.e. $\bfSig^1_\alpha(X)\not\subseteq\bfPi^1_\alpha(X)$ 
 for each $\alpha<\omega_1$.
\end{enumerate}
\end{proposition}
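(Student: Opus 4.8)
\medskip
\noindent\emph{Proof plan.}
The idea is to replay the classical construction of the projective hierarchy in Polish spaces (\cite[\S 37]{ke95}, \cite{ke83}) inside the calculus of families of pointclasses and Kantorovich--Livenson operations recalled above; once the bookkeeping is phrased in terms of these operations, the passage to arbitrary topological spaces costs essentially nothing. Items (1), (3) and (4) are proved by one transfinite induction on the ordinal parameter, run simultaneously on the $\bfSig$-side and the $\bfPi$-side (which are interchanged under the complementation operation $c$); items (2) and (5) are then formal corollaries; item (6) is a separate diagonal argument built on universal sets. For the base level note that every closed set has the form $X\setminus U$ with $U$ open, so $\bfPi^0_1\subseteq(\bfSig^0_1)_d\subseteq\bfSig^0_2$ and, dually, $\bfSig^0_1\subseteq\bfPi^0_2$; hence $\bfSig^0_1\cup\bfPi^0_1\subseteq\bfDelta^0_2$ and $\bfSig^1_0=\bfSig^0_2=(\bfDelta^0_2)_\sigma$ is already closed under countable unions and (by De~Morgan, since $\bfPi^0_2$ is closed under countable intersections) under countable intersections. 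This is exactly why $\bfSig^1_0$ is taken to be $\bfSig^0_2$ rather than $\bfSig^0_1$: it forces the ``right'' inclusions $\bfSig^1_\alpha\cup\bfPi^1_\alpha\subseteq\bfDelta^1_\beta$ to hold already from $\alpha=0$, even for non-Polish spaces (cf.\ Remark~\ref{p:rem}(1)).

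\medskip
\noindent\emph{Induction for (1), (3), (4).}
Closure of $\bfSig^1_\alpha$ (for $\alpha\ge 1$) under $\exists^{\mathcal N}$ is immediate from the definition together with $\calN^2\cong\calN$ and $\exists^{\mathcal N}\exists^{\mathcal N}=\exists^{\mathcal N}$. Closure under countable unions and intersections follows from the familiar ``pulling the quantifier out'' identities $\bigcup_n\exists^{\mathcal N}B_n=\exists^{\mathcal N}\bigl(\bigcup_n h_n^{-1}(B_n)\bigr)$ and $\bigcap_n\exists^{\mathcal N}B_n=\exists^{\mathcal N}\bigl(\bigcap_n g_n^{-1}(B_n)\bigr)$, where $g_n,h_n$ are suitable shift/pairing maps on $\calN$ and one invokes the induction hypothesis that the relevant lower pointclass contains the clopen sets and is closed under finite unions, countable unions and countable intersections (the union case also using that a clopen-indexed countable union of sets from that pointclass stays inside it). The $\bfPi$-statements are obtained by complementation. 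For (1), fix $\gamma$: one has $\bfPi^1_\gamma\subseteq(\bfPi^1_\gamma)_\exists=\bfSig^1_{\gamma+1}$ because $B=\exists^{\mathcal N}(\calN\times B)$ with $\calN\times B$ a continuous preimage of $B$; and $\bfSig^1_\gamma\subseteq\bfSig^1_{\gamma+1}$ because every $\bfSig^1_\gamma$-set is $\exists^{\mathcal N}$ of a set which, by the induction hypothesis (and, for limit $\gamma$, the inclusion $(\bfPi^1_{<\gamma})_\delta\subseteq\bfPi^1_\gamma$, itself a De~Morgan computation on $\bfSig^1_\gamma=(\bfSig^1_{<\gamma})_{\delta\exists}$), already lies in $\bfPi^1_\gamma$; at $\gamma=0$ one uses instead $\bfSig^0_2=(\bfDelta^0_2)_\sigma$ and that a clopen-indexed countable union of $\bfPi^0_2$-sets is $\bfPi^0_2$. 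Complementing gives $\bfSig^1_\gamma\subseteq\bfPi^1_{\gamma+1}$ and $\bfPi^1_\gamma\subseteq\bfPi^1_{\gamma+1}$, and the limit stages are immediate ($\bfSig^1_{<\lambda}\subseteq\bfSig^1_\lambda\cap\bfPi^1_\lambda$, and each $\bfSig^1_\beta$ with $\beta<\lambda$ sits inside $\bfSig^1_{\beta+1}\subseteq\bfSig^1_{<\lambda}$), so (1) follows. The ``in particular'' clauses of (3) and (4) follow from the closure properties, since a continuous preimage along a map with $\bfPi^0_2$-domain is $\exists^{\mathcal N}$ of the intersection of the (low-complexity) graph with a cylinder.

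\medskip
\noindent\emph{Items (2) and (5).}
For (2): since $\lambda$ is a limit, $\alpha<\lambda$ implies $\alpha+1<\lambda$, so by (1) $\bfSig^1_\alpha\cup\bfPi^1_\alpha\subseteq\bfDelta^1_{\alpha+1}\subseteq\bfSig^1_{<\lambda}\cap\bfPi^1_{<\lambda}$, giving $\bfSig^1_{<\lambda}=\bfPi^1_{<\lambda}$; and then $(\bfSig^1_{<\lambda})_\delta=(\bfPi^1_{<\lambda})_{\sigma c}$ is De~Morgan, using $(\bfSig^1_{<\lambda})_c=\bfPi^1_{<\lambda}$. For (5): closure of $(\bfSig^1_{<\lambda})_\sigma$ under continuous preimages, under $\sigma$, under finite intersections and under $\exists^{\mathcal N}$ are one-line checks (for $\exists^{\mathcal N}$: $\exists^{\mathcal N}\bigcup_n A_n=\bigcup_n\exists^{\mathcal N}A_n$ with each $\exists^{\mathcal N}A_n\in\bfSig^1_{<\lambda}$ by (3)); the failure of closure under $\delta$ follows from (6), because one checks $(\bfSig^1_{<\lambda})_\sigma\subseteq\bfDelta^1_\lambda$, so if it were $\delta$-closed it would be closed under $\delta$, $\sigma$ and $\exists^{\mathcal N}$, hence contain $(\bfSig^1_{<\lambda})_{\delta\exists}=\bfSig^1_\lambda$, contradicting $\bfSig^1_\lambda(X)\neq\bfPi^1_\lambda(X)$ for an uncountable quasi-Polish $X$.

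\medskip
\noindent\emph{Item (6), and the main obstacle.}
The substantial part is (6), and the expected main obstacle is the construction of universal sets at limit levels. By transfinite recursion on $\alpha<\omega_1$ I would build sets $U^\alpha\in\bfSig^1_\alpha(\calN\times\calN)$ universal for $\bfSig^1_\alpha(\calN)$, i.e.\ $\bfSig^1_\alpha(\calN)=\{(U^\alpha)_p\mid p\in\calN\}$ with $(U^\alpha)_p:=\{x\mid(p,x)\in U^\alpha\}$: for $\alpha=0$ start from a universal $\bfPi^0_2$-set and Proposition~\ref{p:univ}; at a successor apply $c$ and then $\exists^{\mathcal N}$ (using $\calN^2\cong\calN$); at a limit $\lambda$ fix a strictly increasing cofinal sequence $\lambda_n\nearrow\lambda$, combine the $U^{\lambda_n}$ into a set universal for $(\bfSig^1_{<\lambda})_\delta$ by coding the $n$-th conjunct on the $n$-th section of the parameter, and then project. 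The anti-diagonal $D_\alpha:=\{x\in\calN\mid(x,x)\notin U^\alpha\}$ then lies in $\bfPi^1_\alpha(\calN)$ but, by Cantor's argument, not in $\bfSig^1_\alpha(\calN)$, so after complementation $\bfSig^1_\alpha(\calN)\not\subseteq\bfPi^1_\alpha(\calN)$. Finally one transfers this to an arbitrary uncountable Polish (or, by \cite{br}, quasi-Polish) space $X$: such an $X$ contains a subspace homeomorphic to $2^\omega$, which in turn contains a $\bfPi^0_2$-copy of $\calN$, and a non-$\bfPi^1_\alpha$ subset of $\calN$ then transports to a non-$\bfPi^1_\alpha$ subset of $X$ because intersecting with, and extending from, a low-complexity subspace does not leave $\bfSig^1_\alpha$ (by (1) and (3)). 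The delicate points here are the classical ones --- ensuring that the assembled universal set at a limit really has level $\lambda$ (the cofinal sequence must be threaded correctly through the coding) and that the copy of $2^\omega$ inside $X$ interacts correctly with the pointclass arithmetic --- and, as the statement of the proposition signals, for these I would simply refer to \cite{ke95,ke83} rather than reproduce the details.
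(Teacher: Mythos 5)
Your proposal matches the paper's approach: the paper gives no proof of Proposition~\ref{p:hpro1} beyond the remark that the claims are ``proved just in the same way as for the classical projective hierarchy in Polish spaces'' \cite{ke95}, and your plan is exactly that classical argument (simultaneous transfinite induction with quantifier-coding identities for (1), (3), (4); De Morgan for (2); formal checks plus an appeal to (6) for (5); universal sets, diagonalization and a Cantor-set transfer for (6)). One aside in your base-level paragraph should be dropped: $\bfSig^1_0=\bfSig^0_2$ is \emph{not} closed under countable intersections (that would force $\bfPi^0_2\subseteq\bfSig^0_2$); what your induction actually needs, and correctly invokes later, is that $\bfPi^1_0=\bfPi^0_2$ is closed under countable intersections and under clopen-indexed countable unions.
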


%
%

\section{Hyperprojective hierarchy of qcb$_0$-spaces}\label{sec:hph:qcb}

Here we discuss the hyperprojective hierarchy of qcb$_0$-spaces.
In particular we extend all results from \cite{scs13} concerning the Luzin's projective hierarchy of qcb$_0$-spaces.

For any representation $\delta$ of a space $X$, let
$\mathit{EQ}(\delta):=\{( p,q) \in \calN^2 \mid p,q\in \dom(\delta)\wedge\delta(p)=\delta(q)\}$.
Let $\bfGamma$ be a family of pointclasses. 
A topological space $X$ is called \emph{$\bfGamma$-representable}, if $X$ has an
admissible representation $\delta$ with $\mathit{EQ}(\delta)\in\bfGamma(\calN^2)$. 
The class of all $\bfGamma$-representable spaces is denoted $\QCBZ(\bfGamma)$. 
This notion from \cite{scs13} enables to transfer hierarchies of sets to the corresponding hierarchies 
of qcb$_0$-spaces. In particular, we arrive at the following definition.

\begin{definition}\label{def:hierqcb}
  The sequence
  $\{\QCBZ(\bfSig^1_\alpha)\}_{\alpha<\omega_1}$ is called the \emph{hyperprojective hierarchy} of qcb$_0$-spaces.
  By \emph{levels} of this hierarchy we mean the classes $\QCBZ(\bfSig^1_\alpha)$ as well
  as the classes $\QCBZ(\bfPi^1_\alpha)$ and $\QCBZ(\bfDelta^1_\alpha)$.
\end{definition}

The next assertion summarizes extensions of the corresponding results from \cite{scs13} about the Luzin hierarchy. 
They are proved just in the same way as in \cite{scs13}.

\begin{proposition}\label{p:equiv:Gamma-representable}
\begin{enumerate}
\item  
 Let $\bfGamma\in\{\bfSig^1_\alpha, \bfPi^1_\alpha \mid 0\leq\alpha<\omega_1\}$
 and let $X$ be a Hausdorff space.
 Then $X$ is $\bfGamma$-representable,
 if $X$ has an admissible representation $\delta$ with $\dom(\delta)\in \bfGamma(\calN)$.
\item
 Let $\bfGamma\in\{\bfSig^1_\alpha, \bfPi^1_\alpha\mid 0\leq\alpha<\omega_1\}$.
 Then any continuous retract of a
 $\bfGamma$-representable space is a $\bfGamma$-representable space.
\item
 The hyperprojective hierarchy  of qcb$_0$-spaces does not collapse, more precisely,
 $\QCBZ(\bfSig^1_\alpha)\not\subseteq \QCBZ(\bfPi^1_\alpha)$ for each  $\alpha<\omega_1$.
\item
 For any 
 $\bfGamma \in \{\bfSig^1_\alpha, \bfPi^1_\alpha \mid 1\leq\alpha<\omega_1\}$, we have
 $\QCBZ(\bfGamma) \cap \CBZ = \CBZ(\bfGamma)$,
 where $\CBZ(\bfGamma)$ is the class of spaces homeomorphic to a $\bfGamma$-subspace of $\Pomega$.
\end{enumerate}
\end{proposition}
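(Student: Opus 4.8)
The plan is to transfer, essentially line by line, the proofs of the corresponding statements of \cite{scs13} for the Luzin hierarchy, replacing the appeals to the finite-level facts by appeals to Proposition~\ref{p:hpro1}, all of whose items now hold at arbitrary countable ordinals. The recurring tool is the following observation on changes of admissible representation. If $\delta_1,\delta_2$ are admissible representations of the same space $X$, then each is continuously reducible to the other, so there is a continuous $f\colon\dom(\delta_2)\to\dom(\delta_1)$ with $\delta_2=\delta_1\circ f$ and $\dom(\delta_2)=\{p\mid(p,p)\in\EQ(\delta_2)\}$; hence $\EQ(\delta_2)=(f\times f)^{-1}(\EQ(\delta_1))$ is a continuous preimage of $\EQ(\delta_1)$ under a map which is total on the subspace $\dom(\delta_2)\times\dom(\delta_2)$ of $\calN^2$. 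Using Proposition~\ref{p:univ} one may moreover extend $f$ to some $u_p$ with $\dom(u_p)\in\bfPi^0_2(\calN)$ and replace $\delta_1$ by the (again admissible, since it extends $\delta_2$) representation obtained by restricting $\delta_1\circ u_p$ to $u_p^{-1}(\dom\delta_1)$, so that the reductions may be taken to have $\bfPi^0_2$-domains. Now for every $\alpha<\omega_1$ the pointclasses $\bfSig^1_\alpha$ and $\bfPi^1_\alpha$ are closed under finite intersections and relativize to subspaces (i.e.\ $\bfGamma(A)=\{B\cap A\mid B\in\bfGamma(Z)\}$ for $A\subseteq Z$), while for $\alpha\ge1$ they are in addition closed under continuous preimages of functions with a $\bfPi^0_2$-domain and under the Kantorovich--Livenson operations of Proposition~\ref{p:hpro1}(3)--(4); this is exactly what is needed to move the complexity of $\EQ$ between equivalent representations.

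We treat (1) and (2) together. For (1), if $\dom(\delta)\in\bfGamma(\calN)$ then, $X$ being Hausdorff, its diagonal $\Delta_X$ is closed in the Tychonoff product $X\times X$, and $\delta\times\delta\colon\dom(\delta)\times\dom(\delta)\to X\times X$ is continuous, so $\EQ(\delta)=(\delta\times\delta)^{-1}(\Delta_X)=C\cap(\dom(\delta)\times\dom(\delta))$ for some closed $C\subseteq\calN^2$; closed subsets of $\calN^2$ lie in $\bfSig^1_0(\calN^2)\cap\bfPi^1_0(\calN^2)$, hence in $\bfGamma(\calN^2)$ by Proposition~\ref{p:hpro1}(1), and $\dom(\delta)\times\dom(\delta)\in\bfGamma(\calN^2)$ by closure under finite intersections, so $\EQ(\delta)\in\bfGamma(\calN^2)$. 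For (2), given a section--retraction pair $(s,r)$ for $Y$ over a $\bfGamma$-representable $X$ with admissible $\delta_X$, $\EQ(\delta_X)\in\bfGamma(\calN^2)$, the map $\delta_Y:=r\circ\delta_X$ is an admissible representation of $Y$ (any continuous $\nu\colon Z\to Y$ gives a continuous $s\nu$, which factors through $\delta_X$ via some $g$, and then $\delta_Y g=rs\nu=\nu$). Choosing, by admissibility of $\delta_X$, a continuous $h$ on $\dom(\delta_X)$ with $\delta_X\circ h=sr\circ\delta_X$, injectivity of $s$ yields $\EQ(\delta_Y)=(h\times h)^{-1}(\EQ(\delta_X))$, a continuous preimage total on $\dom(\delta_X)\times\dom(\delta_X)\in\bfGamma(\calN^2)$, so by relativization and intersection-closure $\EQ(\delta_Y)\in\bfGamma(\calN^2)$.

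For (3), fix $\alpha<\omega_1$ and pick $A\subseteq\calN$ with $A\in\bfSig^1_\alpha(\calN)\setminus\bfPi^1_\alpha(\calN)$ (for $\alpha\ge1$ by Proposition~\ref{p:hpro1}(6) for the uncountable Polish space $\calN$, for $\alpha=0$ by non-collapse of the Borel hierarchy in $\calN$, recalling $\bfSig^1_0=\bfSig^0_2$). View $A$ as a subspace of $\calN$; it is a cb$_0$-space, its inclusion representation $\mathrm{id}_A$ is admissible, and $\EQ(\mathrm{id}_A)=\Delta_\calN\cap(A\times A)\in\bfSig^1_\alpha(\calN^2)$, so $A\in\QCBZ(\bfSig^1_\alpha)$. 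If $A$ were also in $\QCBZ(\bfPi^1_\alpha)$, pick admissible $\delta$ with $\EQ(\delta)\in\bfPi^1_\alpha(\calN^2)$; then $\dom(\delta)=\{p\mid(p,p)\in\EQ(\delta)\}\in\bfPi^1_\alpha(\calN)$, and admissibility of both $\mathrm{id}_A$ and $\delta$ produces a section--retraction pair exhibiting $A$ as a continuous retract, hence a closed subspace, of $\dom(\delta)$. Thus $A$ is homeomorphic to the intersection of a $\bfPi^1_\alpha$-set and a closed set in $\calN$, i.e.\ to a $\bfPi^1_\alpha$-subspace of $\calN$; by Lavrentiev's theorem together with relativization of $\bfPi^1_\alpha$ (for $\alpha=0$, Alexandrov's theorem suffices) this forces $A\in\bfPi^1_\alpha(\calN)$, a contradiction. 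Hence $\QCBZ(\bfSig^1_\alpha)\not\subseteq\QCBZ(\bfPi^1_\alpha)$ for each $\alpha<\omega_1$.

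For (4), the inclusion $\CBZ(\bfGamma)\subseteq\QCBZ(\bfGamma)\cap\CBZ$ follows by restricting the standard admissible representation $\delta_{\Pomega}$ of $\Pomega$ (total, continuous, with $\EQ(\delta_{\Pomega})\in\bfPi^0_2(\calN^2)$): if $X\cong S$ with $S\in\bfGamma(\Pomega)$ then $\delta_S$, the restriction of $\delta_{\Pomega}$ to $\delta_{\Pomega}^{-1}(S)$, is admissible and $\EQ(\delta_S)=\EQ(\delta_{\Pomega})\cap(\delta_{\Pomega}^{-1}(S))^2\in\bfGamma(\calN^2)$ since $\bfPi^0_2\subseteq\bfGamma$ for $\alpha\ge1$. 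Conversely, let $X\in\QCBZ(\bfGamma)\cap\CBZ$; fix a countable basis of $X$, the associated standard embedding $\iota\colon X\hookrightarrow\Pomega$ (Proposition~\ref{inj2}), and the associated standard representation $\delta_{\mathrm{st}}$, for which $\dom(\delta_{\mathrm{st}})=\delta_{\Pomega}^{-1}(\iota(X))$ is a union of fibres of $\delta_{\Pomega}$, whence
\[
   \iota(X)=\{A\mid\exists t\in\dom(\delta_{\mathrm{st}}).\ \delta_{\Pomega}(t)=A\}=\{A\mid\forall t\,(\delta_{\Pomega}(t)=A\to t\in\dom(\delta_{\mathrm{st}}))\}.
\]
Since the graph of $\delta_{\Pomega}$ is $\bfPi^0_2$ in $\calN\times\Pomega$, the first description gives $\iota(X)\in\bfSig^1_\alpha(\Pomega)$ when $\dom(\delta_{\mathrm{st}})\in\bfSig^1_\alpha(\calN)$ (closure of $\bfSig^1_\alpha$ under $\exists^{\calN}$), and the second gives $\iota(X)\in\bfPi^1_\alpha(\Pomega)$ when $\dom(\delta_{\mathrm{st}})\in\bfPi^1_\alpha(\calN)$ (closure of $\bfPi^1_\alpha$ under $\forall^{\calN}$). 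Finally $\dom(\delta_{\mathrm{st}})\in\bfGamma(\calN)$ because $X\in\QCBZ(\bfGamma)$ and membership in $\QCBZ(\bfGamma)$ does not depend on the chosen admissible representation; this last point is the delicate one, especially for $\bfGamma=\bfPi^1_\alpha$ where no naive projection is available, and it is handled exactly as in \cite{scs13}, via the universal function of Proposition~\ref{p:univ} and the $\forall^{\calN}$-closure of $\bfPi^1_\alpha$. The only work beyond \cite{scs13} is to verify that the closure and relativization properties invoked above persist at limit ordinals, which they do by construction of the hyperprojective hierarchy.
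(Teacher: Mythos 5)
Your reconstructions of items (1)--(3) are essentially correct and follow the route the paper intends (the paper itself only says the proofs are as in \cite{scs13}, with Proposition~\ref{p:hpro1} supplying the closure properties at all countable levels); in (3) your Lavrentiev/Alexandrov detour is a legitimate variant of the usual argument that extends the realizer to a $\bfPi^0_2$-domain function and pulls back $\dom(\delta)$ along it. The forward inclusion $\CBZ(\bfGamma)\subseteq\QCBZ(\bfGamma)\cap\CBZ$ in (4) is also fine.

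The genuine gap is the crux of the converse inclusion in (4): the claim $\dom(\delta_{\mathrm{st}})\in\bfGamma(\calN)$. You justify it by saying that ``membership in $\QCBZ(\bfGamma)$ does not depend on the chosen admissible representation'' and defer the rest to \cite{scs13}. That principle does not do the work: $\bfGamma$-representability is an existential statement about \emph{some} admissible representation and says nothing about the complexity of the particular representation $\delta_{\mathrm{st}}$; indeed every $\bfGamma$-representable space also has admissible representations with arbitrarily complicated domain and $\EQ$ (e.g.\ $\delta'(\langle p,q\rangle):=\delta(p)$ on $\{\langle p,q\rangle \mid p\in\dom(\delta),\,q\in C\}$ is admissible for any nonempty $C\subseteq\calN$). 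So the needed fact is special to $\delta_{\mathrm{st}}$ and must be proved, not inferred from representation-invariance. The missing argument runs as follows: let $\delta$ be admissible for $X$ with $\EQ(\delta)\in\bfGamma(\calN^2)$, so $\dom(\delta)\in\bfGamma(\calN)$ via the diagonal preimage. Since $\delta_{\mathrm{st}}$ is admissible, there is a continuous $f\colon\dom(\delta_{\mathrm{st}})\to\dom(\delta)$ with $\delta\circ f=\delta_{\mathrm{st}}$; extend $f$ by Proposition~\ref{p:univ} to $F$ with $\dom(F)\in\bfPi^0_2(\calN)$. Because $\dom(\delta_{\mathrm{st}})=\delta_{\Pomega}^{-1}(\iota(X))$ is saturated, one has $t\in\dom(\delta_{\mathrm{st}})$ iff $t\in\dom(F)$, $F(t)\in\dom(\delta)$ and $\iota\delta F(t)=\delta_{\Pomega}(t)$ (the backward direction uses exactly that $\iota\delta F(t)=\delta_{\Pomega}(t)$ forces $\delta_{\Pomega}(t)\in\iota(X)$). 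The last condition is $\bfPi^0_2$ relative to the set $\{t\in\dom(F)\mid F(t)\in\dom(\delta)\}$, which itself lies in $\bfGamma(\calN)$ by closure under preimages of $\bfPi^0_2$-domain continuous functions; since $\bfPi^0_2\subseteq\bfGamma$ and $\bfGamma$ is closed under finite intersections for $\alpha\geq1$ (Proposition~\ref{p:hpro1}), this gives $\dom(\delta_{\mathrm{st}})\in\bfGamma(\calN)$ for both $\bfGamma=\bfSig^1_\alpha$ and $\bfGamma=\bfPi^1_\alpha$. Only with this step in place do your two displayed descriptions of $\iota(X)$ yield $\iota(X)\in\bfSig^1_\alpha(\Pomega)$, respectively $\iota(X)\in\bfPi^1_\alpha(\Pomega)$.
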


Now we establish some closure properties of the hyperprojective hierarchy of qcb$_0$-spaces.
We begin with exponentiation in $\QCBZ$.
The next proposition extends and improves Theorem 7.1 in \cite{scs13}.

\begin{proposition}\label{p:exp} 
 Let $1\leq \alpha< \omega_1$,
 $X\in \QCBZ(\bfSig^1_\alpha)$ and $Y\in \QCBZ(\bfPi^1_\alpha)$.
 Then $Y^X\in \QCBZ(\bfPi^1_{\alpha})$.
\end{proposition}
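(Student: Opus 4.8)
The plan is to construct an admissible representation of $Y^X$ whose equality set lies in $\bfPi^1_\alpha(\calN^2)$, starting from the canonical representation $[\gamma^\delta]$ of Proposition~\ref{p:fspace:rep}. So fix admissible representations $\delta$ of $X$ with $\EQ(\delta)\in\bfSig^1_\alpha(\calN^2)$ and $\gamma$ of $Y$ with $\EQ(\gamma)\in\bfPi^1_\alpha(\calN^2)$. By Proposition~\ref{p:hpro1}(1) we may additionally arrange, shrinking if necessary, that $\dom(\delta)$ and $\dom(\gamma)$ are $\bfPi^0_2$ (the standard trick: precompose with the universal $u$ from Proposition~\ref{p:univ}, whose domain is $\bfPi^0_2$; this changes neither admissibility nor the complexity class of $\EQ$, using closure of $\bfSig^1_\alpha$ and $\bfPi^1_\alpha$ under continuous preimages of $\bfPi^0_2$-domain maps from Proposition~\ref{p:hpro1}(3)--(4)). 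Recall $[\gamma^\delta](p)=f$ iff $f\delta=\gamma u_p|_{\dom(\delta)}$; its domain $D$ is the set of $p$ such that $u_p$ maps $\dom(\delta)$ into $\dom(\gamma)$ and respects $\EQ(\delta)$, i.e. $\forall q,q'\in\dom(\delta)\,(q\,\EQ(\delta)\,q'\rightarrow u_p q\,\EQ(\gamma)\,u_p q')$ together with $\forall q\in\dom(\delta)\,(u_p q\in\dom(\gamma))$.

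The heart of the matter is then a complexity computation for two sets. First, $D$ itself: it is defined by a $\forall^\calN$ over a matrix built from $\EQ(\delta)$ (negated — so a $\bfPi^1_\alpha$ condition, since $\bfPi^1_\alpha=(\bfSig^1_\alpha)_c$), $\EQ(\gamma)\in\bfPi^1_\alpha$, $\dom(\gamma)\in\bfPi^0_2$, and the $\bfPi^0_2$-graph of $u$; using that $\bfPi^1_\alpha$ is closed under countable intersection and under $\forall^\calN$ (Proposition~\ref{p:hpro1}(4)), and that $\bfPi^0_2\subseteq\bfDelta^1_\alpha$ for $\alpha\geq 1$ (Proposition~\ref{p:hpro1}(1)), one gets $D\in\bfPi^1_\alpha(\calN)$. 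Second, and similarly, the equality set: $[\gamma^\delta](p)=[\gamma^\delta](p')$ holds (for $p,p'\in D$) iff $\forall q\in\dom(\delta)\,(u_p q\,\EQ(\gamma)\,u_{p'} q)$, again a $\forall^\calN$ over a matrix combining the $\bfPi^0_2$-condition $q\in\dom(\delta)$, the $\bfPi^0_2$-graph of $u$, and $\EQ(\gamma)\in\bfPi^1_\alpha$; so this ``graph-of-the-realized-function-agrees'' set is in $\bfPi^1_\alpha(\calN^2)$, and intersecting with $D\times D\in\bfPi^1_\alpha$ keeps it there. Hence $\EQ([\gamma^\delta])\in\bfPi^1_\alpha(\calN^2)$, and since $[\gamma^\delta]$ is admissible for $Y^X$ (Proposition~\ref{p:fspace:rep}), we conclude $Y^X\in\QCBZ(\bfPi^1_\alpha)$.

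I expect the main obstacle to be bookkeeping at the \emph{limit} levels $\alpha=\lambda$. There $\bfSig^1_\lambda=(\bfSig^1_{<\lambda})_{\delta\exists}$, so a single set in $\bfSig^1_\lambda$ is a projection of a countable intersection of sets from strictly lower levels, and dually for $\bfPi^1_\lambda$; one must check that the quantifier $\forall^\calN$ appearing in the definition of $D$ and of $\EQ([\gamma^\delta])$ can be pushed through the countable-intersection-of-projections structure without escaping $\bfPi^1_\lambda$. This is exactly what Proposition~\ref{p:hpro1}(4) asserts ($\bfPi^1_\lambda$ closed under $\forall^\calN$ and countable intersections), but combining it with the $\bfPi^0_2$ side-conditions requires care: one should first absorb each $\bfPi^0_2$ condition into the relevant $\bfPi^1_\alpha$ class via part (1), then apply the closure properties. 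The successor and finite cases reduce to the already-known projective-level statement (Theorem~7.1 of \cite{scs13}), so nothing new is needed there; the contribution is to verify that the same computation goes through uniformly for all countable $\alpha$ using the abstract closure properties of the families of pointclasses $\bfSig^1_\alpha,\bfPi^1_\alpha$.
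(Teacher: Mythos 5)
Your overall strategy is the paper's: take the canonical representation $[\gamma^\delta]$ of Proposition~\ref{p:fspace:rep}, express membership of $(p_1,p_2)$ in $\EQ([\gamma^\delta])$ by a universal quantification over $\calN$ whose matrix is built from $\EQ(\gamma)$ pulled back through the universal function $u$ (partial continuous with $\bfPi^0_2$-domain) together with the \emph{negation} of $\EQ(\delta)$, and then invoke the closure properties of Proposition~\ref{p:hpro1} uniformly in $\alpha$ (which, as you note, is all that is needed at limit levels). The paper does this in one stroke: it defines a single set $M\subseteq\calN^4$ of $\bfPi^1_\alpha$-conditions on $\EQ(\gamma)$ which simultaneously encodes well-definedness of $[\gamma^\delta](p_1)$, of $[\gamma^\delta](p_2)$, and their agreement, so that $(p_1,p_2)\in\EQ(\gamma^\delta)$ iff for all $(q_1,q_2)$ either $(q_1,q_2)\notin\EQ(\delta)$ or $(p_1,p_2,q_1,q_2)\in M$. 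Your split into ``domain of $[\gamma^\delta]$'' plus ``agreement set'' is the same computation organized in two steps.

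There is, however, one step that is simply false as stated: you cannot ``arrange, shrinking if necessary, that $\dom(\delta)$ and $\dom(\gamma)$ are $\bfPi^0_2$''. Precomposition with $u$ does not achieve this — the domain of $\delta\circ u$ is $u^{-1}(\dom(\delta))$ intersected with $\dom(u)$, which is in general no simpler than $\dom(\delta)$ — and no construction can: by the proof of Theorem~\ref{th:luzrep}, $\Nk{\alpha+1}$ admits no continuous representation whose domain is even in $\bfSig^1_\alpha(\calN)$, hence in particular none with a $\bfPi^0_2$ domain; and $\Nk{2}$ is a legitimate instance of $Y$ here (and of $X$ once $\alpha\geq 2$). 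Fortunately the false normalization is unnecessary, because the two places where you use it have the right polarity anyway: the condition $u_pq\in\dom(\gamma)$ occurs positively, and $\dom(\gamma)\in\bfPi^1_\alpha$ since it is the preimage of $\EQ(\gamma)$ under $q\mapsto(q,q)$ (indeed it is already implied by demanding $(u_pq,u_pq)\in\EQ(\gamma)$); while $q\in\dom(\delta)$ occurs only negatively, where $\dom(\delta)\in\bfSig^1_\alpha$ suffices because its complement is then $\bfPi^1_\alpha$. Replacing your $\bfPi^0_2$ assumptions by these observations — which is exactly how the paper's matrix $M$ is arranged, using only $\EQ(\gamma)$-conditions and the complement of $\EQ(\delta)$ — repairs the argument. (A minor further quibble: the infinite successor levels are not covered by Theorem~7.1 of \cite{scs13}, but your uniform computation handles them, so nothing is lost.)
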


\begin{proof}
 Let $\delta$ be a quotient representation of a qcb-space $X$ with 
 $\EQ(\delta) \in \bfSig^1_\alpha(\calN^2)$, and let $\gamma$ be an admissible representation 
 of a qcb$_0$-spce $Y$ with $\EQ(\gamma) \in \bfPi^1_\alpha(\calN^2)$.
 By Proposition~4.2.5 in \cite{sch:phd}, $\gamma^\delta$ is an admissible representation of $Y^X$.
 Using the universal function $u$ from Section~\ref{sub:admiss},
 we define the subset $M \subseteq \calN^4$ by
 \begin{align*}
  M:=\Big\{ & (p_1,p_2,q_1,q_2) \in \calN^4 \,\Big| \, 
     (p_1,q_1),(p_1,q_2),(p_2,q_1),(p_2,q_2) \in \dom(u) \;\;\text{and}
   \\
   & \hspace*{4em}
    (u(p_1,q_1),u(p_1,q_2)), (u(p_1,q_1),u(p_2,q_1)), (u(p_2,q_1),u(p_2,q_2)) \in \EQ(\gamma)
  \Big\}. 
 \end{align*}
 By the definition of $\gamma^\delta$ we have for all $p_1,p_2 \in \calN$ 
 \begin{align*}
  (p_1,p_2) \in \EQ(\gamma^\delta)
   &\;\Longleftrightarrow\; 
   \forall (q_1,q_2) \in \EQ(\delta).\, (p_1,p_2,q_1,q_2) \in M 
 \\
   &\;\Longleftrightarrow\; 
   \forall (q_1,q_2) \in \calN^2.\big( (q_1,q_2) \notin \EQ(\delta) \vee (p_1,p_2,q_1,q_2) \in M \big). 
 \end{align*}
 Since $\bfPi^1_\alpha$ is closed under forming preimages of partial continuous functions
 with a $\bfPi^0_2$-domain, $M$ is in $\bfPi ^1_\alpha(\calN^4)$.
 Thus $\EQ(\gamma^\delta) \in \bfPi^1_\alpha(\calN^2)$, because 
 $\bfPi^1_\alpha=(\bfPi^1_\alpha)_\forall=(\bfPi^1_\alpha)_\sigma$ (see Proposition~\ref{p:hpro1}).
 Hence $Y^X\in \QCBZ(\bfPi^1_{\alpha})$.
\end{proof}

The next proposition provides some complexity bounds on products and co-products
formed in $\QCBZ$.

\begin{proposition}\label{p:prod:coprod}
\begin{enumerate}
\item
  Any non-zero level of the hyperprojective hierarchy of qcb$_0$-spaces 
  is closed under countable $\QCBZ$-products and coproducts.
\item
  Let $\lambda$ be a countable limit ordinal. Let $\{X_k\}_{k\in \omega}$ be a sequence of qcb$_0$-spaces 
  such that $X_k\in\QCBZ(\bfSig^1_{<\lambda})$ for all $k$.
 Then the $\QCBZ$-product $\prod_{k\in \omega}X_k$ is in $\QCBZ((\bfSig^1_{<\lambda})_\delta)$
 and
 the co-product $\bigoplus_{k\in \omega}X_k$ is in $\QCBZ((\bfSig^1_{<\lambda})_\sigma)$. 
\end{enumerate}
\end{proposition}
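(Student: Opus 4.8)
The plan is to exhibit, for each construction, an explicit admissible representation of the resulting space and to bound the complexity of its associated equivalence relation using the closure properties collected in Proposition~\ref{p:hpro1}.

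For the products I would use the canonical representation $\prod_k\delta_k$ of $\prod_k X_k$ recalled in Section~\ref{sub:admiss}, built from admissible representations $\delta_k$ of the $X_k$ of the required complexity. By its definition, $(\langle p_0,p_1,\dotsc\rangle,\langle q_0,q_1,\dotsc\rangle)\in\EQ(\prod_k\delta_k)$ holds precisely when $(p_k,q_k)\in\EQ(\delta_k)$ for every $k$, so $\EQ(\prod_k\delta_k)=\bigcap_k\pi_k^{-1}(\EQ(\delta_k))$, where $\pi_k\colon\calN^2\to\calN^2$ is the continuous map $(\langle p_0,\dotsc\rangle,\langle q_0,\dotsc\rangle)\mapsto(p_k,q_k)$. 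In the situation of item (1), if every $\EQ(\delta_k)$ lies in a fixed non-zero $\bfSig^1_\alpha$ or $\bfPi^1_\alpha$, then by Proposition~\ref{p:hpro1} that class is closed under continuous preimages and under countable intersections, so $\EQ(\prod_k\delta_k)$ stays in it; the $\bfDelta^1_\alpha$-case is obtained by intersecting the $\bfSig$- and $\bfPi$-conclusions. In the situation of item (2), each $\pi_k^{-1}(\EQ(\delta_k))$ lies in $\bfSig^1_{<\lambda}(\calN^2)$, whence the countable intersection lies in $(\bfSig^1_{<\lambda})_\delta(\calN^2)$ by definition.

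For the coproducts I would use the representation $\delta_\oplus$ of $\bigoplus_k X_k$ with $\delta_\oplus(p)=\big(p(0),\delta_{p(0)}(p^{+})\big)$, where $p^{+}:=\lambda n.\,p(n+1)$; it is admissible because on each clopen set $\{p\mid p(0)=k\}$ it amounts to $\delta_k$, and any partial continuous map into $\bigoplus_k X_k$ is split by its summands into clopen pieces on which one may invoke admissibility of the individual $\delta_k$. Then $(p,q)\in\EQ(\delta_\oplus)$ iff $p(0)=q(0)$ and $(p^{+},q^{+})\in\EQ(\delta_{p(0)})$, so $\EQ(\delta_\oplus)=\bigcup_k\big(C_k\cap s^{-1}(\EQ(\delta_k))\big)$ with $C_k:=\{(p,q)\mid p(0)=q(0)=k\}$ clopen and $s\colon\calN^2\to\calN^2,\ s(p,q):=(p^{+},q^{+})$ continuous. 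For item (1): a non-zero $\bfSig^1_\alpha$ or $\bfPi^1_\alpha$ contains every clopen subset of $\calN^2$ (since it contains $\bfSig^0_2$, resp.\ $\bfPi^0_2$) and is closed under finite intersections and countable unions, so $\EQ(\delta_\oplus)$ returns to that class, and again the $\bfDelta^1_\alpha$-case follows by intersection. For item (2): each $C_k\cap s^{-1}(\EQ(\delta_k))$ lies in $\bfSig^1_{\beta_k}(\calN^2)$ for some $\beta_k<\lambda$, hence in $\bfSig^1_{<\lambda}(\calN^2)$, and so the countable union lies in $(\bfSig^1_{<\lambda})_\sigma(\calN^2)$.

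The one step that needs genuine care, rather than bookkeeping with Proposition~\ref{p:hpro1}, is verifying admissibility of $\delta_\oplus$ (the product representation being already recorded in Section~\ref{sub:admiss}): one must check that it is a continuous representation and that every partial continuous $\nu\colon\subseteq\calN\to\bigoplus_k X_k$ factors as $\nu=\delta_\oplus\circ g$ for some partial continuous $g$. This follows by decomposing $\dom(\nu)$ into the clopen preimages of the summands, reducing $\nu$ on each piece through the admissible $\delta_k$, and gluing the resulting maps; I expect this to be the main, though still routine, obstacle.
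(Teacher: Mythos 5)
Your proposal is correct and follows essentially the same route as the paper: take the canonical admissible representations of the product and the coproduct, express their $\EQ$-sets as $\bigcap_k(\pi_k\times\pi_k)^{-1}[\EQ(\delta_k)]$ and $\bigcup_k\bigl(C_k\cap(t\times t)^{-1}[\EQ(\delta_k)]\bigr)$ respectively, and invoke the closure properties of Proposition~\ref{p:hpro1}. The only cosmetic difference is that the paper simply cites \cite{sch:phd} for the admissibility of both representations, whereas you sketch the (routine) gluing argument for the coproduct yourself.
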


\begin{proof} 
  Let $\gamma_k \colon D_k\to X_k$ be an admissible representation of $X_k$ for every $k \in \omega$.
 By \cite{sch:phd}, admissible representations
 $\gamma=\prod_k\gamma_k \colon D \to \prod_{k \in \omega} X_k$ and $\phi \colon E \to \bigoplus_{k \in \omega} X_k$
 for the product and the co-product can be defined by
 \[
   \gamma(p):=\big( \gamma_0(\pi_0(p)), \gamma_1(\pi_1(p)), \gamma_2(\pi_2(p)),\dotsc \big)
   \quad\text{and}\quad
   \phi(q) := \big( q(0),\gamma_{q(0)}(t(q)) \big)
 \]
 for all $p \in D:=\bigcap_{k \in \omega} \{ p \in \calN \,|\, \pi_k(p) \in D_k\}$
 and $q \in E:= \bigcup_{k \in \omega} \{q \in \calN \,|\, q(0)=k,\, t(q) \in D_k\}$.
 Here $\pi_k$ denotes the continuous $k$-th projection of the inverse of the
 homeomorphism $\langle . \rangle: \calN^\omega \to \calN$ from Section~\ref{sub:topspaces},
 and $t(q)$ denotes the sequence $q(1)q(2)q(3)\dotsc \in \calN$.
 We obtain:
 \begin{align*}
   \EQ(\gamma) &=\bigcap_{k \in \omega} (\pi_k \times \pi_k)^{-1}[\EQ(\gamma_k)] \,,
  \\
   \EQ(\phi) &=\bigcup_{k \in \omega}
    \big( (t \times t)^{-1} [\EQ(\gamma_k)] \cap \{ (p,q) \in \calN^2 \,|\, p(0)=q(0)=k \} \big) \,.
 \end{align*}
 The claims follow from these equations and Proposition~\ref{p:hpro1}.
\end{proof}

 For equalizers we have the following result.

\begin{proposition}\label{l:equalizers}  
 Let $\alpha$ be a non-zero countable ordinal.
 Then $\QCBZ(\bfSig^1_\alpha)$ and $\QCBZ(\bfPi^1_\alpha)$ are closed under forming equalizers.
\end{proposition}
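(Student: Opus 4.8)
The plan is to mimic the proof of Proposition~\ref{p:prod:coprod} and compute $\EQ$ of the canonical representation of the equalizer. Recall that in $\QCBZ$ (as in $\mathsf{Seq}$) the equalizer of two continuous maps $f,g\colon X\to Y$ is the subspace $E:=\{x\in X\mid f(x)=g(x)\}$ of $X$, equipped with the sequentialization of the subspace topology; and if $\delta\colon D\to X$ is an admissible representation of $X$, then the corestriction $\delta_E\colon \delta^{-1}(E)\to E$ is an admissible representation of $E$ (this is the standard fact that subspace representations are admissible, cf.\ \cite{sch:phd}). So the whole task reduces to controlling the complexity of $\EQ(\delta_E)$, where $\delta_E$ is obtained from a ``good'' admissible representation $\delta$ of $X$.

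First I would fix admissible representations $\delta\colon D\to X$ with $\EQ(\delta)\in\bfGamma(\calN^2)$ and $\gamma\colon C\to Y$ with $\EQ(\gamma)\in\bfGamma(\calN^2)$, where $\bfGamma$ is $\bfSig^1_\alpha$ or $\bfPi^1_\alpha$. Since $f,g$ are continuous, by admissibility of $\gamma$ there are partial continuous functions $F,G$ on $\calN$ with $\gamma F=f\delta$ and $\gamma G=g\delta$ on $D$; using the universal function $u$ from Proposition~\ref{p:univ} I may take $F=u_a|_D$, $G=u_b|_D$ for suitable $a,b\in\calN$, so that $F,G$ are partial continuous with $\bfPi^0_2$ domains. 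Then for $p\in D$ we have $\delta(p)\in E$ iff $(F(p),G(p))\in\EQ(\gamma)$, so
\[
  \delta^{-1}(E)=\{p\in D\mid (u_a(p),u_b(p))\in\EQ(\gamma)\},
\]
which is the preimage of $\EQ(\gamma)\in\bfGamma(\calN^2)$ under a partial continuous function with $\bfPi^0_2$-domain, hence in $\bfGamma(\calN)$ by Proposition~\ref{p:hpro1}(3)--(4). Now $\EQ(\delta_E)=\EQ(\delta)\cap(\delta^{-1}(E)\times\delta^{-1}(E))$. Since $\bfGamma(\calN^2)$ is closed under finite (indeed countable) intersections and under continuous preimages, and $\delta^{-1}(E)\times\delta^{-1}(E)\in\bfGamma(\calN^2)$ (preimage of $\delta^{-1}(E)$ under a projection), we get $\EQ(\delta_E)\in\bfGamma(\calN^2)$. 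Hence $E\in\QCBZ(\bfGamma)$.

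The one point needing a little care — the ``main obstacle'', though it is more a bookkeeping issue than a real difficulty — is that the equalizer in $\QCBZ$ carries the \emph{sequentialized} subspace topology rather than the plain subspace topology, so one must invoke the fact that the corestriction of an admissible representation to a (possibly non-open) subset yields an admissible representation of the sequentialized subspace; this is exactly the content used in \cite{sch:phd} to show $\QCBZ$ has equalizers, and I would simply cite it. A second, purely cosmetic, remark: one should note that $\alpha$ is non-zero so that the closure properties of Proposition~\ref{p:hpro1}(3)--(4) (closure under countable intersections and under preimages along partial continuous maps with $\bfPi^0_2$-domain) are available; for $\alpha=0$, i.e.\ $\bfSig^0_2/\bfPi^0_2$, the preimage step could fail. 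Everything else is a routine transfer of the computation already carried out for products in Proposition~\ref{p:prod:coprod}.
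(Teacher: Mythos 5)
Your proposal is correct and follows essentially the same route as the paper: realize $f,g$ by partial continuous functions with $\bfPi^0_2$-domains, observe that the corestriction of $\delta$ to the equalizer is admissible with domain $\{p\in\dom(\delta)\mid (F(p),G(p))\in\EQ(\gamma)\}$, and conclude via the closure properties of Proposition~\ref{p:hpro1} together with $\EQ(\delta_E)=\EQ(\delta)\cap(\delta^{-1}(E)\times\delta^{-1}(E))$. The paper's proof is the same computation, differing only in notation.
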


\begin{proof}
  Let $\bfGamma \in \{ \bfSig^1_\alpha, \bfPi^1_\alpha \}$.
  Let $A,X \in \QCBZ(\bfGamma)$ and let $f_1,f_2 \colon X \to A$ be continuous.
  Then the sequential subspace $Y$ of $X$ with underlying set
  $\{ x \in X \,|\, f_1(x)=f_2(x)\}$ together with the inclusion map $\iota\colon Y \to X$
  form an equalizer for $f_1,f_2$ in $\QCBZ$. 
  Choose admissible representations $\phi$ and $\delta$ for $A$ and $X$, respectively,
  such that $\EQ(\phi),\EQ(\delta) \in \Gamma(\calN^2)$.
  Then the co-restriction $\gamma$ of $\delta$ to the underlying set of $Y$
  is known to be an admissible representation for $Y$.
  There are partial continuous functions $g_1,g_2$ on $\calN$ with $\bfPi^0_2$-domains
  which realize $f_1$ and $f_2$, respectively.
  Then
  $ \dom(\gamma)= \big\{ p \in  \dom(\delta) \,\big|\,(g_1(p),g_2(p)) \in \EQ(\phi) \big\}$.
  Since $g_1,g_2$ are continuous and have a $\bfPi^0_2$-domain, 
  $\dom(\gamma) \in \bfGamma(\calN)$ by Proposition~\ref{p:hpro1}.
  The equation $\EQ(\gamma)=(\dom(\gamma) \times \dom(\gamma)) \cap \EQ(\delta)$
  yields us $\EQ(\gamma) \in \bfGamma(\calN^2)$ and $Y \in \QCBZ(\bfGamma)$.
\end{proof}

Now we turn our attention to co-equalizers.
Co-equalizers in $\QCBZ$ are constructed by first forming a co-equalizer in the category $\QCB$
and then, if the resulting space is non-$T_0$, identifying points with the same neighbourhoods.
{Non-$T_0$} qcb-spaces do not have an admissible representation, but some of them have 
a quotient representation.
This motivates the following definition generalizing the one from above.
For a given family $\bfGamma$ of pointclasses, we say that a topological space $X$ is
\emph{$\bfGamma$-quotient-representable}, if $X$ has a quotient representation $\delta$
such that $\EQ(\delta) \in \bfGamma(\calN^2)$.
We denote the class of $\bfGamma$-quotient-representable spaces by $\QTE(\bfGamma)$
and the class of $\bfGamma$-quotient-representable $T_0$-spaces by $\QTEZ(\bfGamma)$.
Since any admissible representation of a sequential space is a quotient representation,
we have $\QCBZ(\bfGamma) \subseteq \QTEZ(\bfGamma)$.

We study the (non-uniform) descriptive complexity of the Kolmogorov operator $\mathcal{T}_0$
that maps any $T_0$-space to itself and sends a non-$T_0$-space $X$ to the quotient space
induced by the equivalence relation $\equiv_X$ given by the specification order of $X$,
i.e., $x \equiv_X x'$ iff $x$ and $x'$ have the same open neighbourhoods.

\begin{proposition}\label{p:QTE->QCBZ}  
 Let $\alpha$ be a non-zero countable ordinal.
 Then $X \in \QTE(\bfSig^1_\alpha)$ implies $\mathcal{T}_0(X) \in \QCBZ(\bfSig^1_{\alpha+2})$.
 Moreover, 
 $\QCBZ(\bfSig^1_\alpha) \subseteq \QTEZ(\bfSig^1_\alpha) \subseteq \QCBZ(\bfSig^1_{\alpha+2})$. 
\end{proposition}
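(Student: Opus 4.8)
The plan is to analyze the Kolmogorov quotient operator $\mathcal{T}_0$ by unwinding what it does to a quotient representation. Suppose $X \in \QTE(\bfSig^1_\alpha)$, say via a quotient representation $\delta \colon D \to X$ with $\EQ(\delta) \in \bfSig^1_\alpha(\calN^2)$. Composing $\delta$ with the canonical surjection $X \to \mathcal{T}_0(X)$ yields a quotient representation $\delta'$ of $\mathcal{T}_0(X)$, and by definition $\EQ(\delta') = \{(p,q) \in D^2 \mid \delta(p) \equiv_X \delta(q)\}$. So the first task is to express $\equiv_X$ — ``same open neighbourhoods'' — in terms of $\delta$ and estimate its complexity. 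The key observation is that $\delta(p) \equiv_X \delta(q)$ iff for every open $U \subseteq X$, $\delta(p) \in U \leftrightarrow \delta(q) \in U$; since $\delta$ is a quotient map, open sets of $X$ correspond to $\delta$-saturated open subsets of $D$. One should parametrize open sets of $X$ by Baire space via the representation $\mathcal{O}(X) \cong \mathbb{S}^X$, i.e. by continuous functions $X \to \mathbb{S}$; pulling back along $\delta$, an open set of $X$ is named by some $r \in \calN$ for which $u_r$ (restricted to $D$) is a $\delta$-continuous $\{0,1\}$-valued ``indicator'' that is constant on $\EQ(\delta)$-classes. Then $\delta(p) \equiv_X \delta(q)$ becomes a statement of the shape ``for all names $r$ of open sets of $X$: $p$ enters the set iff $q$ does,'' which is a $\forall^{\calN}$ over a condition built from $\dom(u) \in \bfPi^0_2$, from $\EQ(\delta)$ (and its complement), and from arithmetical predicates on $u$-values.

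Carrying this out, the condition inside the universal quantifier has complexity roughly $(\bfSig^1_\alpha)_c$ (the complement of $\EQ(\delta)$ appears when expressing that $u_r$ respects $\EQ(\delta)$, together with a $\bfPi^0_2$ clause for the domain of $u$), so one universal real quantifier pushes it to $\bfPi^1_{\alpha+1}$ at worst, and hence $\EQ(\delta') \in \bfPi^1_{\alpha+1}(\calN^2) \subseteq \bfSig^1_{\alpha+2}(\calN^2)$ by Proposition~\ref{p:hpro1}(1). This shows $\mathcal{T}_0(X)$ is $\bfSig^1_{\alpha+2}$-quotient-representable; since $\mathcal{T}_0(X)$ is a $T_0$-space and $T_0$ qcb-spaces with a quotient representation are qcb$_0$-spaces that (by admissibility theory, cf.\ \cite{sch:phd}) carry an admissible representation of no greater complexity — more precisely, one invokes the last inclusion $\QTEZ(\bfSig^1_\beta) \subseteq \QCBZ(\bfSig^1_{\beta+2})$, or rather proves it in tandem — we conclude $\mathcal{T}_0(X) \in \QCBZ(\bfSig^1_{\alpha+2})$.

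For the ``moreover'' part, $\QCBZ(\bfSig^1_\alpha) \subseteq \QTEZ(\bfSig^1_\alpha)$ is immediate, since an admissible representation of a sequential space is a quotient representation (as already noted in the text). For $\QTEZ(\bfSig^1_\alpha) \subseteq \QCBZ(\bfSig^1_{\alpha+2})$: given a $T_0$-space $X$ with quotient representation $\delta$, $\EQ(\delta) \in \bfSig^1_\alpha(\calN^2)$, one must produce an \emph{admissible} representation. The standard move is to pass to the admissification $\delta^{\flat}$ obtained by composing with (a fixed continuous realizer of) the canonical map derived from $\mathbb{S}^X$: concretely one uses that $X$ embeds into $\mathcal{O}(\mathcal{O}(X))$-style double dualization, or more directly one uses Schröder's construction of an admissible representation from a quotient representation, and tracks that $\EQ$ of the result is obtained from $\EQ(\delta)$ by one $\forall^{\calN}$ over a $\bfPi^0_2$-and-$\EQ(\delta)$-condition — again landing in $\bfPi^1_{\alpha+1} \subseteq \bfSig^1_{\alpha+2}$.

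The main obstacle I anticipate is bookkeeping the complexity of ``$u_r$ is a well-defined continuous map $X \to \mathbb{S}$'', i.e. that a real $r$ genuinely names an open subset of $X$ and not merely of $D$: this requires $u_r$ to be defined (and take value in $\{0,1\}$) on a $\bfPi^0_2$ set containing $D$ and to be invariant under $\EQ(\delta)$, and it is the $\EQ(\delta)$-invariance clause — which mentions $\EQ(\delta)$ positively inside the scope of the outer $\forall$ but needs a conjunction with the $\bfPi^0_2$ domain condition — that determines whether we land at level $\alpha+1$ or $\alpha+2$. Getting the cleanest bound (and checking it is uniform enough to feed the induction implicit in the ``moreover'' clause) is where the real care is needed; everything else is routine manipulation using the closure properties in Proposition~\ref{p:hpro1} and the universal function $u$ from Proposition~\ref{p:univ}.
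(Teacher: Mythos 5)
Your overall strategy (double dualization into $\IS^{\IS^X}$ and Schr\"oder's admissibilization of a quotient representation) is in the same spirit as the paper's proof, but the proposal has a genuine gap exactly at the step that carries the weight. Your first paragraph only bounds $\EQ$ of the \emph{quotient} representation $\delta'$ of $\mathcal{T}_0(X)$; that shows at best $\mathcal{T}_0(X)\in\QTEZ(\bfPi^1_{\alpha+1})$, and to get into $\QCBZ(\bfSig^1_{\alpha+2})$ you then invoke the inclusion $\QTEZ(\bfSig^1_\beta)\subseteq\QCBZ(\bfSig^1_{\beta+2})$ --- i.e.\ the ``moreover'' clause you are supposed to prove (``in tandem''). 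Worse, even granting that inclusion, applying it to $\mathcal{T}_0(X)\in\QTEZ(\bfPi^1_{\alpha+1})\subseteq\QTEZ(\bfSig^1_{\alpha+2})$ would only give $\QCBZ(\bfSig^1_{\alpha+4})$, not $\bfSig^1_{\alpha+2}$; to get the stated bound one must build the admissible representation of $\mathcal{T}_0(X)$ \emph{directly} from the original quotient representation $\delta$ of $X$, which is what the paper does and what your detour through $\EQ(\delta')$ does not deliver.

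The hand-waved conversion step also misstates the complexity bookkeeping. The paper's construction is: (i) apply the exponentiation bound twice --- in its strengthened form Proposition~\ref{p:exp:QTE}, which accepts a mere $\QTE(\bfSig^1_\alpha)$ input (needed, since $\delta$ is only a quotient representation of a possibly non-$T_0$ space) --- to obtain the canonical admissible representation $\gamma=\varrho_\IS^{[\varrho_\IS^\delta]}$ of $\IS^{\IS^X}$ with $\EQ(\gamma)\in\bfPi^1_{\alpha+1}(\calN^2)$; and then (ii) co-restrict $\gamma$ to the image of the embedding $e\colon\mathcal{T}_0(X)\to\IS^{\IS^X}$. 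Step (ii) is where the second $+1$ comes from and is entirely missing from your sketch: one must recognize which $\gamma$-names name functionals of the form $e(y)$, and the paper does this by producing (via the universal function, with a realizer $g$ satisfying $u(g(r),q)=u(q,r)$) a $\gamma$-name $g(r)$ of $e([\delta(r)]_{\equiv_X})$ from each $r\in\dom(\delta)$, so that $\dom(\delta^*)$ is an $\exists^{\calN}$-projection of a set built from $\dom(\delta)\in\bfSig^1_\alpha$ and $\EQ(\gamma)\in\bfPi^1_{\alpha+1}$, hence $\bfSig^1_{\alpha+2}$, and $\EQ(\delta^*)=(\dom(\delta^*)\times\dom(\delta^*))\cap\EQ(\gamma)\in\bfSig^1_{\alpha+2}$. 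Your claim that the admissibilized representation has $\EQ$ obtained ``by one $\forall^{\calN}$ \dots landing in $\bfPi^1_{\alpha+1}$'' overlooks this domain-recognition step (a representation with domain $\dom(\delta)$ itself is not admissible, and pairing names $\langle r,p\rangle$ destroys admissibility for the same reason), so the crucial complexity estimate is unproven; this is the missing idea rather than mere bookkeeping.
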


The proof is based on the following slight improvement of Proposition~\ref{p:exp}.

\begin{proposition}\label{p:exp:QTE} 
 Let $1\leq \alpha< \omega_1$,
 $X\in \QTE(\bfSig^1_\alpha)$ and $Y\in \QCBZ(\bfPi^1_\alpha)$.
 Then $Y^X\in \QCBZ(\bfPi^1_{\alpha})$.
\end{proposition}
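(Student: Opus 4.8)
The plan is to imitate the proof of Proposition~\ref{p:exp}, tracking carefully where the admissibility of the representation of $X$ was used and checking that a mere quotient representation suffices. First I would fix a quotient representation $\delta$ of $X$ with $\EQ(\delta)\in\bfSig^1_\alpha(\calN^2)$ and an admissible representation $\gamma$ of $Y$ with $\EQ(\gamma)\in\bfPi^1_\alpha(\calN^2)$. The key point is that, even though $\delta$ need not be admissible, the recipe $[\gamma^\delta](p)=f$ iff $f\delta=\gamma u_p|_{\dom(\delta)}$ from Proposition~\ref{p:fspace:rep} still defines a partial representation of the set of continuous functions from $X$ to $Y$: indeed $X$ carries the quotient topology induced by $\delta$, so a function $f\colon X\to Y$ is continuous iff $f\delta$ is continuous iff $f\delta$ admits a continuous realizer into $\gamma$, which by Proposition~\ref{p:univ} can be taken of the form $\gamma u_p|_{\dom(\delta)}$. (I would also note that $Y^X$ is a qcb$_0$-space by the cartesian closure of $\QCBZ$, since $\mathcal{T}_0(X)$ is a qcb$_0$-space with the same continuous maps into the $T_0$-space $Y$, so it is legitimate to ask whether $Y^X\in\QCBZ$.) Whether or not $\gamma^\delta$ so defined is \emph{admissible} is not needed for the complexity computation; if admissibility is wanted one can either invoke that $\gamma^\delta$ is equivalent to the canonical admissible representation of $Y^{\mathcal{T}_0(X)}$, or simply observe that it is a continuous surjection onto a qcb$_0$-space that is reducible to and from an admissible one, and reductions preserve the relevant pointclass by Proposition~\ref{p:hpro1}(3)--(4).

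Next I would run the same set-theoretic bookkeeping as in Proposition~\ref{p:exp}. Define $M\subseteq\calN^4$ exactly as there, using the universal function $u$, and observe that $M\in\bfPi^1_\alpha(\calN^4)$ because it is obtained from $\EQ(\gamma)\in\bfPi^1_\alpha$ and the $\bfPi^0_2$-set $\dom(u)$ by finitely many continuous preimages and finite intersections, and $\bfPi^1_\alpha$ is closed under preimages of partial continuous functions with $\bfPi^0_2$-domain (Proposition~\ref{p:hpro1}(4)). Then, just as before,
\begin{align*}
 (p_1,p_2)\in\EQ(\gamma^\delta)
 &\;\Longleftrightarrow\;
 \forall (q_1,q_2)\in\calN^2.\big((q_1,q_2)\notin\EQ(\delta)\vee(p_1,p_2,q_1,q_2)\in M\big).
\end{align*}
Here is the one place the hypotheses differ: $\EQ(\delta)$ is now only in $\bfSig^1_\alpha$ rather than being closed-in-something, but that is fine, since $\overline{\EQ(\delta)}\in\bfPi^1_\alpha$ and $\bfPi^1_\alpha$ is closed under finite (indeed countable) unions and under $\forall^\calN$ (Proposition~\ref{p:hpro1}(4)). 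Hence the right-hand side defines a $\bfPi^1_\alpha$ subset of $\calN^2$, so $\EQ(\gamma^\delta)\in\bfPi^1_\alpha(\calN^2)$, and therefore $Y^X\in\QCBZ(\bfPi^1_\alpha)$.

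I expect the only genuine subtlety — the ``main obstacle'' — to be the soundness of the definition of $\gamma^\delta$ as a representation when $\delta$ is merely a quotient representation and $X$ need not be $T_0$: one must check that every continuous $f\colon X\to Y$ really is in the range (using that $Y$ is $T_0$, so $f$ factors through $\mathcal{T}_0(X)$, together with the universality of $u$), and that two names $p_1,p_2$ with $\gamma u_{p_1}|_{\dom(\delta)}=\gamma u_{p_2}|_{\dom(\delta)}$ indeed name the same continuous function on $X$ — which is immediate since $\delta$ is surjective. Once this is in place, the computation is essentially verbatim the one in Proposition~\ref{p:exp}, with $\QCBZ(\bfSig^1_\alpha)$ replaced by $\QTE(\bfSig^1_\alpha)$ throughout and the closure of $\bfPi^1_\alpha$ under $\forall^\calN$, countable unions and $\bfPi^0_2$-preimages doing all the work. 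This is exactly the refinement needed to feed into the proof of Proposition~\ref{p:QTE->QCBZ}, where one applies it with $Y=\IS$ to bound the complexity of $\mathcal{O}(X)\cong\IS^X$.
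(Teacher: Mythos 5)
Your proposal is correct and is essentially the paper's own argument: the paper proves Proposition~\ref{p:exp:QTE} simply by noting that the proof of Proposition~\ref{p:exp} was already written for a quotient representation $\delta$ of a qcb-space with $\EQ(\delta)\in\bfSig^1_\alpha(\calN^2)$, which is exactly the computation you reproduce. The only point you elaborate that the paper delegates to a citation (Proposition~4.2.5 of \cite{sch:phd}) is that $\gamma^\delta$ is still an admissible representation of $Y^X$ when $\delta$ is merely a quotient representation, and your justification of this is sound.
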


\begin{proof}
 The proof of Proposition~\ref{p:exp} has been formulated to show this.
\end{proof}

\begin{proof} \emph{(Proposition~\ref{p:QTE->QCBZ})}
 The Sierpinski space $\IS$ has $\{\bot,\top\}$ as its underlying set
 and $\{\emptyset,\{\top\},\linebreak[3]\{\bot,\top\}\}$ as its topology.
 The space $\mathcal{T}_0(X)$ is homeomorphic to the sequential subspace 
 of $\IS^{\IS^X}$ which has $\{ e(y) \,|\, y \in \mathcal{T}_0(X)\}$ 
 as its underlying set, cf.\ \cite{sch:phd}. 
 Here the continuous injection $e\colon \mathcal{T}_0(X) \to \IS^{\IS^X}$ 
 is defined by $e([x]_{\equiv_X})(h):=h(x)$.
 \par
 An admissible representation $\varrho_\IS$ for $\IS$
 is defined by $\varrho_\IS(p)=\bot \iff \forall i \in \omega. p(i)=0$.
 Clearly, $\EQ(\varrho_\IS)$ is a Boolean combination of open sets and thus in 
 $\bfPi^1_0(\calN^2) \subseteq\bfPi^1_\alpha(\calN^2)$.
 Two applications of Proposition~\ref{p:exp:QTE} yield that
 $\gamma:=\varrho_\IS^{[\varrho_\IS^\delta]}$ is an admissible representation 
 for $\IS^{\IS^X}$ with $\EQ(\gamma) \in \bfPi^1_{\alpha+1}(\calN^2)$ where $\delta$ is a quotient representation of $X$ such that $EQ(\delta)\in\mathbf\Sigma^1_\alpha(\mathcal{N}^2)$.
 We define a representation $\delta^*$ of $\mathcal{T}_0(X)$ 
 by $\delta^*(p)=y :\Longleftrightarrow \gamma(p)=e(y)$.
 By Proposition 4.3.2 in \cite{sch:phd}, $\delta^*$ is an admissible representation of $\mathcal{T}_0(X)$.
 Proposition~\ref{p:univ} yields a total continuous function $g\colon \calN \to \calN$
 with $u(g(r),q)=u(q,r)$ for all $(q,r) \in \dom(u)$. 
 One easily verifies $\gamma(g(r))=e([\delta(r)]_{\equiv_X})$ for all $r \in \dom(\delta)$.
 Hence for all $r \in \calN$ we have
 \[
   p \in \dom(\delta^*)
   \Longleftrightarrow
   \exists r \in \calN.\, \big( r \in \dom(\delta) \;\&\; (p,g(r)) \in \EQ(\gamma) \big) \,.
 \]
 By Proposition~\ref{p:hpro1}, $\dom(\delta^*) \in \bfSig^1_{\alpha+2}(\calN^2)$.
 Since $\EQ(\delta^*)=(\dom(\delta^*) \times \dom(\delta^*)) \cap \EQ(\gamma)$, 
 we obtain $\EQ(\delta^*) \in \bfSig^1_{\alpha+2}(\calN^2)$ 
 and $\mathcal{T}_0(X) \in \QCBZ(\bfSig^1_{\alpha+2})$.
 \\
 If $X$ is additionally a $T_0$-space, then $\mathcal{T}_0(X)=X$
 and thus $X \in \QCBZ(\bfSig^1_{\alpha+2})$.
\end{proof}

Now we can formulate our result about forming co-equalizers.

\begin{proposition}\label{l:QCBZ:coequalizers}
 Let $\lambda$ be a countable limit ordinal.
 Then $\QCBZ(\bfSig^1_{<\lambda})$ is closed under forming co-equalizers in $\QCBZ$.
\end{proposition}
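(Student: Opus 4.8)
The plan is to take a co-equalizer diagram in $\QCBZ$, construct the co-equalizer space concretely, and track the complexity of its canonical representation through the construction, using the machinery already assembled. So let $f_1,f_2\colon X\to Z$ be continuous maps of qcb$_0$-spaces with $X,Z\in\QCBZ(\bfSig^1_{<\lambda})$, say $X\in\QCBZ(\bfSig^1_\alpha)$ and $Z\in\QCBZ(\bfSig^1_\beta)$ with $\alpha,\beta<\lambda$; put $\mu:=\max(\alpha,\beta)<\lambda$, so by Proposition~\ref{p:hpro1}(1) both $X$ and $Z$ lie in $\QCBZ(\bfSig^1_\mu)$. As recalled before the statement, the co-equalizer $Q$ in $\QCBZ$ is obtained by first forming the co-equalizer $Q'$ of $f_1,f_2$ in $\QCB$ (equivalently, in the category $\mathsf{Seq}$ of sequential spaces — the quotient of $Z$ by the smallest equivalence relation $\sim$ with $f_1(x)\sim f_2(x)$ for all $x$), and then applying the Kolmogorov operator: $Q=\mathcal{T}_0(Q')$. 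The canonical quotient map $q\colon Z\to Q'$ composed with $\mathcal{T}_0(Z)\to\mathcal{T}_0(Q')$... more cleanly: since $Z$ is already $T_0$ this is just a quotient map $Z\to Q'$ followed by $\mathcal{T}_0$.

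The key step is to produce a quotient representation of $Q'$ of bounded hyperprojective complexity. If $\gamma\colon D\to Z$ is an admissible (hence quotient) representation with $\EQ(\gamma)\in\bfSig^1_\mu(\calN^2)$, then $\delta:=q\circ\gamma\colon D\to Q'$ is again a quotient representation (quotient maps compose, and $D\subseteq\calN$ with the induced topology is countably based so the quotient topology argument of \cite{sch:phd} applies). Its equivalence relation is $\EQ(\delta)=\{(p,p')\in D^2\mid \gamma(p)\sim\gamma(p')\}$. The relation $\sim$ on $Z$ is the transitive-reflexive-symmetric closure of $R:=\{(f_1(x),f_2(x))\mid x\in X\}$, i.e. $z\sim z'$ iff there is a finite chain $z=z_0,z_1,\dots,z_n=z'$ with each consecutive pair in $R\cup R^{-1}$. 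Pulling this back along $\gamma$ and using an admissible representation $\xi$ of $X$ with $\EQ(\xi)\in\bfSig^1_\mu$ together with continuous realizers $g_1,g_2$ of $f_1,f_2$ on $\calN$ with $\bfPi^0_2$-domain: the pulled-back generating relation $\{(p,p')\in D^2\mid \exists s\in\dom(\xi).\, (p,g_1(s))\in\EQ(\gamma)\wedge(g_2(s),p')\in\EQ(\gamma)\}$ is in $\bfSig^1_\mu$ by Proposition~\ref{p:hpro1}(3) (closure under $\exists^\calN$, countable union, and preimages under $\bfPi^0_2$-domain continuous functions). Its "$n$-step chain" iterates stay in $\bfSig^1_\mu$, and the full relation $\EQ(\delta)$, being a countable union over $n$ of these, is in $(\bfSig^1_\mu)_\sigma=\bfSig^1_\mu$ by Proposition~\ref{p:hpro1}(3) again. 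Hence $Q'\in\QTE(\bfSig^1_\mu)$ (it is $\bfSig^1_\mu$-quotient-representable, but possibly non-$T_0$).

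Finally, apply Proposition~\ref{p:QTE->QCBZ}: from $Q'\in\QTE(\bfSig^1_\mu)$ we get $Q=\mathcal{T}_0(Q')\in\QCBZ(\bfSig^1_{\mu+2})$. Since $\mu<\lambda$ and $\lambda$ is a limit ordinal, $\mu+2<\lambda$, so $\bfSig^1_{\mu+2}\subseteq\bfSig^1_{<\lambda}$ and $Q\in\QCBZ(\bfSig^1_{<\lambda})$, as required. The main obstacle — and the place warranting care — is the complexity bookkeeping for the transitive closure defining $\sim$: one must check that the chain-length-$n$ relations are uniformly $\bfSig^1_\mu$ and that assembling them by a countable union genuinely stays inside $\bfSig^1_\mu$ rather than climbing the hierarchy; this is exactly where the idempotency properties $(\bfSig^1_\mu)_\sigma=(\bfSig^1_\mu)_\exists=\bfSig^1_\mu$ from Proposition~\ref{p:hpro1}(3) are essential and where the hypothesis that $\mu$ is a \emph{non-limit-reaching} index below the limit $\lambda$ (so $\mu+2<\lambda$) is used. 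A secondary point is verifying that $q\circ\gamma$ is genuinely a quotient representation of $Q'$; this is routine from the definition of the $\QCB$-co-equalizer in \cite{sch:phd}.
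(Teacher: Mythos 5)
Your proposal is correct and follows essentially the same route as the paper's proof: form the co-equalizer in $\QCB$ as the quotient by the transitive closure of the relation generated by the pairs $(f_1(x),f_2(x))$, show via realizers and the closure properties of Proposition~\ref{p:hpro1}(3) (closure under $\exists^\calN$, countable unions, and preimages of partial continuous maps with $\bfPi^0_2$-domain) that the induced quotient representation has $\bfSig^1_\mu$ equivalence relation, so the quotient lies in $\QTE(\bfSig^1_\mu)$, and then apply Proposition~\ref{p:QTE->QCBZ} together with $\mu+2<\lambda$. The only cosmetic difference is that the paper fixes a single level $\alpha$ for both spaces and writes out the chain sets $E_k$ explicitly with an encoding of witnesses into one element of $\calN$, which is exactly the bookkeeping you flag as the point needing care.
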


\begin{proof}
 Let $\alpha<\lambda$.
 Let $A,X \in \QCBZ(\bfSig^1_\alpha)$ and 
 let $f,g \colon A \to X$ be continuous.
 A co-equalizer $q\colon X \to Y$ for $f,g$ in $\QCB$ is constructed as follows 
 (cf.\ \cite{sch:phd}):
 Let $\equiv$ be the equivalence relation obtained by the transitive closure of the relation
 $$R:=\{ (f(a),g(a)), (g(a),f(a)), (x,x) \,|\, a\in A,\, x \in X \}.$$
 Let $Y$ be the space that has the equivalence classes of $\equiv$ as its underlying set.
 The function $q$ is the surjection mapping $x \in X$ to its equivalence class $[x]_\equiv$.
 The topology of $Y$ is the quotient topology induced by $q$.
 If $Y$ is $T_0$, then $q$ is a co-equalizer for $f,g$ in $\QCBZ$ as well, otherwise
 the map $[.]_{\equiv_Y} \circ q\colon X \to \mathcal{T}_0(Y)$ yields a co-equalizer
 for $f,g$ in $\QCBZ$.
It remains to show that $\mathcal{T}_0(Y) \in \QCBZ(\bfSig^1_{<\lambda})$. 

 There are admissible representations $\phi,\delta$ for $A,X$ such that
 $\dom(\phi),\EQ(\delta)$ are $\bfSig^1_\alpha$-sets.
 Then the partial function $\gamma:=q \circ \delta$ is a quotient representation of $Y$.
 There are partial continuous $F,G$ on $\calN$ with $\bfPi^0_2$-domains which
 realize $f$ and $g$, respectively.
 The sets 
 \begin{align*} 
  B&:=\{ (p,s) \in \dom(\delta) \times \dom(\phi)   \,|\, (p,F(s)) \in \EQ(\delta)\}, 
 \\
  C&:=\{ (s,t) \in \dom(\phi)   \times \dom(\phi)   \,|\, (G(s),F(t)) \in \EQ(\delta) \} 
  \;\;\text{and}
  \\
   D&:=\{ (t,p) \in \dom(\phi)   \times \dom(\delta) \,|\, (G(t),p) \in \EQ(\delta)\}
 \end{align*}  
 are in $\bfSig^1_\alpha(\calN^2)$ by Proposition~\ref{p:hpro1}.
 Using the projections $\pi_i\colon \calN \to \calN$ of the inverse of the homeomorphism 
 $\langle . \rangle\colon \calN^\omega \to \calN$
 from Section~\ref{sub:topspaces}, we define for $k \geq 1$ sets $E_k,M \subseteq \calN^3$ by
 \begin{align*}
  E_k:= & \big\{ (p,r,p') \,\big|\,
        (p,\pi_1(r)) \in B \;\&\;
        \forall 1 \leq i < k.\, (\pi_i(r),\pi_{i+1}(r)) \in C 
        \;\&\; (\pi_{k}(r),p') \in D 
   \big\},  
  \\  
  M:= &  
   \big\{ (p,p',r)   \,\big|\, 
          (p,p') \in \EQ(\delta) \;\,\text{or}\;\, 
          (p,r,p') \in {\textstyle\bigcup\nolimits_{k \in \omega}} E_k \;\,\text{or}\;\,
          (p',r,p) \in {\textstyle\bigcup\nolimits_{k \in \omega}} E_k 
   \big\} \,.
 \end{align*}
 Then we have
 \[
   \EQ(\gamma)=
   \big\{ (p,p') \in \dom(\delta)^2 \,\big|\, \delta(p) \equiv \delta(p') \big\}
   =\big\{ (p,p') \in \calN^2 \,\big|\, \exists r \in \calN. (p,p',r) \in M \big\},
 \]
 because $(\delta(p),\delta(p')) \in R^{(l)}$ iff $\delta(p)=\delta(p')$ or there are 
 $k \in \{1,\dotsc,l\}$ and $r \in \calN$ such that $(p,r,p') \in E_k$ or $(p',r,p) \in E_k$.
 By Proposition~\ref{p:hpro1}, $E_k$, $M$ and $\EQ(\gamma)$ are $\bfSig^1_\alpha$-sets. 
 We conclude $Y \in \QTE(\bfSig^1_\alpha)$.
 By Proposition~\ref{p:QTE->QCBZ},
  $\mathcal{T}_0(Y) \in \QCBZ(\bfSig^1_{\alpha+2})$.
 Since $\alpha+2<\lambda$, $\QCBZ(\bfSig^1_{\alpha+2}) \subseteq \QCBZ(\bfSig^1_{<\lambda})$. Therefore, $\mathcal{T}_0(Y) \in \QCBZ(\bfSig^1_{<\lambda})$.
\end{proof}

%
%

\section{Kleene-Kreisel continuous functionals of countable types}\label{sec:functional}

Here we extend all results in \cite{scs13} about the continuous functionals of finite types
to the continuous functionals of countable types defined as follows:

\begin{definition}\label{func}
Using the function space construction of $\QCBZ$, we define the sequence of
qcb$_0$-spaces $\{\Nk{\alpha}\}_{\alpha<\omega_1}$ by induction on countable ordinals $\alpha$ as follows:
\[
  \Nk{0}:=\omega,\;
  \Nk{\alpha+1}:=\omega^{\Nk{\alpha}} \text{ and }
  \Nk{\lambda}:=\prod_{\alpha<\lambda}\Nk{\alpha}\,,
\]  
where $\omega$ denotes the space of natural numbers endowed with the discrete topology, 
$\alpha,\lambda<\omega_1$ and $\lambda$ is a limit ordinal. 
We call $\Nk{\alpha}$  \emph{the space of continuous functionals of  type $\alpha$} over $\omega$.
\end{definition}

Obviously, for $k\in \omega$ the space $\Nk{k}$  coincides with the space of
Kleene-Kreisel continuous functionals of type $k$ extensively studied in the literature \cite{no80,no81,no99}, 
and $\Nk{1}$ coincides with the Baire space $\calN$.
For any finite $k \geq 2$, the sequential topology on $\Nk{k}$ is strictly finer
than the corresponding compact-open topology \cite{hyland}.
Furthermore it is neither zero-dimensional nor regular \cite{Sch:NNN}.

Any of the introduced spaces has a natural canonical admissible representation 
$\delta_\alpha \colon D_\alpha\to\Nk{\alpha}$ induced by the constructions described 
in Proposition~\ref{p:fspace:rep} and the proof of Proposition~\ref{p:prod:coprod}. They may be defined (using the notation from Sections \ref{sub:admiss} and \ref{sec:hph:qcb}) as follows: 
\[D_0=\{n0^\omega\mid n\in \omega\},\;
  \delta_0(n0^\omega):=n,\;
  \delta_1:=id_\mathcal{N},\;
  \delta_{\alpha+1}:=\delta_0^{\delta_\alpha} \text{ and }
  \delta_\lambda:=\prod_{\alpha<\lambda}\delta_\alpha\,,
\]
where $\alpha>0$ and $\lambda$ is a limit ordinal.

The spaces of continuous functionals enjoy the following product property,
which, in the case of the finite types, belongs to the folklore of sequential spaces (\cite{EL08}).

\begin{proposition}\label{p:Na*Nb=Nb}
 For all countable ordinals $\alpha \leq \beta$, $\Nalpha \times \Nbeta$ is homeomorphic 
 to $\Nbeta$.
\end{proposition}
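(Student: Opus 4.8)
The plan is to prove the statement by transfinite induction on $\beta$, exploiting the product identities established earlier in the paper together with the associativity/commutativity of the $\QCBZ$-product up to homeomorphism. First I would record the base observations: for $\alpha = \beta = 0$ we have $\omega \times \omega \cong \omega$, and more generally $\Nk{0} \times \Nk{0} \cong \Nk{0}$; for $\alpha = 0 < \beta$ with $\beta = \gamma+1$ a successor, $\Nk{0} \times \Nk{\gamma+1} = \omega \times \omega^{\Nk{\gamma}}$, and since $\omega \cong \omega \oplus \omega \oplus \dotsc$ (a countable discrete coproduct), this is a countable coproduct of copies of $\omega^{\Nk{\gamma}}$, hence homeomorphic to $\omega^{\Nk{\gamma}} = \Nk{\beta}$ by the standard fact that $\omega \times Y^X \cong Y^X$ when $Y^X$ carries its sequential function-space topology (this is the $k=2$ folklore cited as \cite{EL08}, and holds for all $\Nk{\gamma+1}$). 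For $\alpha=0$ and $\beta=\lambda$ a limit, $\Nk{0}\times\Nk{\lambda} = \omega \times \prod_{\gamma<\lambda}\Nk{\gamma}$; pulling the $\omega$ factor into the $\gamma = 0$ coordinate and using $\omega \times \Nk{0} \cong \Nk{0}$ gives $\Nk{\lambda}$ back.

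The main case is $\alpha = \mu + 1$ a successor (with $\mu+1 \le \beta$). Here $\Nk{\mu+1} \times \Nk{\beta} = \omega^{\Nk{\mu}} \times \Nk{\beta}$. I would distinguish subcases on $\beta$. If $\beta = \mu+1$, then I need $\omega^{\Nk{\mu}} \times \omega^{\Nk{\mu}} \cong \omega^{\Nk{\mu}}$; this follows from $\omega^{Z} \times \omega^{Z} \cong (\omega \times \omega)^Z \cong \omega^Z$, using that $\QCBZ$ is cartesian closed so that finite products distribute over exponentials with a fixed exponent. If $\beta = \nu + 1 > \mu+1$, so $\mu < \nu$, I want $\omega^{\Nk{\mu}} \times \omega^{\Nk{\nu}} \cong \omega^{\Nk{\nu}}$. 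Since $\mu \le \nu$, the inductive hypothesis gives $\Nk{\mu} \times \Nk{\nu} \cong \Nk{\nu}$, hence $\omega^{\Nk{\nu}} \cong \omega^{\Nk{\mu}\times\Nk{\nu}} \cong (\omega^{\Nk{\nu}})^{\Nk{\mu}}$ by exponential law; combined with $\omega^{\Nk{\mu}} \times (\omega^{\Nk{\nu}})^{\Nk{\mu}} \cong (\omega \times \omega^{\Nk{\nu}})^{\Nk{\mu}}$ and $\omega \times \omega^{\Nk{\nu}} \cong \omega^{\Nk{\nu}}$ (the folklore fact again, valid since $\omega^{\Nk{\nu}}$ is a sequential function space), this chain closes up. If $\beta = \lambda$ is a limit with $\mu+1 < \lambda$, then $\omega^{\Nk{\mu}} \times \prod_{\gamma<\lambda}\Nk{\gamma}$: I would absorb the extra factor into the coordinate at index $\mu+1$, i.e.\ group $\omega^{\Nk{\mu}} \times \Nk{\mu+1} = \Nk{\mu+1}\times\Nk{\mu+1} \cong \Nk{\mu+1}$ by the successor case just handled, leaving $\prod_{\gamma<\lambda}\Nk{\gamma} = \Nk{\lambda}$.

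Finally the case $\alpha = \lambda$ a limit (so $\lambda \le \beta$). Then $\Nk{\lambda} \times \Nk{\beta} = \big(\prod_{\gamma<\lambda}\Nk{\gamma}\big)\times\Nk{\beta}$. If $\beta = \lambda$, I want $\prod_{\gamma<\lambda}\Nk{\gamma} \times \prod_{\gamma<\lambda}\Nk{\gamma} \cong \prod_{\gamma<\lambda}\Nk{\gamma}$; this is a countable $\QCBZ$-product and the square of a countable product is again the same countable product (reindexing $\lambda \sqcup \lambda$ back to $\lambda$, using $\Nk{\gamma}\times\Nk{\gamma}\cong\Nk{\gamma}$ coordinatewise, which is the diagonal instance of the inductive hypothesis). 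If $\beta > \lambda$ is a successor or limit, I absorb $\prod_{\gamma<\lambda}\Nk{\gamma}$ by noting each $\Nk{\gamma}$ with $\gamma<\lambda\le\beta$ satisfies $\Nk{\gamma}\times\Nk{\beta}\cong\Nk{\beta}$ by the inductive hypothesis, and a countable product of such absorptions can be carried out one coordinate at a time (formally: $\Nk{\beta} \cong \prod_{\gamma<\lambda}\Nk{\gamma} \times \Nk{\beta}$ follows from $\Nk{\beta}\cong\Nk{\beta}^{\mathbb{N}}$-type reasoning when $\beta$ is itself a limit, and when $\beta$ is a successor from the successor case above applied with the limit $\lambda$ in the exponent position). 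The key technical obstacle throughout is the single folklore homeomorphism $\omega \times \Nk{\gamma+1} \cong \Nk{\gamma+1}$ (equivalently that a sequential function space $\omega^Z$ absorbs a discrete countable factor) — everything else is formal manipulation with cartesian-closedness of $\QCBZ$ and countable (co)products — so I would either cite \cite{EL08,EL08} for it or include a short direct argument: $\omega \times \omega^Z \cong (\omega \oplus \omega \oplus \dotsc) \times \omega^Z \cong \bigoplus_{n}\omega^Z$, and a countable coproduct of copies of a sequential function space is homeomorphic to that function space because convergent sequences are eventually in one summand.
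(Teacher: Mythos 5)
Your overall architecture is essentially the same as the paper's: reduce everything to cartesian-closedness identities, reindexing of countable $\QCBZ$-products, the device $\Nk{\beta}\cong(\Nk{\beta})^\omega$, and one absorption lemma, namely $\omega\times\Nk{\gamma+1}\cong\Nk{\gamma+1}$. The problem is that this absorption lemma --- which you yourself single out as the key technical obstacle --- is exactly where your argument fails. The ``standard fact'' you propose to cite, that $\omega\times Y^X\cong Y^X$ for a sequential function space $Y^X$, is false in that generality: for a one-point $X$ and $Y=\IS$ the space $Y^X\cong\IS$ is finite, and $\IS^{\calN}\cong\mathcal{O}(\calN)$ is connected (every nonempty open set, in the Scott topology and in its sequentialization, contains the constant function $\top$, since the constant sequence at $\top$ converges to every point), so neither absorbs a countable discrete factor. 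Hence only the specific instance $\omega\times\omega^{Z}\cong\omega^{Z}$ for the nonempty spaces $Z=\Nk{\gamma}$ is available, and it requires a genuine proof. Your proposed direct argument is not one: from $\omega\times\omega^Z\cong\bigoplus_{n\in\omega}\omega^Z$ you conclude that the coproduct is homeomorphic to a single summand ``because convergent sequences are eventually in one summand''; that property holds for every countable coproduct of sequential spaces and by itself never yields $\bigoplus_{n\in\omega}W\cong W$ (it fails whenever $W$ is connected, compact, or finite). In fact $\bigoplus_{n\in\omega}\omega^Z\cong\omega^Z$ is just a restatement of the claim to be proved, so as written the step is circular.

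What is needed, and what the paper supplies, is an explicit homeomorphism: fix $x_0\in\Nk{\alpha'}$, let $S(f,a,b)$ be $f$ post-composed with the transposition of $a$ and $b$ in $\omega$, and set $\varphi(a,f):=S\big(f,\langle a,f(x_0)\rangle,f(x_0)\big)$ and $\psi(f):=\big(\pi_1(f(x_0)),\,S(f,\pi_2(f(x_0)),f(x_0))\big)$; these are mutually inverse, and their continuity follows from cartesian closedness of $\QCBZ$ because the uncurried maps land in the discrete space $\omega$. You should reproduce such a construction (or some other genuine argument) rather than appeal to folklore. A secondary, repairable issue: in your case ``$\alpha=\lambda$ limit, $\beta>\lambda$'' you invoke ``the inductive hypothesis'' for $\Nk{\gamma}\times\Nk{\beta}\cong\Nk{\beta}$ with the \emph{same} $\beta$, which an induction on $\beta$ alone does not provide; one needs a nested induction on $\alpha$ together with the $(\Nk{\beta})^\omega\cong\Nk{\beta}$ trick (valid for all $\beta\geq1$), which is how the published proof of Proposition~\ref{p:Na*Nb=Nb} organizes its claims on limit absorption. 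Your sketch gestures at this but would need to be made precise; and without the $\omega$-absorption lemma the induction has no engine at all.
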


\begin{proof}
 We proceed by showing the following instances of our claim.
 For countable ordinals $\alpha,\beta$ and a countable limit ordinal $\lambda \geq \alpha$, 
\begin{enumerate}
\item \label{e:N*Na=Na}
 $\omega \times \Nalpha \cong \Nalpha$ 
\item \label{e:Na^N=Na}
  $(\Nalpha)^\omega \cong \Nk{\max\{1,\alpha\}}$
\item \label{e:Na*Na=Na}
  $\Nalpha \times \Nalpha \cong \Nalpha$
\item \label{e:Na+1*Nb+1=Nb+1}
  $\Nalpha \times \Nbeta \cong \Nbeta$ implies $\Nk{\alpha+1} \times \Nk{\beta+1} \cong \Nk{\beta+1}$
\item \label{e:Na*Nl=Nl}
  $\Nalpha \times \Nlambda \cong \Nlambda$ 
\item \label{e:Na*Nl+1=Nl+1}
  $\Nalpha \times \Nk{\lambda+1} \cong \Nk{\lambda+1}$  
\item \label{e:Na*Na+1=Na+1}
  $\Nalpha \times \Nk{\alpha+1} \cong \Nk{\alpha+1}$
\item \label{e:Na*Na+k=Na+k}
  $\Nalpha \times \Nk{\alpha+k} \cong \Nk{\alpha+k}$  for all $k \in \omega$.
\end{enumerate}
Now we show the above claims.
\begin{enumerate}
\item 
 For $\alpha=0$, the claim is just the statement $\omega \times \omega \cong \omega$.
 For a limit ordinal $\lambda >0$ we calculate:
 \[
   \Nlambda=\Tprod_{\beta <\lambda} \Nbeta 
   \cong \omega \times \Tprod_{1 \leq \beta <\lambda} \Nbeta
   \cong \omega \times \omega \times \Tprod_{1 \leq \beta <\lambda} \Nbeta
   \cong \omega \times \Tprod_{\beta <\lambda} \Nbeta 
   =\omega \times \Nlambda.
 \]
 Now let $\alpha$ be a successor cardinal $\alpha=\alpha'+1$.
 We fix an element $x_0 \in \Nk{\alpha'}$ and define functions
 $S\colon \Nk{\alpha} \times \omega^2 \to \Nk{\alpha}$,
 $\varphi\colon \omega \times \Nalpha \to \Nalpha$ and $\psi\colon \Nalpha \to \omega \times \Nalpha$ by
 \begin{align*}
  S(f,a,b)(x) &:= \left\{
    \begin{array}{cl}
     b & \text{if } f(x)=a             
      \\
     a & \text{if } f(x)=b
      \\
     f(x) & \text{otherwise,} 
    \end{array} \right.
  \\
  \varphi(a,f) &:=S\big( f, \langle a,f(x_0)\rangle, f(x_0) \big)
  \;\;\text{and}\;\;
  \\
   \psi(f)&:=\big( \pi_1(f(x_0)),  S( f, \pi_2(f(x_0))  ,f(x_0)) \big) 
 \end{align*}
 for $f \in \Nk{\alpha}$, $x \in \Nk{\alpha'}$ and $a,b \in \omega$,
 where $\pi_1,\pi_2:\omega \to \omega$ are the projections of the inverse of the
 pairing function $\langle .,.\rangle$.
 By cartesian closedness of $\QCBZ$, these functions are continuous.
 One easily checks that $\varphi$ and $\psi$ are inverses of each other. 
 Hence $\omega \times \Nalpha$ is homeomorphic to $\Nalpha$. 
\item
 Clearly $\Nk{0}^\omega=\Nk{1}$. For a successor ordinal $\alpha=\alpha'+1$,
 the cartesian closedness of $\QCBZ$ and Claim~\eqref{e:N*Na=Na} yield us
 \[
  \Nalpha^\omega \cong \big(\omega^{\Nk{\alpha'}}\big)^\omega
  \cong \omega^{\omega \times \Nk{\alpha'}}
  \cong \omega^{\Nk{\alpha'}}
  \cong \Nalpha.
 \]
 For limit ordinals $\lambda>0$ we proceed by induction and get by the induction hypothesis
 \begin{multline*}
  \Nlambda
  \cong \Nk{0} \times \Nk{1} \times \Tprod_{2 \leq \beta<\lambda} \Nbeta
  \cong \Tprod_{1\leq \beta< \lambda} \Nbeta
  \\
  \cong \Tprod_{1\leq \beta<\lambda} (\Nbeta)^\omega
  \cong \big(\Tprod_{1\leq \beta<\lambda} \Nbeta\big)^\omega
  \cong \Nlambda^\omega \,. \qquad
 \end{multline*}
\item
   We proceed by induction.
   For $\alpha=0$, the statement is simply $\omega \times \omega \cong \omega$.
   For any limit ordinal $\lambda \neq 0$, the induction hypothesis yields
   \[ \Nk{\lambda} \times \Nk{\lambda}
    = \big( \Tprod_{\alpha<\lambda} \Nk{\alpha}\big) \times \big(\Tprod_{\alpha<\lambda} \Nk{\alpha} \big)
    \cong \Tprod_{\alpha<\lambda} (\Nk{\alpha} \times \Nk{\alpha})
    \cong \Tprod_{\alpha<\lambda} \Nk{\alpha}
    = \Nk{\lambda}.
   \]
   For any successor ordinal $\alpha=\alpha'+1$, the cartesian closedness of $\QCBZ$ 
   and the induction hypothesis yield
   \[ \Nk{\alpha} \times \Nk{\alpha}
    =\omega^{\Nk{\alpha'}} \times \omega^{\Nk{\alpha'}}
    \cong (\omega \times \omega)^{\Nk{\alpha'}}
    \cong \omega^{\Nk{\alpha'}}
    =\Nk{\alpha}.
   \]
\item
 We calculate using the cartesian closedness of $\QCBZ$ and Claim~\eqref{e:N*Na=Na}:
 \begin{multline*}
  \Nk{\alpha+1} \times \Nk{\beta+1}
  = \omega^{\Nalpha} \times \omega^{\Nbeta}
  \cong \omega^{\Nalpha} \times \omega^{\Nalpha \times\Nbeta}
  \cong \omega^{\Nalpha} \times \big(\omega^{\Nbeta}\big)^{\Nalpha}
  \\
  \cong \big(\omega \times \omega^{\Nbeta}\big)^{\Nalpha}
  \cong \big(\omega^{\Nbeta}\big)^{\Nalpha}
  \cong \omega^{\Nalpha \times \Nbeta}
  \cong \omega^{\Nbeta}
  =\Nk{\beta+1} \,. \quad
 \end{multline*}
\item
 We calculate using Claim~\eqref{e:Na*Na=Na}:
 \begin{multline*}
  \Nalpha \times \Nlambda \cong \Nalpha \times \Tprod_{\beta<\lambda} \Nbeta
  \cong \Nalpha \times \Nalpha \times \Tprod_{\beta<\lambda, \beta \neq \alpha} \Nbeta
  \\
  \cong \Nalpha \times \Tprod_{\beta<\lambda, \beta \neq \alpha} \Nbeta
  \cong \Tprod_{\beta<\lambda} \Nbeta
  = \Nlambda \,. \quad
 \end{multline*}
\item
 For a successor ordinal $\alpha=\alpha'+1<\lambda$,
 we have $\Nk{\alpha'} \times \Nlambda \cong \Nlambda$ by Claim~\eqref{e:Na*Nl=Nl} 
 and thus $\Nalpha \times \Nk{\lambda+1} \cong \Nk{\lambda+1}$
 by Claim~\eqref{e:Na+1*Nb+1=Nb+1}.
 For a limit ordinal $\alpha$ with $0<\alpha \leq \lambda$, we proceed by ordinal induction
 and get by Claim~\eqref{e:Na^N=Na}:
 \begin{multline*}
  \Nk{\lambda+1}
  \cong (\Nk{\lambda+1})^\omega
  \cong \Tprod_{i \in \omega} \Nk{\lambda+1}
  \cong \Tprod_{\beta<\alpha} \Nk{\lambda+1}
  \cong \Tprod_{\beta<\alpha} (\Nbeta \times \Nk{\lambda+1})
  \\
  \cong \big(\Tprod_{\beta<\alpha} \Nbeta\big) \times \big(\Tprod_{\beta<\alpha}\Nk{\lambda+1} \big)  
  \cong \big(\Tprod_{\beta<\alpha} \Nbeta\big) \times (\Nk{\lambda+1})^\omega
  \cong \Nalpha \times \Nk{\lambda+1} \,.
 \end{multline*}
\item
 We proceed by ordinal induction. 
 For $\alpha=0$, $\Nalpha \times \Nk{\alpha+1} \cong \Nk{\alpha+1}$
 is an instance of Claim~\eqref{e:N*Na=Na}.
 For a successor ordinal, $\Nalpha \times \Nk{\alpha+1} \cong \Nk{\alpha+1}$
 follows from the induction hypothesis and Claim~\eqref{e:Na+1*Nb+1=Nb+1}.
 For a limit ordinal $\lambda>0$, 
 we know $\Nlambda \times \Nk{\lambda+1} \cong \Nk{\lambda+1}$
 from Claim~\eqref{e:Na*Nl+1=Nl+1}.
\item
 We proceed by induction on $k \in \omega$.
 For $k \in \{0,1\}$, we know the claim from~\eqref{e:Na*Na=Na} and~\eqref{e:Na*Na+1=Na+1}.
 For $k\geq 2$ we obtain by Claim~\eqref{e:Na*Na+1=Na+1} and by the induction hypothesis
 \begin{multline*}
   \Nalpha \times \Nk{\alpha+k} 
   \cong \Nalpha \times \Nk{\alpha+k-1} \times \Nk{\alpha+k}
   \\
   \cong \Nk{\alpha+k-1} \times \Nk{\alpha+k}
   \cong \Nk{\alpha+k} \,. \qquad
 \end{multline*}
\end{enumerate}

Now let $\alpha \leq \beta< \omega_1$. 
Then either there is some $k \in \omega$ such that $\beta=\alpha+k$
or the largest limit ordinal $\lambda$ with $\lambda \leq \beta$ satisfies $\alpha < \lambda$.
In the first case, $\Nalpha \times \Nbeta \cong \Nbeta$ follows from Claim~\eqref{e:Na*Na+k=Na+k}.
Otherwise we choose $l \in \omega$ such that $\beta=\lambda+l$ and calculate
using Claims~\eqref{e:Na*Nl=Nl} and~\eqref{e:Na*Na+k=Na+k}: 
\[
  \Nbeta
   = \Nk{\lambda+l}
   \cong \Nlambda\times \Nk{\lambda+l}
   \cong \Nalpha \times \Nlambda \times \Nk{\lambda+l}
   \cong\Nalpha \times \Nbeta \,.
\]
\end{proof}

From this propositions, we deduce the following basic properties of continuous functionals of countable types.

\begin{lemma}\label{l:N<k>:properties}
\begin{enumerate}
\item \label{e:N<k>retract}
 For all $\alpha<\beta<\omega_1$, the spaces $\Nk{\alpha}$, $\omega \times \Nk{\beta}$
 and $\calN \times \Nk{\beta}$
 are continuous retracts of $\Nk{\beta}$.
\item \label{e:Na^omega=Na}
  $(\Nk{0})^\omega\cong\Nk{1}$ and $(\Nk{\alpha})^\omega\cong\Nk{\alpha}$ for $1\leq \alpha<\omega_1$. 
\item \label{e:a<=b}
  For all $\beta<\omega_1$, $\prod_{\alpha\leq\beta}\Nk{\alpha}\cong\Nk{\beta}$.
\item \label{e:a<g<=b}
 For all $\alpha<\beta<\omega_1$, $\prod_{\alpha<\gamma\leq\beta}\Nk{\alpha}\cong\Nk{\beta}$. 
\item \label{e:b0<b1<b2}
 For all countable ordinals $\beta_0<\beta_1<\cdots$, we have
 $\prod_{k\in \omega}\Nk{\beta_k}\cong\Nk{\sup\{\beta_k\mid k\in \omega\}}$. 
\item \label{e:a0:a1:a2}
 For all countable ordinals $\alpha_0,\alpha_1,\ldots$, we have 
 $\prod_{i\in \omega}\Nk{\alpha_i}\cong\Nk{\sup\{1,\alpha_i\mid i\in \omega\}}$.
\item \label{e:Na^Nb} 
 For all $\alpha,\beta<\omega_1$, we have $\Nk{\alpha}^{\Nk{\beta}}\cong\Nk{\max\{\alpha,\beta+1\}}$.
\end{enumerate}
\end{lemma}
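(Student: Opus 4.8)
The plan is to derive each of the seven items of Lemma~\ref{l:N<k>:properties} from Proposition~\ref{p:Na*Nb=Nb} together with cartesian closedness of $\QCBZ$ and the homeomorphisms $\calN^2\cong\calN$, $\calN^\omega\cong\calN$ recalled in Section~\ref{sub:topspaces}. The overall strategy is bookkeeping with ordinals: reduce each product or exponential to a normal form $\Nk{\gamma}$ by repeatedly absorbing smaller factors into the largest one via Proposition~\ref{p:Na*Nb=Nb}.

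First I would prove~\eqref{e:Na^omega=Na}: $(\Nk{0})^\omega=\omega^\omega=\calN=\Nk{1}$ by definition, while for $1\le\alpha$ this is exactly Claim~\eqref{e:Na^N=Na} inside the proof of Proposition~\ref{p:Na*Nb=Nb} (with $\max\{1,\alpha\}=\alpha$). Next~\eqref{e:N<k>retract}: to see $\Nk{\alpha}$ is a retract of $\Nk{\beta}$ when $\alpha<\beta$, use Proposition~\ref{p:Na*Nb=Nb} to get $\Nk{\beta}\cong\Nk{\alpha}\times\Nk{\beta}$, and take the section $x\mapsto(x,z_0)$ for a fixed $z_0\in\Nk{\beta}$ and the retraction $\pi_1$; these are continuous by cartesian closedness. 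For $\omega\times\Nk{\beta}$ and $\calN\times\Nk{\beta}$, note $\omega\times\Nk{\beta}\cong\Nk{\beta}$ and $\calN\times\Nk{\beta}\cong\Nk{\beta}$ again by Proposition~\ref{p:Na*Nb=Nb} (taking $\alpha=0$, resp.\ $\alpha=1\le\beta$; the case $\beta=0$ being $\calN\times\omega\cong\calN$ separately), so they are even homeomorphic to $\Nk{\beta}$, hence trivially retracts.

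For~\eqref{e:a<=b} I would argue by a short induction on $\beta$. For $\beta=0$ the product is a single factor $\omega$. For $\beta=\beta'+1$, split off the last factor: $\prod_{\alpha\le\beta}\Nk{\alpha}\cong\big(\prod_{\alpha\le\beta'}\Nk{\alpha}\big)\times\Nk{\beta}\cong\Nk{\beta'}\times\Nk{\beta}\cong\Nk{\beta}$ using the induction hypothesis and Proposition~\ref{p:Na*Nb=Nb}. For limit $\beta=\lambda$, the product over $\alpha\le\lambda$ is $\big(\prod_{\alpha<\lambda}\Nk{\alpha}\big)\times\Nk{\lambda}=\Nk{\lambda}\times\Nk{\lambda}\cong\Nk{\lambda}$ by the definition of $\Nk{\lambda}$ and Claim~\eqref{e:Na*Na=Na}. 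Item~\eqref{e:a<g<=b} is the same computation restricted to the index range $\alpha<\gamma\le\beta$, which has the same supremum $\beta$; alternatively it follows from~\eqref{e:a<=b} by discarding the finitely-or-countably-many factors below $\alpha+1$ via repeated absorption. For~\eqref{e:b0<b1<b2}, put $\mu=\sup_k\beta_k$: if the sup is attained, say $\beta_n=\mu$, pull $\Nk{\mu}$ out and absorb every other (smaller) factor into it using Proposition~\ref{p:Na*Nb=Nb} and $\Nk{\mu}\cong(\Nk{\mu})^\omega$ from~\eqref{e:Na^omega=Na}; if $\mu$ is a limit not attained, group the increasing factors to reconstruct $\Nk{\mu}=\prod_{\alpha<\mu}\Nk{\alpha}$, inserting the missing intermediate types as retracts/factors, again absorbing. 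Item~\eqref{e:a0:a1:a2} follows from~\eqref{e:b0<b1<b2} after replacing each $\alpha_i$ by $\max\{1,\alpha_i\}$ (legitimate since $\Nk{0}=\omega$ can be absorbed) and passing to a cofinal strictly increasing subsequence, noting repeated values are harmless because $\Nk{\gamma}\times\Nk{\gamma}\cong\Nk{\gamma}$.

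Finally~\eqref{e:Na^Nb}: write $m=\max\{\alpha,\beta+1\}$. If $\alpha=0$ this is $\omega^{\Nk{\beta}}=\Nk{\beta+1}$ by definition, and $m=\beta+1$. If $\alpha=\beta'+1$ is a successor with $\alpha\ge\beta+1$, one uncurries: $\Nk{\alpha}^{\Nk{\beta}}=(\omega^{\Nk{\beta'}})^{\Nk{\beta}}\cong\omega^{\Nk{\beta'}\times\Nk{\beta}}$, and since $\beta\le\beta'$ we get $\Nk{\beta'}\times\Nk{\beta}\cong\Nk{\beta'}$ by Proposition~\ref{p:Na*Nb=Nb}, hence $\cong\omega^{\Nk{\beta'}}=\Nk{\alpha}=\Nk{m}$; if instead $\beta+1>\alpha$, then $\Nk{\beta'}\times\Nk{\beta}\cong\Nk{\beta}$, giving $\cong\omega^{\Nk{\beta}}=\Nk{\beta+1}=\Nk{m}$. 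If $\alpha=\lambda$ is a limit, distribute the exponential over the product: $\Nk{\lambda}^{\Nk{\beta}}=\big(\prod_{\gamma<\lambda}\Nk{\gamma}\big)^{\Nk{\beta}}\cong\prod_{\gamma<\lambda}\Nk{\gamma}^{\Nk{\beta}}\cong\prod_{\gamma<\lambda}\Nk{\max\{\gamma,\beta+1\}}$ by the successor/base cases just handled, and this last countable product has type-sup equal to $\lambda$ (since $\lambda>\beta+1$ when $\lambda$ is a limit $>\beta$, and the $\gamma$'s are cofinal in $\lambda$), so it is $\cong\Nk{\lambda}=\Nk{m}$ by~\eqref{e:a0:a1:a2}; the remaining subcase $\lambda\le\beta$ forces $m=\beta+1$ and $\prod_{\gamma<\lambda}\Nk{\beta+1}\cong\Nk{\beta+1}$ by~\eqref{e:Na^omega=Na}.

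The routine part is the ordinal bookkeeping; the one genuinely delicate point is item~\eqref{e:b0<b1<b2} in the case where the supremum $\mu$ is a limit ordinal \emph{not} among the $\beta_k$, since then one cannot simply factor out a copy of $\Nk{\mu}$, and one must instead show that the coarsely-spaced increasing family $\{\Nk{\beta_k}\}$ has the same product as the full family $\{\Nk{\alpha}\}_{\alpha<\mu}$ defining $\Nk{\mu}$ — this is where Proposition~\ref{p:Na*Nb=Nb} (absorbing each $\Nk{\alpha}$ with $\beta_k\le\alpha<\beta_{k+1}$ into $\Nk{\beta_{k+1}}$) and~\eqref{e:Na^omega=Na} (to create the extra copies needed) do the real work. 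Everything else is a direct unwinding of definitions together with cartesian closedness of $\QCBZ$.
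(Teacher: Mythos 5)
Your proposal follows essentially the same route as the paper's proof: everything is reduced to Proposition~\ref{p:Na*Nb=Nb} together with cartesian closedness and regrouping of countable products, item (2) is quoted from the claim $(\Nk{\alpha})^\omega\cong\Nk{\max\{1,\alpha\}}$ established inside that proposition's proof, and items (3)--(5) are the same inductions and block decompositions the paper uses. Two steps, however, do not work as literally written and need the repairs that the paper builds in.

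First, in item (6), when the supremum $\alpha=\sup\{1,\alpha_i\mid i\in\omega\}$ is attained (e.g.\ the $\alpha_i$ are eventually constant) there is no strictly increasing cofinal subsequence to pass to, and binary idempotence $\Nk{\gamma}\times\Nk{\gamma}\cong\Nk{\gamma}$ alone does not absorb infinitely many repeated factors; what is needed is the $\omega$-power absorption of item (2): pull out one copy of $\Nk{\alpha}$, blow it up to $(\Nk{\alpha})^\omega$, and absorb each $\Nk{\alpha_i}$ into one copy of $\Nk{\alpha}$ via Proposition~\ref{p:Na*Nb=Nb}, so that $\prod_{i\in\omega}\Nk{\alpha_i}\cong\prod_{i\in\omega}(\Nk{\alpha_i}\times\Nk{\alpha})\cong\Nk{\alpha}$. (You describe exactly this manoeuvre under item (5), where the attained case is actually vacuous because the $\beta_k$ are strictly increasing, but you do not invoke it where it is needed, in (6).) Second, in item (7) for limit $\alpha=\lambda$ you distribute the exponential over $\prod_{\gamma<\lambda}\Nk{\gamma}$ and then appeal to ``the successor/base cases just handled''; but that product contains factors $\Nk{\gamma}^{\Nk{\beta}}$ with $\gamma$ itself a limit ordinal, which those cases do not cover. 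Either organize the whole item as a transfinite induction on $\alpha$, so that these factors fall under the induction hypothesis, or do as the paper does: first replace $\Nk{\lambda}$ by $\prod_{i}\Nk{\alpha_i}$ for a cofinal sequence of \emph{non-limit} ordinals $\alpha_i<\lambda$ (using item (5)), so that only base and successor instances of (7) are ever invoked. With these two routine repairs your argument is correct and coincides with the paper's.
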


\begin{proof}
\begin{enumerate}
\item 
  This follows immediately from Proposition~\ref{p:Na*Nb=Nb}. 
\item 
 This has been shown inside the proof of Proposition~\ref{p:Na*Nb=Nb}. 
\item
 We proceed by induction, the case $\beta=0$ being trivial. 
 If $\beta=\gamma+1$ is successor then we have
 \[
   \Tprod_{\alpha\leq\beta}\Nk{\alpha}
   \cong\Nk{\beta}\times \Tprod_{\alpha\leq\gamma}\Nk{\alpha}
   \cong\Nk{\beta}\times \Nk{\gamma}\cong\Nk{\beta}.
 \]  
 If $\beta$ is limit then we have
 \[ \Tprod_{\alpha\leq\beta}\Nk{\alpha}
   \cong\Nk{\beta}\times \Tprod_{\alpha<\beta}\Nk{\alpha}
   \cong\Nk{\beta}\times \Nk{\beta}\cong\Nk{\beta}\,.
 \]  
\item
 By induction on $\beta$, the case $\beta=\alpha+1$ being trivial. 
 If $\beta=\gamma+1>\alpha+1$ is successor then we have
 \[
   \Tprod_{\alpha<\delta\leq\beta}\Nk{\delta}
   \cong\Nk{\beta}\times \Tprod_{\alpha<\delta\leq\gamma}\Nk{\delta}
   \cong\Nk{\beta}\times \Nk{\gamma}\cong\Nk{\beta}.
 \]  
 If $\beta$ is limit then we have by Claim~\eqref{e:a<=b}
 \begin{multline*}
  \Tprod_{\alpha<\delta\leq\beta}\Nk{\delta}
  \cong \Nk{\beta}\times \Tprod_{\alpha<\delta<\beta}\Nk{\delta}
  \cong \Nk{\beta}\times \Nk{\alpha} \times \Tprod_{\alpha<\delta<\beta}\Nk{\delta}
  \\
  \cong\Nk{\beta}\times \Tprod_{\delta\leq\alpha}\Nk{\delta}\times \Tprod_{\alpha<\delta<\beta}\Nk{\delta}
  \cong\Nk{\beta}\times \Nk{\beta}
  \cong\Nk{\beta}. \qquad
 \end{multline*}
\item
 Clearly, $\beta:=\sup\{\beta_k\mid k\in \omega\}$ is a limit ordinal.
 By Claims \eqref{e:a<=b} and \eqref{e:a<g<=b} we get
 \[
   \Nk{\beta}
   \cong \Tprod_{\alpha<\beta} \Nk{\beta}
   \cong\Tprod_{\alpha\leq\beta_0}\Nk{\alpha}\times \Tprod_{\beta_0<\alpha\leq\beta_1}\Nk{\alpha}\times \cdots 
   \cong\Nk{\beta_0}\times \Nk{\beta_1}\times\cdots=\Tprod_k\Nk{\beta_k}.
  \] 
\item
 If $\alpha_i=0$ for all $i$ then $\Tprod_i \Nk{\alpha_i}=\Nk{1}$. 
 So let $\alpha_i>0$ for some $i\in \omega$. 
 First we assume that there is some $j$ with $\alpha_j=\alpha:=\sup\{1,\alpha_i\mid i\in \omega\}$.
 From Proposition~\ref{p:Na*Nb=Nb} and Claim~\eqref{e:Na^omega=Na} we get
 \begin{multline*}
   \Tprod_{i\in \omega}\Nk{\alpha_i} 
   \cong \Tprod_{i \neq j}\Nk{\alpha_i}  \times \Nk{\alpha}
   \cong \Tprod_{i \in \omega}\Nk{\alpha_i}  \times (\Nk{\alpha})^\omega 
   \\
   \cong \Tprod_{i \in \omega}\Nk{\alpha_i} \times \Tprod_{i \in \omega}\Nk{\alpha} 
   \cong \Tprod_{i \in \omega}(\Nk{\alpha_i} \times \Nk{\alpha})
   \cong \Tprod_{i \in \omega} \Nk{\alpha} 
   \cong \Nk{\alpha} \,.
 \end{multline*}

 It remains to consider the case when $\alpha_j<\alpha$ for all $j\in \omega$ (so, in particular, $\alpha$ is limit). 
 Choose a sequence $i_0<i_1<\cdots$ of indices such that $\alpha_{i_0}<\alpha_{i_1}<\cdots$ 
 and $\sup\{\beta_k\mid k\in \omega\}=\alpha$ where $\beta_k:=\alpha_{i_k}$ for each $k\in \omega$. 
 Claims~\eqref{e:a<=b} and~\eqref{e:a<g<=b} yield us
 \[ 
  \Tprod\{\Nk{\alpha_i}\mid \alpha_i\leq\beta_0\} \cong \Nk{\beta_0} 
  \;\;\text{and}\;\;
   \Tprod\{\Nk{\alpha_i}\mid \beta_{j-1}<\alpha_i\leq\beta_j\} \cong \Nk{\beta_j},
 \] 
 for all $j \geq$. 
 From Claim \eqref{e:b0<b1<b2} we deduce
 \[
  \Nk{\alpha}
  \cong\Tprod_k\Nk{\beta_k}
  \cong \Tprod\{\Nk{\alpha_i}\mid \alpha_i\leq\beta_0\}\times \Tprod\{\Nk{\alpha_i}\mid \beta_0<\alpha_i\leq\beta_1\} \times\cdots 
  \cong\Tprod_i\Nk{\alpha_i}.
 \]
\item
 If $\alpha=0$, then $\Nk{\alpha}^{\Nk{\beta}}=\omega^{\Nk{\beta}}=\Nk{\beta+1}$.
 If $\alpha=\gamma+1$ is successor then we have by Proposition~\ref{p:Na*Nb=Nb}
 \[  
  \Nk{\alpha}^{\Nk{\beta}}
  =(\omega^{\Nk{\gamma}})^{\Nk{\beta}} 
  \cong\omega^{\Nk{\gamma}\times\Nk{\beta}}
  \cong \omega^{\Nk{\max\{\gamma,\beta\}}}=\Nk{\max\{\alpha,\beta+1\}}.
 \]  
 Now let $\alpha>0$ be a limit ordinal.
 First consider the case $\beta<\alpha$. 
 There are non-limit ordinals $\alpha_i$ such that $\beta<\alpha_0<\alpha_1<\cdots$ and $\sup\{\alpha_i\mid i\in \omega\}=\alpha$. 
 From the result for non-limit ordinals and Claim~\eqref{e:b0<b1<b2} we deduce
 \[
  \Nk{\alpha}^{\Nk{\beta}}
  \cong(\Tprod_i\Nk{\alpha_i})^{\Nk{\beta}} 
  \cong\Tprod_i(\Nk{\alpha_i}^{\Nk{\beta}}) 
  \cong\Tprod_i\Nk{\alpha_i}\cong\Nk{\alpha}.
 \] 
 Let now $\alpha\leq\beta$.
 Then there are non-limit ordinals $\alpha_i$ such that $\alpha_0<\alpha_1<\cdots$ and $\sup\{\alpha_i\mid i\in \omega\}=\alpha$.
 We get by the result for non-limit ordinals and by Claims~\eqref{e:Na^omega=Na} and~\eqref{e:b0<b1<b2}
 \[
  \Nk{\alpha}^{\Nk{\beta}}
  \cong(\Tprod_i\Nk{\alpha_i})^{\Nk{\beta}} 
  \cong\Tprod_i(\Nk{\alpha_i}^{\Nk{\beta}}) 
  \cong\Tprod_i\Nk{\beta+1}\cong\Nk{\beta+1}.
 \]  
\end{enumerate}
\end{proof}


\section{The relationship of the continuous functionals to the hyperprojective hierarchies}
\label{sec:KKCF:HP}

In this section we investigate the relationship of the continuous functionals 
to both the hyperprojective hierarchy of subsets of the Baire space $\calN$
and the hyperprojective hierarchy of qcb$_0$-spaces.

The first result generalizes the characterization of projective subsets of $\calN$
with the help of Kleene-Kreisel continuous functionals in \cite[Theorem 7.6]{scs13}
to hyperprojective subsets.

\begin{theorem}\label{charluz}
 Let $\alpha$ be a non-zero countable ordinal and $B$ a non-empty subset of $\calN$.
 Then $B\in\bfSig^1_\alpha(\calN)$ iff
 there is a continuous function $f\colon \Nk{\alpha}\to \calN$
 with $\rng(f)=B$.
\end{theorem}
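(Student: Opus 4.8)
The plan is to prove both directions of the equivalence, the ``only if'' direction being the substantial one; throughout, $\delta_\alpha\colon D_\alpha\to\Nk{\alpha}$ denotes the canonical admissible representation.

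\textbf{The ``if'' direction.} First establish, by transfinite induction on $\alpha$ and the inductive definition of $\delta_\alpha$, that $D_\alpha:=\dom(\delta_\alpha)\in\bfSig^1_\alpha(\calN)$. For $\alpha=1$ this is clear since $D_1=\calN$. For $\alpha+1$ one unwinds the definition of $\delta_0^{\delta_\alpha}$ (Proposition~\ref{p:fspace:rep}): $D_{\alpha+1}$ is definable by a $\forall^\calN$ over conditions of complexity $\bfPi^1_\alpha$ and $\bfPi^0_2$ (using $\overline{D_\alpha},\overline{\EQ(\delta_\alpha)}\in\bfPi^1_\alpha$, the latter because $\Nk{\alpha}$ is Hausdorff), so $D_{\alpha+1}\in\bfPi^1_\alpha\subseteq\bfSig^1_{\alpha+1}$ by Proposition~\ref{p:hpro1}. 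For a limit $\lambda$ with enumeration $\lambda=\{\alpha_i\}_i$ one gets $D_\lambda=\bigcap_i\pi_i^{-1}(D_{\alpha_i})\in(\bfSig^1_{<\lambda})_\delta\subseteq\bfSig^1_\lambda$. Now, given continuous $f\colon\Nk{\alpha}\to\calN$, the map $f\delta_\alpha$ is continuous on the subspace $D_\alpha$ of $\calN$, hence by a standard extension theorem extends to a partial continuous $h\colon\subseteq\calN\to\calN$ with $\dom(h)\in\bfPi^0_2$. Then $\rng(f)=h[D_\alpha]=\exists^\calN\big(\graph(h)\cap(D_\alpha\times\calN)\big)$, which lies in $\bfSig^1_\alpha(\calN)$ since $\graph(h)\in\bfPi^0_2=\bfPi^1_0\subseteq\bfSig^1_\alpha$, $D_\alpha\times\calN\in\bfSig^1_\alpha(\calN^2)$, and $\bfSig^1_\alpha$ is closed under finite intersection and projection along $\calN$ (Proposition~\ref{p:hpro1}(3)).

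\textbf{The ``only if'' direction: base and successor steps.} Proceed by transfinite induction on $\alpha\geq1$. For $\alpha=1$ this is the classical characterisation of nonempty analytic sets as continuous images of $\calN=\Nk{1}$. Let $\alpha=\beta+1\geq2$ and write $B=\exists^\calN(\overline A)$ with $A\in\bfSig^1_\beta(\calN^2)$; identifying $\calN^2$ with $\calN$, if $A=\emptyset$ then $B=\calN$ and we compose with a retraction $\Nk{\alpha}\twoheadrightarrow\Nk{1}$ (Lemma~\ref{l:N<k>:properties}(1)), otherwise the induction hypothesis yields continuous $g\colon\Nk{\beta}\to\calN$ with $\rng(g)=A$. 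The key point is the equivalence, for $v\in\calN$,
\[
  v\notin\rng(g)\ \Longleftrightarrow\ \exists G\in\omega^{\Nk{\beta}}=\Nk{\beta+1}\ \forall w\in\Nk{\beta}\ \big[\,g(w)(G(w))\neq v(G(w))\,\big],
\]
where ``$\Leftarrow$'' is immediate and for ``$\Rightarrow$'' one takes $G(w):=\min\{n\mid g(w)(n)\neq v(n)\}$, which is a total map $\Nk{\beta}\to\omega$ (as $g(w)\neq v$) and is locally constant, hence continuous, by continuity of $g$. Hence $B=\{\,y\mid\exists x\,\exists G\in\Nk{\beta+1}\,\forall w\,[\,g(w)(G(w))\neq\langle x,y\rangle(G(w))\,]\,\}$, and the set $W:=\{(x,y,G)\in\calN^2\times\Nk{\beta+1}\mid\forall w\,[\,g(w)(G(w))\neq\langle x,y\rangle(G(w))\,]\}$ is a closed subspace of $\calN^2\times\Nk{\beta+1}\cong\Nk{\beta+1}$ (Proposition~\ref{p:Na*Nb=Nb}) whose projection onto the $y$-coordinate has image exactly $B$.

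\textbf{The ``only if'' direction: limit step and the reduction.} For $\alpha=\lambda$ limit, write $B=\exists^\calN(\bigcap_nC_n)$ with $C_n\in\bfSig^1_{\beta_n}(\calN^2)$, $\beta_n<\lambda$; if $\sup_n\beta_n<\lambda$ then $B$ lies at a level $<\lambda$ (Proposition~\ref{p:hpro1}), and we conclude by the induction hypothesis together with a retraction $\Nk{\lambda}\twoheadrightarrow\Nk{\gamma'}$ for that level $\gamma'$. Otherwise $\sup_n\beta_n=\lambda$, and since $B\neq\emptyset$ all $C_n\neq\emptyset$, so the induction hypothesis gives continuous $g_n\colon\Nk{\beta_n}\to\calN\cong\calN^2$ with $\rng(g_n)=C_n$; by countable choice, $B=\{\,y\mid\exists x\,\exists(w_n)_n\in\Tprod_n\Nk{\beta_n}\,\forall n\,[\,g_n(w_n)=\langle x,y\rangle\,]\,\}$, and $W:=\{(x,y,(w_n))\mid\forall n\,[\,g_n(w_n)=\langle x,y\rangle\,]\}$ is a closed subspace of $\calN^2\times\Tprod_n\Nk{\beta_n}\cong\Nk{\lambda}$ (Lemma~\ref{l:N<k>:properties}(6) and Proposition~\ref{p:Na*Nb=Nb}) projecting onto $B$. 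Thus in every case $B$ has been realised as the image, under a coordinate projection, of a nonempty closed subspace $W$ of a space homeomorphic to $\Nk{\alpha}$.

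\textbf{The main obstacle.} It remains to turn ``image of a nonempty closed subspace of $\Nk{\alpha}$ under a continuous map'' into ``range of a continuous map defined on all of $\Nk{\alpha}$'', and I expect this to be the hard part. For $\alpha=1$ it is exactly the classical fact that a nonempty closed subset of $\calN$ is a continuous retract of $\calN$: precomposing the projection with such a retraction yields the desired $f$. For general $\alpha$ one needs the analogue for $\Nk{\alpha}$ — that every nonempty closed subspace of $\Nk{\alpha}$ is a continuous retract of $\Nk{\alpha}$ (equivalently, that a continuous $\calN$-valued map on such a subspace can be reparametrised over all of $\Nk{\alpha}$ without changing its range) — and establishing this uniformly in $\alpha$, presumably by a parallel induction exploiting the cartesian closed structure and the identities $\Nk{\alpha+1}=\omega^{\Nk{\alpha}}$ and $\Nk{\lambda}=\prod_{\beta<\lambda}\Nk{\beta}$, is where the real work lies. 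A secondary point needing care is the complexity bookkeeping for $D_\alpha$ in the ``if'' direction, in particular the limit case $D_\lambda\in(\bfSig^1_{<\lambda})_\delta\subseteq\bfSig^1_\lambda$.
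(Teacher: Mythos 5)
Your ``if'' direction is essentially sound and consistent with the paper's (which handles limit levels via $D_\lambda\in(\bfSig^1_{<\lambda})_\delta$ and refers to \cite{scs13} for the rest). The ``only if'' direction, however, has a genuine gap exactly at the point you yourself flag as ``the main obstacle'', and that point is the heart of the theorem rather than a deferrable technicality. Your construction only exhibits $B$ as the image, under a projection, of a nonempty closed subspace $W$ of a space homeomorphic to $\Nk{\alpha}$; you then need that such a $W$ is a continuous retract (or at least a continuous image) of $\Nk{\alpha}$. For $\alpha=1$ this is the classical retraction property of closed subsets of the zero-dimensional space $\calN$, but for $\alpha\geq 2$ the spaces $\Nk{\alpha}$ are neither zero-dimensional nor regular (see the remarks after Definition~\ref{func}), so that argument does not generalize, and no retraction or surjection theorem for arbitrary closed subspaces of $\Nk{\alpha}$ is proved in the paper or available off the shelf. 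Moreover, the natural attempt — pull $W$ back along $\delta_\alpha$ to a subset of $\calN$ of complexity $\bfPi^1_\beta$ and cover that set — requires precisely a statement of the kind being proved (a nonempty $\bfPi^1_\beta$ set as the range of a total continuous map on $\Nk{\beta+1}$), which is not covered by your induction hypothesis, so it cannot be waved through without circularity. Since every successor and limit case of your induction funnels through this unproved claim, the proof is incomplete at its crucial step.

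The paper avoids closed subspaces altogether by means of the complementation result, Proposition~\ref{p:key} (proved in \cite{scs13}): for continuous $f\colon Y\to\calN$ with $\rng(f)\neq\calN$ there is a \emph{total} continuous $g\colon\calN\times\omega^Y\to\calN$ with $\rng(g)=\calN\setminus\rng(f)$. Successor levels then go through as in \cite{scs13}, using $\calN\times\omega^{\Nk{\alpha}}\cong\Nk{1}\times\Nk{\alpha+1}\cong\Nk{\alpha+1}$ and composing with a projection; at a limit $\lambda$ the paper writes $B=g(A)$ with $A=\bigcap_k A_k$ and $\overline{A_k}\in\bfSig^1_{\alpha(k)}$, obtains $\overline{A}$ as a range over $X=\bigoplus_k\Nk{\alpha(k)}$ from the induction hypothesis, applies Proposition~\ref{p:key} to get $A$ as the range of a total continuous map on $\calN\times\omega^X$, and then uses only retractions among the spaces $\Nk{\beta}$ themselves (Proposition~\ref{p:Na*Nb=Nb}, Lemma~\ref{l:N<k>:properties}) to replace $\calN\times\omega^X$ by $\Nk{\lambda}$. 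Your witness choice $G(w)=\min\{n\mid g(w)(n)\neq v(n)\}$ is exactly the idea underlying Proposition~\ref{p:key}, but to close the argument it must be upgraded from a closed witness set to a total continuous map on $\calN\times\Nk{\alpha}$ whose range is the complement of $\rng(g)$; supplying that construction (or invoking Proposition~\ref{p:key}) is what your proposal is missing.
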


The theorem is based on the following proposition about complementation in $\calN$, 
which has been shown in \cite{scs13}. 

\begin{proposition}\label{p:key}
 Let $Y$ be a qcb$_0$-space and let $f\colon Y\to\calN$ be
 a continuous function with $\rng(f) \neq \calN$.
 Then there exists a continuous function $g\colon \calN \times \omega^Y\to\calN$
 with $\rng(g)=\calN\setminus \rng(f)$.
\end{proposition}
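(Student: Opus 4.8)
The plan is to work with realizers and to characterise the complement of $\rng(f)$ by an ``escape-level'' function living in $\omega^Y$. First I would fix an admissible representation $\rho\colon\subseteq\calN\to Y$ with $\dom(\rho)\in\bfPi^0_2(\calN)$ and a continuous realizer $F\colon\dom(\rho)\to\calN$ of $f$, so that $\rng(f)=F[\dom(\rho)]$; by Proposition~\ref{p:fspace:rep} a point $h\in\omega^Y$ is named by some $p$ with $h(\rho(q))=(u_p(q))(0)$ for $q\in\dom(\rho)$. The conceptual heart is the equivalence
\[
  w\notin\rng(f)\iff\exists h\in\omega^Y\ \forall y\in Y\colon f(y)\bigl(h(y)\bigr)\neq w\bigl(h(y)\bigr).
\]
The implication from right to left is immediate, since $w=f(y_0)$ would violate the condition at $y=y_0$. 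For the converse one uses the canonical escape level $h_w(y):=\mu n.\,f(y)(n)\neq w(n)$; this is well defined when $w\notin\rng(f)$, and it is a genuine element of $\omega^Y$ because each coordinate evaluation $y\mapsto f(y)(n)$ is continuous into the discrete space $\omega$, so $h_w$ is locally constant, hence continuous.

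Next I would fix, using $\rng(f)\neq\calN$, a base point $c\in\calN\setminus\rng(f)$, and define $g\colon\calN\times\omega^Y\to\calN$ so that it copies its first argument $z$ as long as the second argument $h$ keeps its escape promises, and otherwise drifts to the safe value $c$. Concretely, $g$ reads longer and longer finite prefixes $s\in\omega^*$ of candidate names and of a name of $h$; while no prefix is found that is consistent with $\dom(\rho)$-membership up to its length and on which the realizer of $h$ claims an escape coordinate $m$ at which $F$ nonetheless agrees with $z$, the output keeps extending $z$; as soon as such a broken promise is observed, the remaining output is switched to the corresponding tail of $c$. Since every event monitored here --- a prefix meeting the open approximations of $\dom(\rho)$, the realizer of $h$ producing a value, and $F$ agreeing with $z$ at a determined coordinate --- is positively observable, $g$ is continuous.

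It then remains to verify $\rng(g)=\calN\setminus\rng(f)$. For the inclusion into the complement: if no broken promise is ever observed then $h$ is a valid escape witness for $z$, whence $z\notin\rng(f)$ by the equivalence above and $g(z,h)=z$; if a promise is broken then $g(z,h)=c\notin\rng(f)$. For surjectivity onto the complement I would feed the pair $(w,h_w)$ for an arbitrary $w\notin\rng(f)$: the escape function $h_w$ genuinely escapes along every $\dom(\rho)$-name, so no violation is observed and $g(w,h_w)=w$.

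The main obstacle --- and the delicate point of the whole argument --- is exactly this reconciliation of continuity with the ``always in the complement'' guarantee, and the two naive strategies both fail. Forcing disagreement coordinatewise would require, at each coordinate $n$, avoiding the forbidden set $V_n=\{f(y)(n)\mid h(y)=n\}$; but $V_n$ is the projection along $Y$ of a closed condition and is only analytic in a name of $h$, hence not continuously available. Dually, the plain ``verify $h$ then output $z$, else output $c$'' map is discontinuous, because the violating event ``$\exists y\in Y$ with $f(y)(h(y))=z(h(y))$'' carries the $\bfPi^0_2$ constraint $y\in\rng(\rho)$ inside an existential quantifier and is therefore not open. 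The resolution is to fold the $\bfPi^0_2$-membership in $\dom(\rho)$ into the tree of finite name-prefixes that $g$ searches, so that a broken promise becomes an honestly open event along branches; establishing that this tree faithfully encodes $\rng(f)$ as the set of its good infinite branches, and that broken promises never occur for the canonical witness $h_w$, is where the real work lies.
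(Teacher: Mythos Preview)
The paper does not prove this proposition; it is quoted from \cite{scs13}. So there is no argument in the paper to compare against, and I can only assess your proposal on its own merits.

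Your escape-function characterisation
\[
  w\notin\rng(f)\iff\exists h\in\omega^Y\ \forall y\in Y.\ f(y)\bigl(h(y)\bigr)\neq w\bigl(h(y)\bigr)
\]
is correct, and the minimal witness $h_w(y)=\mu n.\,f(y)(n)\neq w(n)$ is indeed a continuous element of $\omega^Y$. The gap is in your construction of $g$. You let $g$ copy $z$ while no broken promise is seen and, once one is observed, ``switch to the corresponding tail of $c$''. By your own description the output is then the hybrid $z(0)\dotsc z(k-1)\,c(k)\,c(k+1)\dotsc$, where $k$ is the number of symbols already emitted at the moment of detection --- yet two sentences later you assert ``if a promise is broken then $g(z,h)=c$''. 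These are not the same thing, and nothing guarantees the hybrid lies outside $\rng(f)$. Indeed $\calN\setminus\rng(f)$ may be the singleton $\{c\}$ (take $Y=\calN\setminus\{c\}$ with $f$ the inclusion), so the moment a symbol $z(0)\neq c(0)$ has been written the output is forced into $\rng(f)$. Your final paragraph argues carefully that \emph{detecting} a broken promise can be made an open event; it never addresses why the value produced after detection avoids $\rng(f)$, and no pacing of output against search repairs this, since a genuine violation at a fixed coordinate $m$ may require arbitrarily long name-prefixes to witness.

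There is a second, less serious, problem with your setup: you assume $Y$ admits an admissible representation with $\bfPi^0_2$ domain. This is false in general --- by the proof of Theorem~\ref{th:luzrep}, $\Nk{2}$ has no continuous representation whose domain is even $\bfSig^1_1$ --- and the proposition is applied in the paper precisely to spaces of the form $\bigoplus_k\Nk{\alpha(k)}$. This particular point can be circumvented: the broken-promise set $\{(z,h):\exists y.\,f(y)(h(y))=z(h(y))\}$ is open in $\calN\times\omega^Y$ directly by cartesian closedness of $\QCBZ$, so detection can be organised at the level of names of $h$ without ever choosing a representation of $Y$. But the hybrid problem above remains.
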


\begin{proof*}{Theorem~\ref{charluz}}
 We proceed by induction. The case of non-limit levels are considered precisely 
 as for the finite ordinals in the proof of Proposition 7.5 and Theorem 7.6 in \cite{scs13}. 
 Let now $\lambda$ be a limit ordinal and $f\colon \Nk{\lambda}\to \calN$   a continuous function
 with $\rng(f)=B$. 
 Then $\rng(f\circ\delta_\lambda)=B$, where $\delta_\lambda:D_\lambda\to\Nk{\lambda}$ is 
 the canonical admissible representation of $\Nk{\lambda}$. 
 Since $D_\lambda$ is in $(\bfSig^1_{<\lambda})_\delta$ by Proposition \ref{p:prod:coprod}(2), 
 $B\in\bfSig^1_\lambda(\calN)$ by Proposition \ref{p:hpro1}(3).
 
For the other direction, let $B\in\bfSig^1_\lambda(\calN)$.
Then $B=g(A)$ for some $A\in(\bfSig^1_{<\lambda})_\delta$ and some continuous function 
$g\colon \calN \to \calN$. 
Since $\bfSig^1_{<\lambda}=\bfPi^1_{<\lambda}$ by Proposition~\ref{p:hpro1}, 
there are sets $A_0,A_1,\dotsc \in \bfPi^1_{<\lambda}$ 
such that $A=\bigcap_{k\in \omega}A_k$.
Then $\overline{A}=\bigcup_{k\in \omega}\overline{A}_k$, where $\overline{A}=\calN\setminus A$. 
Choose ordinals $\alpha(0),\alpha(1),\dotsc$ below $\lambda$ such that $\overline{A}_k\in\bfSig^1_{\alpha(k)}$ 
for each $k\in \omega$.
By the induction hypothesis, there are continuous functions $f_k\colon \Nk{\alpha(k)}\to \calN$ such that 
$\rng(f_k)=\overline{A}_k$ for each $k\in \omega$. 
Then there is a continuous function  $h\colon X\to \calN$ such that $\rng(h)=\overline{A}$,
where $X=\bigoplus_k \Nk{\alpha(k)}$. 
By Proposition \ref{p:key} there is a continuous function  $u\colon\calN\times\omega^X\to \calN$ 
such that $\rng(u)=A$. 
Let $\mathcal{B}$ be the set $\{1\} \cup \{\alpha(k)+1\,|\, k \in \omega\}$. 
Since $\QCBZ$ is cartesian closed and closed under countable products and countable coproducts,
we have
 \[ \calN\times\omega^X \cong 
  \calN \times \prod_{k \in \omega} \omega^{\Nk{\alpha(k)}} \cong
  \Nk{1} \times \prod_{k \in \omega} \Nk{\alpha(k)+1}.
 \]  
By Lemma~\ref{l:N<k>:properties}\eqref{e:N<k>retract} and~\eqref{e:Na^omega=Na}, 
$\Nk{1} \times \prod_{k \in \omega} \Nk{\alpha(k)+1}$
is a continuous retract of $\prod_{\beta \in \mathcal{B}} \Nk{\beta}$.
This set is in turn a continuous retract of $\prod_{\beta <\lambda} \Nk{\beta}=\Nk{\lambda}$.
So there is a continuous surjection $r$ from $\Nk{\lambda}$ onto $\calN \times \omega^X$.
Since $B=g(A)$, the continuous function $f:=g\circ u \circ r$ satisfies $B=\rng(f)$.
\end{proof*} 

Below we make use of the following lemma.

\begin{lemma}\label{fpoint}
\begin{enumerate}
 \item 
 Let $X,Y$ be qcb$_0$-spaces and $f:Y\to Y$ a continuous function without fixed points. Then there is no continuous surjection from $X$ onto $Y^X$.
 \item
 For any qcb$_0$-space $X$, there is no continuous surjection from $X$ onto $\omega^X$.
\end{enumerate}
\end{lemma}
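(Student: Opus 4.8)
This is a Cantor-style diagonalization argument, and the plan is to prove part~(1) first and then obtain part~(2) as the special case $Y=\omega$ with $f$ the successor function (which is a continuous self-map of $\omega$ without fixed points, since $\omega$ carries the discrete topology). So the whole content lies in part~(1).

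For part~(1), suppose toward a contradiction that $s\colon X\to Y^X$ is a continuous surjection. The evaluation map $\mathrm{ev}\colon Y^X\times X\to Y$, $(h,x)\mapsto h(x)$, is continuous by cartesian closedness of $\QCBZ$. Define $d\colon X\to Y$ by $d(x):=f\big(\mathrm{ev}(s(x),x)\big)=f\big(s(x)(x)\big)$; this is continuous as a composition of continuous maps, hence $d\in Y^X$. Since $s$ is onto, pick $x_0\in X$ with $s(x_0)=d$. Evaluating at $x_0$ gives $d(x_0)=s(x_0)(x_0)=d(x_0)$... wait, more carefully: $s(x_0)(x_0)=d(x_0)=f\big(s(x_0)(x_0)\big)$, so the point $y_0:=s(x_0)(x_0)\in Y$ satisfies $y_0=f(y_0)$, contradicting that $f$ has no fixed point. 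This yields part~(1).

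For part~(2): the successor map $n\mapsto n+1$ on $\omega$ is continuous (every map from the discrete space $\omega$ is continuous) and has no fixed point, so part~(1) with $Y=\omega$ immediately gives that there is no continuous surjection from $X$ onto $\omega^X=\Nk{\text{(type of }X)}$-style exponential — i.e., onto $\omega^X$.

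I do not anticipate a genuine obstacle here: the only points requiring care are (i) confirming that the diagonal map $d$ genuinely lands in the $\QCBZ$-exponential $Y^X$, which is exactly what cartesian closedness of $\QCBZ$ (recalled in Subsection~\ref{sub:admiss}) guarantees — continuity of $\mathrm{ev}$ and the fact that continuous maps $X\to Y$ are precisely the points of $Y^X$; and (ii) noting that continuity of $s$ is not even needed for the argument, only surjectivity, though stating it for continuous $s$ is harmless. The argument is essentially the classical Lawvere fixed-point theorem specialized to this category.
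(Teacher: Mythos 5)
Your argument is correct and is essentially the paper's own proof: the same Lawvere/Cantor diagonalization, defining $g(x)=f\big(h(x)(x)\big)$, invoking cartesian closedness of $\QCBZ$ to see $g\in Y^X$, pulling back a preimage to produce a fixed point of $f$, and obtaining part (2) as the case $Y=\omega$ with the successor map. One caveat about your aside (ii): continuity of the surjection $s$ \emph{is} needed, since it is exactly what makes the diagonal $d=f\circ\mathrm{ev}\circ(s,\mathrm{id}_X)$ continuous and hence a point of the exponential $Y^X$; without it the statement is false --- for instance $\omega^{\calN}$ has the cardinality of the continuum, so there is a (non-continuous) surjection from $\calN$ onto $\omega^{\calN}$.
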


\begin{proof}
The second assertion follows from the first one, where $Y=\omega$ and $f(y)=y+1$.
So it suffices to prove the first one. Suppose for a contradiction that $h:X\to Y^X$ is a continuous surjection. Define a function $g:X\to Y$ by $g(x)=f(h(x)(x))$. By cartesian closedness of $\QCBZ$, $g$ is continuous, i.e. $g\in Y^X$. Since $h$ is a surjection, $g=h(a)$ for some $a\in X$. Then $g(a)=f(h(a)(a))=f(g(a))$, hence $g(a)\in Y$ is a fixed point of $f$. A contradiction.
\end{proof}

Finally, we relate the continuous functionals of countable types to the hyperprojective hierarchy of
qcb$_0$-spaces (extending Theorem 7.7 of \cite{scs13}). 
The next result provides the exact estimation of the spaces of
continuous functionals of countable types in the hyperprojective hierarchy of qcb$_0$-spaces. 
On the other hand, the result provides ``natural'' witnesses for the non-collapse property of this hierarchy.

\begin{theorem}\label{th:luzrep}
 For any non-zero countable ordinal $\alpha$,
 $\Nk{\alpha+1} \in \QCBZ(\bfPi^1_\alpha)\setminus \QCBZ(\bfSig^1_\alpha)$. 
 For any countable limit ordinal $\lambda$,
 $\Nk{\lambda} \in \QCBZ((\bfPi^1_{<\lambda})_\delta)\setminus \QCBZ((\bfSig^1_{<\lambda})_\sigma)$. 
 \end{theorem}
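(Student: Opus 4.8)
The plan is to establish the two membership claims and the two non-membership claims separately, using the machinery developed in Sections~\ref{sec:hph:qcb}--\ref{sec:KKCF:HP}. For the \emph{upper bounds}: the canonical admissible representations $\delta_\alpha$ of $\Nk{\alpha}$ were built in Section~\ref{sec:functional} by iterating the exponential and product constructions, so I would first verify by induction on $\alpha$ that $\EQ(\delta_{\alpha+1}) \in \bfPi^1_\alpha(\calN^2)$ and $\EQ(\delta_\lambda) \in (\bfPi^1_{<\lambda})_\delta(\calN^2)$ for limit $\lambda$. The successor step $\Nk{\alpha+1}=\omega^{\Nk{\alpha}}$ is handled by Proposition~\ref{p:exp:QTE} (or \ref{p:exp}): if $\Nk{\alpha} \in \QCBZ(\bfSig^1_\alpha)$ then $\omega^{\Nk{\alpha}} \in \QCBZ(\bfPi^1_\alpha)$, since $\omega \in \QCBZ(\bfPi^1_\alpha)$ trivially; to continue the induction one then uses Proposition~\ref{p:hpro1}(1) to absorb $\bfPi^1_\alpha \subseteq \bfSig^1_{\alpha+1}$ at the next stage. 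The limit step $\Nk{\lambda}=\prod_{\alpha<\lambda}\Nk{\alpha}$ is Proposition~\ref{p:prod:coprod}(2): each $\Nk{\alpha}$ for $\alpha<\lambda$ lies in $\QCBZ(\bfSig^1_{<\lambda})$, hence the product lies in $\QCBZ((\bfSig^1_{<\lambda})_\delta)=\QCBZ((\bfPi^1_{<\lambda})_\delta)$ by Proposition~\ref{p:hpro1}(2). This part is essentially bookkeeping.

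For the \emph{lower bounds}, which are the substantive part, the strategy is to combine the characterization of Theorem~\ref{charluz} with the diagonal argument of Lemma~\ref{fpoint}. Suppose toward a contradiction that $\Nk{\alpha+1} \in \QCBZ(\bfSig^1_\alpha)$; I want to produce a continuous surjection contradicting Lemma~\ref{fpoint}(2). By Proposition~\ref{p:hpro1}(6) (non-collapse in $\calN$) there is a set $B \in \bfSig^1_\alpha(\calN) \setminus \bfPi^1_\alpha(\calN)$, so in particular $B$ is a nonempty proper subset; actually what I need is more delicate. The cleaner route is: if $\delta$ is an admissible representation of $\Nk{\alpha+1}$ with $\EQ(\delta) \in \bfSig^1_\alpha(\calN^2)$, use Proposition~\ref{p:equiv:Gamma-representable} together with the fact that $\Nk{\alpha+1}$ is a continuous retract of itself to get a handle on $\dom(\delta)$ as well; then for every $B \in \bfSig^1_\alpha(\calN)$ one can realize $B$ as the range of a continuous function $\Nk{\alpha+1} \to \calN$ factoring through $\delta$, which by Theorem~\ref{charluz} would force $\bfSig^1_\alpha(\calN) \subseteq \bfSig^1_\alpha(\calN)$ in a way that can be diagonalized against. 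Concretely, I expect the argument to run: from $\Nk{\alpha+1} \in \QCBZ(\bfSig^1_\alpha)$ and the product/exponential identities in Lemma~\ref{l:N<k>:properties}, one deduces $\omega^{\Nk{\alpha}} \cong \Nk{\alpha+1}$ is a continuous retract of $\Nk{\alpha}$ — or rather one shows there would be a continuous surjection $\Nk{\alpha} \twoheadrightarrow \Nk{\alpha+1}$ — and Lemma~\ref{fpoint}(2) forbids exactly this. For the limit case $\Nk{\lambda} \notin \QCBZ((\bfSig^1_{<\lambda})_\sigma)$, one argues similarly: membership in $\QCBZ((\bfSig^1_{<\lambda})_\sigma)$ would, via Proposition~\ref{p:prod:coprod} and the closure properties in Proposition~\ref{p:hpro1}(5), bound the complexity of all $\Nk{\alpha}$, $\alpha<\lambda$, uniformly, and then Theorem~\ref{charluz} would collapse $\bfSig^1_\lambda(\calN)$ into $(\bfSig^1_{<\lambda})_\sigma$, contradicting Proposition~\ref{p:hpro1}(6) combined with the fact (from Remark~\ref{p:rem}(3) and Proposition~\ref{p:hpro1}(5)) that $(\bfSig^1_{<\lambda})_\sigma \subsetneq \bfSig^1_\lambda$.

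The key technical device linking the two halves is Theorem~\ref{charluz}: a set $B\subseteq\calN$ is $\bfSig^1_\alpha$ exactly when it is the range of a continuous $f\colon\Nk{\alpha}\to\calN$. So if $\Nk{\alpha+1}$ were $\bfSig^1_\alpha$-representable with representation $\delta$ having $\dom(\delta)\in\bfSig^1_\alpha$ (obtainable by Proposition~\ref{p:equiv:Gamma-representable}(1), as $\Nk{\alpha+1}$ is Hausdorff), then composing any continuous $g\colon\Nk{\alpha+1}\to\calN$ with $\delta$ and noting $\dom(\delta)$ is a continuous retract situation inside $\Nk{\alpha}$ (via Lemma~\ref{l:N<k>:properties}\eqref{e:N<k>retract}) would show every $\bfSig^1_{\alpha+1}$-set is already $\bfSig^1_\alpha$, contradicting non-collapse. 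I would phrase this as: by Theorem~\ref{charluz} pick a continuous surjection $s\colon\Nk{\alpha+1}\to B$ for a set $B\in\bfSig^1_{\alpha+1}\setminus\bfSig^1_\alpha$; if $\Nk{\alpha+1}$ were a continuous image of $\Nk{\alpha}$ (which $\bfSig^1_\alpha$-representability delivers, since $\delta$ factors through $\calN\cong\Nk{1}$ which retracts off $\Nk{\alpha}$), then $B$ would be a continuous image of $\Nk{\alpha}$, hence $\bfSig^1_\alpha$ by Theorem~\ref{charluz}, a contradiction. The main obstacle will be the bookkeeping that turns ``$\bfSig^1_\alpha$-representable'' into ``continuous image of $\Nk{\alpha}$'': one must check that the domain $\dom(\delta)\in\bfSig^1_\alpha(\calN)$ is itself a continuous retract of (equivalently, continuous image of) $\Nk{\alpha}$, which again is Theorem~\ref{charluz} applied to $\dom(\delta)$, together with the retract identities of Lemma~\ref{l:N<k>:properties}. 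Once that reduction is in place, Lemma~\ref{fpoint}(2) ($X$ never surjects onto $\omega^X$, here $X=\Nk{\alpha}$) delivers the contradiction directly, and the limit case is the analogous argument with Lemma~\ref{fpoint}(1) or with the product-collapse identity $\Nk{\lambda}\cong\Nk{\lambda}\times\Nk{\lambda}$ feeding into a diagonal against $\bigoplus_{\alpha<\lambda}$ of the lower functionals.
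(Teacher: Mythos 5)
Your plan follows the paper's own proof essentially verbatim: the upper bounds are the same induction via Propositions~\ref{p:exp} and~\ref{p:prod:coprod} (with Proposition~\ref{p:hpro1} absorbing the levels), and for the lower bounds your main line is exactly the paper's argument --- use Theorem~\ref{charluz} to obtain a continuous surjection from $\Nk{\alpha}$ (resp.\ from $\bigoplus_k\Nk{\alpha_k}$) onto the domain of the hypothesized representation, compose with the representation, and contradict Lemma~\ref{fpoint}(2) because the target is homeomorphic to $\omega^X$ for the relevant $X$. Two cosmetic slips: the fact that $\dom(\delta)\in\bfSig^1_\alpha(\calN)$ follows from $\EQ(\delta)\in\bfSig^1_\alpha(\calN^2)$ by closure under continuous preimages (pull back along $p\mapsto(p,p)$), not from Proposition~\ref{p:equiv:Gamma-representable}(1), and in the limit case the identity that actually feeds Lemma~\ref{fpoint}(2) is $\Nk{\lambda}\cong\prod_k\Nk{\alpha_k+1}\cong\omega^{\bigoplus_k\Nk{\alpha_k}}$ (from Lemma~\ref{l:N<k>:properties}), not $\Nk{\lambda}\cong\Nk{\lambda}\times\Nk{\lambda}$.
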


\begin{proof}
 We proceed by induction. Obviously, $\Nk{0},\Nk{1} \in\QCBZ(\bfDelta^1_0)$.
 By Proposition \ref{p:exp}, $\Nk{2}\in \QCBZ(\bfPi^1_1)$. 
 
 Let now $\alpha \geq 2$. 
 If $\alpha$ is a successor ordinal, then $\Nk{\alpha+1}\in \QCBZ(\bfPi^1_\alpha)$ 
 by Proposition \ref{p:exp}, because $\Nk{\alpha} \in \QCBZ(\bfPi^1_\alpha) \subseteq \QCBZ(\bfSig^1_{\alpha+1})$
 by the induction hypothesis.
 If $\alpha=\lambda$ is a limit ordinal, we have
 $$\Nk{\lambda}\in \QCBZ((\bfSig^1_{<\lambda})_\delta) 
  = \QCBZ((\bfPi^1_{<\lambda})_\delta) \subseteq \QCBZ(\bfSig^1_\lambda)$$ 
 by the induction hypothesis and Propositions~\ref{p:hpro1}, \ref{p:prod:coprod}.
 Again by Proposition \ref{p:exp} we obtain $\Nk{\lambda+1}\in \QCBZ(\bfPi^1_\lambda)$. This completes the proof of the ``upper bounds''.

Now we turn to the ``lower bounds'' and first show that
$\Nk{\alpha+1} \notin \QCBZ(\bfSig^1_\alpha)$ for any non-zero countable ordinal $\alpha$.
We prove the stronger assertion that $\Nk{\alpha+1}$ has no continuous
representation $\delta$ with $D=\dom(\delta)\in\bfSig^1_\alpha(\calN)$.
Suppose for a contradiction that $\delta$ is such a continuous representation
of $\Nk{\alpha+1}$.
Then there is a continuous surjection $f$ from $\Nk{\alpha}$ onto $D$  by Theorem~\ref{charluz}. Then $\delta f$ is a continuous surjection from $\Nk{\alpha}$ onto $\Nk{\alpha+1}=\omega^{\Nk{\alpha}}$. 
This contradicts  Lemma \ref{fpoint}(2).

It remains to show that  $\Nk{\lambda} \not\in  \QCBZ((\bfSig^1_{<\lambda})_\sigma)$. Again we prove a stronger assertion that there is no continuous surjection $\delta$ from a set $D\in(\bfSig^1_{<\lambda})_\sigma$ onto $\Nk{\lambda}$.
Suppose for a contradiction that $\delta:D\to\Nk{\lambda}$ is such a surjection. Choose non-zero ordinals $\alpha_k$ and non-empty sets $D_k\in\bfSig^1_{\alpha_k}$ such that $\alpha_0<\alpha_1<\cdots$, $sup\{\alpha_k\mid k\in \omega\}=\lambda$, and $D=\bigcup_kD_k$. By Theorem~\ref{charluz}, for any $k\in \omega$ there is a continuous surjection from $\Nk{\alpha_k}$ onto $D_k$. Then there is a continuous surjection from $X:=\bigoplus_k\Nk{\alpha_k}$ onto $D$. Then $\delta f$ is a continuous surjection from $X$ onto $\Nk{\lambda}$. Since by Lemma \ref{l:N<k>:properties} $\Nk{\lambda}\cong \prod_k\Nk{\alpha_{k+1}}\cong\omega^X$, there is a continuous surjection from $X$ onto $\omega^X$. This contradicts  Lemma \ref{fpoint}(2).
\end{proof} 

%
%

\section{Categories of hyperprojective qcb$_0$-spaces}
\label{categ}

In this section we show that the hyperprojective hierarchy of qcb$_0$-spaces
gives rise to a nice cartesian closed category. This is the full
subcategory of the category $\QCBZ$ consisting of the spaces in
$\bigcup_{\alpha<\omega_1}\QCBZ(\bfSig^1_\alpha)$ as objects and all continuous
function between them as morphisms. We denote this category by
$\QCBZ(\mathbf{HP})$ and call its objects \emph{hyperprojective
qcb$_0$-spaces}.

Recall that a category is \emph{countably complete} (resp. \emph{countably co-complete}),
if it is closed under countable limits (resp. co-limits). 
We do not recall here the rather technical notions of limits and co-limits,
but remind the reader that a category is countably complete (resp. co-complete) 
iff it is closed under countable products and equalizers (resp. countable co-products and co-equalizers).
A category is cartesian closed, if it has finite products and admits for any two objects $X,Y$
an exponential $Y^X$ and an evaluation morphism $\mathit{ev}\colon Y^X\times X\to Y$ allowing curry and uncurry.

According to \cite{sch:phd}, the category $\QCBZ$ is cartesian closed, countably complete and countably co-complete. Here we discuss closure properties of some natural subcategories of $\QCBZ$ including $\QCBZ(\mathbf{HP})$ and the category $\QCBZ(\mathbf{P})$ of projective qcb$_0$-spaces from \cite{scs13}.

\begin{theorem}\label{th:QCBZ(HP):is:ccc} 
 The category $\QCBZ(\mathbf{HP})$ of hyperprojective qcb$_0$-spaces is cartesian closed, countably complete and countably co-complete.
 It inherits its exponentials, countable products, countable co-products, equalizers and co-equalizers 
 from $\QCBZ$.
\end{theorem}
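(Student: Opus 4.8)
The plan is to verify that $\QCBZ(\mathbf{HP})$ is closed under each of the relevant constructions inside $\QCBZ$, so that the limits, colimits and exponentials computed in $\QCBZ$ (which exists and has all of them by \cite{sch:phd}) land back in the subcategory and serve as the corresponding universal objects there. Concretely, since a category is cartesian closed, countably complete and countably co-complete iff it has finite products, exponentials, equalizers, countable products, countable co-products and co-equalizers, it suffices to show that the full subcategory $\QCBZ(\mathbf{HP})=\bigcup_{\alpha<\omega_1}\QCBZ(\bfSig^1_\alpha)$ is closed under each of these six operations; universality is then inherited automatically, since a full subcategory closed under a limit/colimit construction inherits that construction from the ambient category.

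First I would handle \emph{finite and countable products}. Given countably many spaces $X_k\in\QCBZ(\mathbf{HP})$, choose for each $k$ an ordinal $\alpha_k<\omega_1$ with $X_k\in\QCBZ(\bfSig^1_{\alpha_k})$, and put $\alpha:=\sup_k\alpha_k<\omega_1$ (using that $\omega_1$ is regular). By Proposition~\ref{p:hpro1}(1), $X_k\in\QCBZ(\bfSig^1_\alpha)$ for all $k$; hence by Proposition~\ref{p:prod:coprod}(1) the $\QCBZ$-product $\prod_k X_k$ lies in $\QCBZ(\bfSig^1_\alpha)\subseteq\QCBZ(\mathbf{HP})$. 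The same argument with Proposition~\ref{p:prod:coprod}(1) gives closure under countable \emph{co-products}, and the binary case is subsumed. For \emph{exponentials}: given $X,Y\in\QCBZ(\mathbf{HP})$, pick $\alpha<\omega_1$ with $X\in\QCBZ(\bfSig^1_\alpha)$ and $Y\in\QCBZ(\bfSig^1_\alpha)\subseteq\QCBZ(\bfPi^1_{\alpha+1})$ (using Proposition~\ref{p:hpro1}(1); we may assume $\alpha\geq 1$); then Proposition~\ref{p:exp} gives $Y^X\in\QCBZ(\bfPi^1_{\alpha+1})\subseteq\QCBZ(\bfSig^1_{\alpha+2})\subseteq\QCBZ(\mathbf{HP})$. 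For \emph{equalizers}, given $f_1,f_2\colon X\to A$ with $A,X\in\QCBZ(\mathbf{HP})$, choose $\alpha$ with $A,X\in\QCBZ(\bfSig^1_\alpha)$, $\alpha\geq 1$, and apply Proposition~\ref{l:equalizers} to get the equalizer in $\QCBZ(\bfSig^1_\alpha)$.

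The only genuinely delicate case is \emph{co-equalizers}, and this is where I expect the main obstacle. Proposition~\ref{l:QCBZ:coequalizers} is stated only for limit levels $\QCBZ(\bfSig^1_{<\lambda})$, not for a single level $\QCBZ(\bfSig^1_\alpha)$ — and indeed forming a co-equalizer passes through the Kolmogorov quotient $\mathcal{T}_0$, which by Proposition~\ref{p:QTE->QCBZ} can raise the level from $\alpha$ to $\alpha+2$, so no single level need be closed. The fix is that $\QCBZ(\mathbf{HP})$ is a \emph{union} over all countable ordinals: given $f,g\colon A\to X$ with $A,X\in\QCBZ(\mathbf{HP})$, choose $\alpha$ with $A,X\in\QCBZ(\bfSig^1_\alpha)$, then set $\lambda:=\alpha+\omega$, a countable limit ordinal with $A,X\in\QCBZ(\bfSig^1_{<\lambda})$; Proposition~\ref{l:QCBZ:coequalizers} yields that the $\QCBZ$-co-equalizer of $f,g$ lies in $\QCBZ(\bfSig^1_{<\lambda})\subseteq\QCBZ(\mathbf{HP})$. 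Having verified closure under all six constructions, I would conclude: $\QCBZ(\mathbf{HP})$ has finite products, exponentials (with the evaluation morphism and curry/uncurry inherited from $\QCBZ$ since the subcategory is full), equalizers, countable products, countable co-products and co-equalizers, all computed as in $\QCBZ$; hence it is cartesian closed, countably complete and countably co-complete, and inherits all these structures from $\QCBZ$. (If desired, one can append the analogous negative remark that $\QCBZ(\mathbf{P})$ fails these closure properties — witnessed, e.g., by $\prod_k\Nk{k+1}\cong\Nk{\omega}$ via Lemma~\ref{l:N<k>:properties}, which by Theorem~\ref{th:luzrep} is not projective — but that is a separate statement from the one being proved.)
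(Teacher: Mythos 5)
Your proof is correct and follows essentially the same route as the paper: it reduces everything to the closure results of Section~\ref{sec:hph:qcb} (Propositions~\ref{p:exp}, \ref{p:prod:coprod}, \ref{l:equalizers} and \ref{l:QCBZ:coequalizers}) together with fullness inside the cartesian closed category $\QCBZ$. The only difference is that you make explicit the level bookkeeping (taking suprema of levels, and passing to the limit ordinal $\lambda=\alpha+\omega$ for co-equalizers) which the paper leaves implicit.
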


\begin{proof}
By Propositions \ref{p:exp}  and \ref{p:prod:coprod}, $\QCBZ(\mathbf{HP})$ is closed under $\QCBZ$-exponentials and countable products. Since $\QCBZ(\mathbf{HP})$ is a full subcategory of the cartesian closed category $\QCBZ$, it is also cartesian closed.
The remaining properties follow from the corresponding results about countable products, countable co-products, 
equalizers and co-equalizers in Section \ref{sec:hph:qcb}.
\end{proof} 

The next result provides a characterization of $\QCBZ(\mathbf{HP})$ that avoids explicit mention 
of the hyperprojective hierarchy.
We thank Matthew de Brecht for pointing out this fact to us and for allowing us to include it into the paper.

\begin{theorem}\label{th:QCBZ(HP):is:min} 
 The category $\QCBZ(\mathbf{HP})$ is (up to homeomorphic equivalence) the smallest full subcategory of $\QCBZ$ 
 that has the closure properties from the previous theorem and contains the Sierpinski space as an object.
 So any 
 full subcategory $\sfC$ of $\QCBZ$ which contains the Sierpinski space as an object and
 inherits exponentials, countable limits and countable co-limits from $\QCBZ$ contains a homeomorphic copy 
 of any space in $\QCBZ(\mathbf{HP})$. 
\end{theorem}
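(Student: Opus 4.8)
The plan is to show that any full subcategory $\sfC$ satisfying the hypotheses contains a homeomorphic copy of every space in $\QCBZ(\mathbf{HP})$, by building up along the hyperprojective hierarchy. The base case uses the Sierpinski space $\IS$: since $\IS \in \sfC$, and $\sfC$ inherits countable products and exponentials from $\QCBZ$, it contains $\IS^\IS$, $\IS^{\IS^\IS}$, and so on, as well as $\omega \cong \mathbb{N}\langle 0 \rangle$ (obtained, e.g., as a subspace of a suitable equalizer, or more directly: $\omega$ is a continuous retract of $\calN$, and $\calN \cong \IS^\omega$ is obtainable once $\omega$ is — so one should first exhibit $\omega$ and $\calN$ as objects of $\sfC$ using only $\IS$ together with countable limits and co-limits; a clean way is to note that $\calN$ arises as a countable co-limit/equalizer construction from copies of $\IS$, and $\omega$ as a retract, hence an equalizer, inside $\calN$). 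Once $\omega \in \sfC$, the functionals $\mathbb{N}\langle\alpha\rangle$ of Definition~\ref{func} are all in $\sfC$ by transfinite induction on $\alpha$, using that $\sfC$ inherits exponentials (for successor steps $\mathbb{N}\langle\alpha+1\rangle = \omega^{\mathbb{N}\langle\alpha\rangle}$) and countable products (for limit steps $\mathbb{N}\langle\lambda\rangle = \prod_{\alpha<\lambda}\mathbb{N}\langle\alpha\rangle$), and that $\sfC$ is full in $\QCBZ$ so the homeomorphism type is what matters.

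Next I would show that every $X \in \QCBZ(\mathbf{HP})$ embeds as a nice subspace of some $\mathbb{N}\langle\alpha\rangle$ in a way that $\sfC$ can recover. The key input is that $X \in \QCBZ(\bfSig^1_\alpha)$ for some non-zero countable $\alpha$, so $X$ has an admissible representation $\delta \colon D \to X$ with $D = \dom(\delta)$ and $\EQ(\delta) \in \bfSig^1_\alpha(\calN^2)$, and after applying Proposition~\ref{p:equiv:Gamma-representable}(1)-style reasoning we may arrange $D \in \bfSig^1_\alpha(\calN)$ as well (replacing $\alpha$ by $\alpha+1$ if needed, which is harmless since we take the union over all levels). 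By Theorem~\ref{charluz}, there is a continuous surjection $f\colon \mathbb{N}\langle\alpha\rangle \to D$; composing, $\delta \circ f \colon \mathbb{N}\langle\alpha\rangle \to X$ is a continuous (admissible quotient) surjection. The task is then to realize $X$ as a co-equalizer (or quotient) of this map inside $\sfC$: the kernel pair of $\delta\circ f$ is an equivalence relation $E \subseteq \mathbb{N}\langle\alpha\rangle \times \mathbb{N}\langle\alpha\rangle$, and $X$ is the co-equalizer of the two projections $E \rightrightarrows \mathbb{N}\langle\alpha\rangle$. Since $\sfC$ inherits co-equalizers from $\QCBZ$, it suffices to check that $E$, as a qcb$_0$-space, lies in $\sfC$ — and $E$ is a subspace of $\mathbb{N}\langle\alpha\rangle\times\mathbb{N}\langle\alpha\rangle \cong \mathbb{N}\langle\alpha\rangle$ cut out by the relation $\EQ(\delta f)$, which (having complexity within the hyperprojective hierarchy) can itself be obtained as a co-equalizer/equalizer of maps between functionals, recursively. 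One must be careful that the co-equalizer computed in $\sfC$ agrees with the one in $\QCBZ$ — this is exactly the content of "inherits co-equalizers."

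A cleaner route, which I would actually prefer, is to package the whole construction through $\QCBZ(\mathbf{HP})$ itself: we have already shown (Theorems~\ref{th:QCBZ(HP):is:ccc}, \ref{th:luzrep}, and the characterization work above) that $\QCBZ(\mathbf{HP})$ is generated from $\IS$ under exponentials, countable limits, and countable co-limits — i.e., it is contained in the closure of $\{\IS\}$ under these operations. Concretely: let $\sfC_0$ be the smallest full subcategory of $\QCBZ$ containing $\IS$ and closed under the stated operations; then $\sfC_0 \subseteq \sfC$ for any $\sfC$ as in the statement, so it is enough to prove $\QCBZ(\mathbf{HP}) \subseteq \sfC_0$, and symmetrically $\sfC_0 \subseteq \QCBZ(\mathbf{HP})$ follows from Theorem~\ref{th:QCBZ(HP):is:ccc} (all the generating operations preserve hyperprojectivity, and $\IS \in \QCBZ(\bfPi^1_0)$). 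For the inclusion $\QCBZ(\mathbf{HP}) \subseteq \sfC_0$: take $X \in \QCBZ(\bfSig^1_\alpha)$, get $\delta\colon D\to X$ with $D, \EQ(\delta)$ in $\bfSig^1_\alpha$, realize $D$ and $\EQ(\delta)$ (the latter being a subspace of $\calN$) via Theorem~\ref{charluz} as ranges of continuous maps from functionals $\mathbb{N}\langle\alpha\rangle \in \sfC_0$, and assemble $X$ as the appropriate co-equalizer in $\sfC_0$. The main obstacle is precisely this assembly step: one needs that a subspace of $\calN$ of hyperprojective complexity, and then a quotient of such a subspace, can be built from functionals using only finitely/countably many applications of limits and co-limits staying inside $\sfC_0$ — i.e., one must convert the "complexity bound" statements of Section~\ref{sec:hph:qcb} (Propositions~\ref{l:equalizers}, \ref{l:QCBZ:coequalizers}, \ref{p:QTE->QCBZ}) into explicit categorical constructions over the generators, taking care that subspaces which are $\bfGamma$-subsets arise as equalizers (as in the proof of Proposition~\ref{l:equalizers}) and that the non-$T_0$ collapse in co-equalizers (Proposition~\ref{p:QTE->QCBZ}) is itself expressible via exponentials into $\IS$ (as in the proof of Proposition~\ref{p:QTE->QCBZ}, where $\mathcal{T}_0(X) \hookrightarrow \IS^{\IS^X}$). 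Granting those identifications, the transfinite induction closes.
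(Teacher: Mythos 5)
Your overall architecture is the same as the paper's (get $\omega$, hence all the functionals $\Nk{\alpha}$, into $\sfC$ by transfinite induction, then recover an arbitrary $X\in\QCBZ(\mathbf{HP})$ from an admissible representation via a co-equalizer), but the decisive step is missing, and you flag it yourself as the ``main obstacle'': you never actually show that a hyperprojective subspace of $\calN$ --- in particular $D=\dom(\delta)$ and $\EQ(\delta)\subseteq\calN^2\cong\calN$ --- is an object of $\sfC$. The paper's key idea, which your sketch does not contain, is a direct equalizer construction: for a proper hyperprojective $D\subseteq\calN$, apply Theorem~\ref{charluz} to the \emph{complement} to obtain a continuous surjection $f\colon\Nk{\alpha}\to\calN\setminus D$, and exhibit $D$ as the equalizer in $\QCBZ$ of the two continuous maps $F,G\colon\calN\to\IS^{\Nk{\alpha}}$ given by $F(x)(y)=\top$ and $G(x)(y)=\bot$ iff $f(y)=x$ (continuity of $G$ uses that $\calN$ is Hausdorff). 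Once $D$ and $\EQ(\delta)$ are known to be $\sfC$-objects this way, the paper simply observes that $\delta$ itself is a quotient map, so $X$ together with $\delta$ is the co-equalizer of the two projections $\EQ(\delta)\rightrightarrows D$; no recursion on descriptive complexity and no translation of Propositions~\ref{l:equalizers}, \ref{l:QCBZ:coequalizers}, \ref{p:QTE->QCBZ} into ``explicit categorical constructions'' is needed. Your closing ``granting those identifications, the transfinite induction closes'' is precisely the content of this missing lemma, so the proposal has a genuine gap rather than an alternative proof.

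Two further points would need repair even if that gap were filled. First, in your preferred concrete route you pass to the kernel pair of $\delta\circ f$, where $f\colon\Nk{\alpha}\to D$ is merely a continuous surjection from Theorem~\ref{charluz}; but $\delta\circ f$ need not be a quotient map, and the co-equalizer of the kernel pair of a non-quotient continuous surjection can carry a strictly finer topology than $X$. The paper avoids this by keeping $\delta$ (a quotient map, being an admissible representation of a sequential space) as the co-equalizing arrow and only enlarging the supply of \emph{objects}, not changing the map. Second, the base case is shaky: $\calN\not\cong\IS^\omega$ (that countable power is $\Pomega$), and ``$\calN$ arises as a countable co-limit/equalizer construction from copies of $\IS$'' is asserted without a construction. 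The paper does this concretely: $\IS^\omega\cong\Pomega$ is in $\sfC$ by countable products, and the subspace of singletons of $\Pomega$, homeomorphic to $\omega$, is the equalizer of two explicit continuous self-maps of $\Pomega$; after that, $\calN=\omega^\omega$ and all $\Nk{\alpha}$ follow by exponentials and countable products, exactly as in your transfinite induction.
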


\begin{proof}
 Let $\sfC$ be a full subcategory of $\QCBZ$ which contains the Sierpinski space as an object and
 inherits exponentials, countable limits and countable co-limits from $\QCBZ$. 
 So $\sfC$ is cartesian closed, countably complete and countably co-complete.
 Without loss of generality we assume that $\sfC$ is closed under homeomorphic equivalence.
 
 First we show that $\omega$ is an object of $\sfC$.
 Since $\IS$ is a $\sfC$-object and $\sfC$ is closed under countable product, $\IS^\omega$ is a $\sfC$-object.
 By being homeomorphic to $\IS^\omega$, $\Pomega$ is in $\sfC$ as well.
 Clearly, $\omega$ is homeomorphic to the subspace $M$ of $\Pomega$ consisting of all singleton subsets of $\omega$.
 We define $f,g\colon \Pomega \to \Pomega$ by
 \[
   f(p):= \left\{
   \begin{array}{cl}
     \emptyset & \text{if $p = \emptyset$}
     \\
     \{0\} & \text{otherwise}
   \end{array}\right.
  \;\;\text{and}\;\;
   g(p):= \left\{
   \begin{array}{cl}
     \{0,1\} & \text{if $p$ contains at least two numbers}
     \\
     \{0\} & \text{otherwise}
   \end{array}\right.
 \]
 Clearly, $f$ and $g$ are continuous and satisfy $f(p)=g(p) \iff p \in X$.
 Hence $M$ (together with its inclusion into $\Pomega$) is an equalizer to $f,g$ in $\QCBZ$ 
 and thus in $\sfC$. We conclude $\omega \in \sfC$.

 Next we show that any proper hyperprojective subset $D$ of the Baire space $\calN$,
 endowed with the subspace topology, is a $\sfC$-object. 
 To see this, choose a non-zero countable ordinal $\alpha$ such that 
 $\calN \setminus D\in\mathbf\Sigma^1_\alpha(\calN)$. 
 By Theorem \ref{charluz} there is a continuous surjection $f$ from $\Nk{\alpha}$ onto $\calN \setminus D$. 
 We define $F,G\colon \calN  \to \IS^{\Nk{\alpha}}$ by
 \[
   F(x)(y):=\top
   \;\;\text{and}\;\;
   G(x)(y):= \left\{
   \begin{array}{ll}
     \bot & \text{$f(y)=x$}
     \\
     \top & \text{otherwise}
   \end{array}\right.
 \]
 for all $x \in \calN$ and $y \in \Nk{\alpha}$.
 Since $\calN$ is a Hausdorff space and $\QCBZ$ is cartesian closed, $F$ and $G$ are continuous. 
 Moreover every $x \in \calN$ satisfies $F(x)=G(x)$ iff $x \notin \rng(f)$ iff $x \in D$.
 Hence $D$ is an equalizer to $F,G$ in $\QCBZ$.
 Since $\IS^{\Nk{\alpha}} \in \sfC$, we obtain $D \in \sfC$.
 As $\calN^2$ is homeomorphic to $\calN$, every subspace of $\calN^2$ with a hyperprojective carrier set
 is in $\sfC$ as well.

Finally, we employ co-equalizers to get all of the hyperprojective qcb$_0$-spaces. 
Let $X$ be a qcb$_0$-space having an admissible representation $\delta\colon D\to X$ 
such that $\EQ(\delta)$ is hyperprojective. Thus $D$ is hyperprojective and $\delta$ is a quotient map.
Above we have seen that $D$ and $\EQ(\delta)$ with the respective subspace topologies are spaces in $\sfC$.
We let $p_1,p_2\colon \EQ(\delta) \to D$ be the respective projections. 
One easily checks that $X$ and $\delta$ form a co-equalizer to $p_1,p_2$ in $\QCBZ$ and thus in~$\sfC$.
Hence $X$ is a $\sfC$-object.
\end{proof}

Next we identify a natural ``small'' cartesian closed subcategory of $\QCBZ(\mathbf{HP})$ closed under countable product.
By $\mathbf{1}$ we denote a fixed one-point space.

\begin{theorem}\label{minccc}  
 The full subcategory $\sfF:=\{ \mathbf{1}, \Nk{\alpha}\mid \alpha<\omega_1\}$ of $\QCBZ$ is cartesian closed and closed under countable $\QCBZ$-product. Moreover, $\sfF$ is the smallest such subcategory in the following sense: if $\sfC$ is a full cartesian closed subcategory  of $\QCBZ$ which is closed under countable product inherited  from $\QCBZ$ and
 contains the space $\omega$ as an object, then any $\sfF$-object is homeomorphic to a $\sfC$-object.
\end{theorem}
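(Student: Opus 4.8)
The plan is to split the statement into two parts: (a) $\sfF$ is cartesian closed and closed under countable $\QCBZ$-product, and (b) minimality. For part (a), the key observation is that $\sfF$ is already known (from Lemma~\ref{l:N<k>:properties}) to be closed, up to homeomorphism, under the relevant operations. Concretely, the terminal object $\mathbf 1$ is in $\sfF$ by fiat; the binary product of two spaces in $\sfF$ is again in $\sfF$ by Proposition~\ref{p:Na*Nb=Nb} (with $\mathbf 1$ acting as a unit, so products involving $\mathbf 1$ are trivial); countable products of $\Nk{\alpha_i}$'s land back in $\sfF$ by Lemma~\ref{l:N<k>:properties}\eqref{e:a0:a1:a2} (again treating factors equal to $\mathbf 1$ as harmless); and exponentials $\Nk{\alpha}^{\Nk{\beta}}$ are in $\sfF$ by Lemma~\ref{l:N<k>:properties}\eqref{e:Na^Nb}, while $\mathbf 1^X \cong \mathbf 1$ and $X^{\mathbf 1} \cong X$. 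Since $\sfF$ is a full subcategory of the cartesian closed category $\QCBZ$ and is closed under the formation of these objects, it inherits the cartesian closed structure and the countable products. So part (a) is essentially a bookkeeping exercise invoking the already-proved lemmas.

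For part (b), let $\sfC$ be a full cartesian closed subcategory of $\QCBZ$ closed under countable $\QCBZ$-product and containing $\omega$. I would show by transfinite induction on $\alpha<\omega_1$ that $\Nk{\alpha}$ is (homeomorphic to) a $\sfC$-object; since a full cartesian closed category automatically contains a terminal object $\mathbf 1$, this suffices. The base case $\Nk{0}=\omega$ is the hypothesis. For the successor step, $\Nk{\alpha+1}=\omega^{\Nk{\alpha}}$; if $\Nk{\alpha}$ is in $\sfC$, then since $\sfC$ is cartesian closed and closed under homeomorphism and contains $\omega$, the exponential $\omega^{\Nk{\alpha}}$ (computed in $\QCBZ$, which $\sfC$ inherits) is a $\sfC$-object, hence so is $\Nk{\alpha+1}$. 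For the limit step, $\Nk{\lambda}=\prod_{\alpha<\lambda}\Nk{\alpha}$ is a countable product (as $\lambda<\omega_1$) of spaces already shown to be in $\sfC$, so it is a $\sfC$-object by closure under countable product. This completes the induction and hence the minimality claim.

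The point that needs a little care is the treatment of $\mathbf 1$: the statement asserts $\sfF$ itself is cartesian closed, so I must make sure $\mathbf 1\in\sfF$ genuinely plays the role of the terminal object and that every product/exponential involving it stays inside $\sfF$ — this is immediate but should be stated, since $\mathbf 1$ is not of the form $\Nk{\alpha}$. The only other subtlety is that in the minimality argument I should remark that a cartesian closed category has a terminal object by definition, so $\mathbf 1$ is automatically a $\sfC$-object and need not be assumed; apart from that, the induction is routine and the main content has already been packaged into Proposition~\ref{p:Na*Nb=Nb} and Lemma~\ref{l:N<k>:properties}. There is no real obstacle here beyond correctly citing those results; if anything, the mild subtlety is checking that the ambient $\QCBZ$-exponential and $\QCBZ$-product, which $\sfC$ inherits, are the ones appearing in Definition~\ref{func}, which they are by construction of $\delta_{\alpha+1}$ and $\delta_\lambda$.
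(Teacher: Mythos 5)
Your first half (that $\sfF$ itself is cartesian closed and closed under countable $\QCBZ$-products) matches the paper: it is exactly the bookkeeping appeal to Proposition~\ref{p:Na*Nb=Nb} and Lemma~\ref{l:N<k>:properties}\eqref{e:a0:a1:a2},\eqref{e:Na^Nb}, and the remarks about $\mathbf{1}$ are harmless. The problem is in the minimality part, at the successor step of your induction.

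The hypothesis of the theorem only says that $\sfC$ is closed under countable products \emph{inherited from} $\QCBZ$ and is (intrinsically) cartesian closed; it does \emph{not} say that the exponentials of $\sfC$ are the $\QCBZ$-exponentials. Your step ``the exponential $\omega^{\Nk{\alpha}}$ (computed in $\QCBZ$, which $\sfC$ inherits) is a $\sfC$-object'' assumes precisely what is not given: a priori, the cartesian closed structure of $\sfC$ could provide some other object $[\Nk{\alpha}\to\omega]_{\sfC}$ as the exponential inside $\sfC$, and your induction would then only produce these intrinsic exponentials, not homeomorphic copies of the spaces $\Nk{\alpha+1}=\omega^{\Nk{\alpha}}$ of Definition~\ref{func}. (Your closing remark that this inheritance holds ``by construction of $\delta_{\alpha+1}$'' does not address this; the canonical representations are irrelevant to which exponential objects $\sfC$ happens to possess.) The paper closes exactly this gap by citing the proof of Proposition~8.2 in \cite{scs13}, which shows that in a full cartesian closed subcategory of $\QCBZ$ whose products are the $\QCBZ$-products, the exponentials formed in the subcategory are necessarily homeomorphic to the corresponding $\QCBZ$-exponentials; only after that identification does your transfinite induction (base $\omega$, successor via exponentiation by $\omega$, limit via countable products, which \emph{are} assumed inherited) go through. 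Note also that the contrast with Theorem~\ref{th:QCBZ(HP):is:min}, where inheritance of exponentials is explicitly assumed, and the weaker hypothesis used in Corollary~\ref{c:QCBZ(P):is:min}, show that this weaker formulation is intentional, so the missing comparison of exponentials is a genuine part of the content rather than a pedantic point.
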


\begin{proof}
By Lemma \ref{l:N<k>:properties}\eqref{e:a0:a1:a2}, $\sfF$ is closed under countable $\QCBZ$-product. 
By Lemma \ref{l:N<k>:properties}\eqref{e:Na^Nb}, $\sfF$ is closed under $\QCBZ$-exponentiation, 
hence $\sfF$ is also cartesian closed.

Let now $\sfC$ be a subcategory with the specified properties.
From the proof of Proposition 8.2 in \cite{scs13} it follows that exponentials formed in $\sfC$ are homeomorphic
to the corresponding $\QCBZ$-exponentials. Since $\omega$ is a $\sfC$-object,
$\sfC$ contains  homeomorphic copies of all the spaces of continuous functionals of countable types.
\end{proof}

\begin{corollary}\label{c:QCBZ(P):is:min}
 There is no full cartesian closed subcategory $\sfC$ of $\QCBZ$
 such that $\sfC$ inherits countable products from $\QCBZ$,
 contains the discrete space $\omega$ of natural numbers
 and is contained itself in $\QCBZ(\bfSig^1_\alpha)$
 for some $\alpha < \omega_1$.
\end{corollary}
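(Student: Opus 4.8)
The plan is to derive a contradiction from the existence of such a subcategory $\sfC$, by combining the minimality statement of Theorem~\ref{minccc} with the lower bound of Theorem~\ref{th:luzrep}. So suppose $\sfC$ is a full cartesian closed subcategory of $\QCBZ$ that inherits countable products from $\QCBZ$, contains the discrete space $\omega$ as an object, and satisfies $\sfC \subseteq \QCBZ(\bfSig^1_\alpha)$ for some $\alpha<\omega_1$; without loss of generality $\alpha$ is non-zero. First I would apply Theorem~\ref{minccc} to $\sfC$: being a full cartesian closed subcategory of $\QCBZ$ closed under countable $\QCBZ$-product and containing $\omega$, the category $\sfC$ contains a homeomorphic copy of every $\sfF$-object, and in particular a space $Z \cong \Nk{\alpha+1}$.

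Next I would record the (routine) fact that $\QCBZ(\bfSig^1_\alpha)$ is closed under homeomorphism: if $h\colon Z\to \Nk{\alpha+1}$ is a homeomorphism and $\delta$ is an admissible representation of $Z$ with $\EQ(\delta)\in\bfSig^1_\alpha(\calN^2)$, then $h\circ\delta$ is an admissible representation of $\Nk{\alpha+1}$ with $\EQ(h\circ\delta)=\EQ(\delta)$, so $\Nk{\alpha+1}\in\QCBZ(\bfSig^1_\alpha)$. Since $Z\in\sfC\subseteq\QCBZ(\bfSig^1_\alpha)$ by assumption, this yields $\Nk{\alpha+1}\in\QCBZ(\bfSig^1_\alpha)$, which directly contradicts Theorem~\ref{th:luzrep} (stating $\Nk{\alpha+1}\in\QCBZ(\bfPi^1_\alpha)\setminus\QCBZ(\bfSig^1_\alpha)$ for every non-zero countable $\alpha$). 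This finishes the proof.

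There is essentially no hard step here; the corollary is a compact repackaging of the two named theorems, and the only point worth stating explicitly is the homeomorphism-invariance of the classes $\QCBZ(\bfGamma)$, which is what permits us to transport the ``minimal witness'' $\Nk{\alpha+1}$ provided by Theorem~\ref{minccc} into the hypothetical bounded category $\sfC$. One could equally phrase the argument without singling out $\Nk{\alpha+1}$: Theorem~\ref{minccc} already forces $\sfC$ to contain copies of $\Nk{\beta}$ for all $\beta<\omega_1$, so $\sfC$ cannot sit inside any single level $\QCBZ(\bfSig^1_\alpha)$ of a hierarchy that, by Proposition~\ref{p:equiv:Gamma-representable}(3), does not collapse.
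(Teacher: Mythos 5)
Your proof is correct and follows essentially the same route as the paper: invoke Theorem~\ref{minccc} to place a homeomorphic copy of $\Nk{\alpha+1}$ inside $\sfC$, use homeomorphism-invariance of $\QCBZ(\bfSig^1_\alpha)$ (which the paper leaves implicit), and contradict Theorem~\ref{th:luzrep}. The only addition is your explicit ``without loss of generality $\alpha$ is non-zero'' remark, which is a harmless and indeed slightly more careful touch.
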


\begin{proof} 
Suppose for a contradiction that $\sfC$ were a cartesian closed subcategory
of $\QCBZ$ with the specified properties. 
By the previous theorem, there is a space $E \in \sfC$ homeomorphic to $\Nk{\alpha+1}$.
Hence $\Nk{\alpha+1}$ has an admissible representation $\delta$
such that $\mathit{EQ}(\delta) \in \bfSig^1_\alpha(\calN^2)$.
This contradicts Theorem~\ref{th:luzrep}.
\end{proof} 

We conclude this section by formulating ``finite'' versions of the results above. They are proved in the same way and provide a bit of new information to some results in \cite{scs13}. 
A category is \emph{finitely complete (resp. co-complete)} if it is closed under finite limits (resp. co-limits).

\begin{theorem}\label{fin-is:ccc}
 The category $\QCBZ(\mathbf{P})$ is the smallest full subcategory of $\QCBZ$
 that inherits exponentials, finite limits and finite co-limits from $\QCBZ$
 and contains the Sierpinski space and the space $\omega$ as  objects. 
\end{theorem}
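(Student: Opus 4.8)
The plan is to mirror the proof of Theorem~\ref{th:QCBZ(HP):is:min}, replacing countable limits/colimits with finite ones throughout and using the finite-level analogues of the cited ingredients. Concretely, let $\sfC$ be any full subcategory of $\QCBZ$ that inherits exponentials, finite limits and finite co-limits from $\QCBZ$ and contains $\IS$ and $\omega$ as objects; without loss of generality $\sfC$ is closed under homeomorphic equivalence. I must show $\sfC$ contains a homeomorphic copy of every projective qcb$_0$-space, i.e.\ of every $X\in\bigcup_{k<\omega}\QCBZ(\bfSig^1_k)$.

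First I would recover the relevant "parameter" spaces inside $\sfC$. Since $\omega\in\sfC$ and $\sfC$ is closed under exponentials, each Kleene--Kreisel functional space $\Nk{k}=\omega^{\Nk{k-1}}$ for finite $k$ is built from $\omega$ by finitely many exponentiations, hence lies in $\sfC$; in particular $\calN=\Nk{1}\in\sfC$, and since $\calN^2\cong\calN$ also $\calN^2\in\sfC$. (Here it is the finite iteration that matters: we never need a countable product or a limit functional $\Nk{\lambda}$, which is exactly why finiteness of the completeness hypotheses suffices.) Next, for a proper projective subset $D\subseteq\calN$, choose finite $k\geq 1$ with $\calN\setminus D\in\bfSig^1_k(\calN)$; by Theorem~\ref{charluz} (finite-level case, which is Theorem 7.6 of \cite{scs13}) there is a continuous surjection $f\colon\Nk{k}\to\calN\setminus D$, and the same pair of maps $F,G\colon\calN\to\IS^{\Nk{k}}$ as in the proof of Theorem~\ref{th:QCBZ(HP):is:min} exhibits $D$ as an equalizer of $F,G$ in $\QCBZ$; since $\IS^{\Nk{k}}\in\sfC$ and $\sfC$ inherits finite limits, $D\in\sfC$. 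Likewise every subspace of $\calN^2$ with a projective carrier is in $\sfC$.

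Finally I would realize an arbitrary projective qcb$_0$-space $X$ as a co-equalizer, exactly as in the proof of Theorem~\ref{th:QCBZ(HP):is:min}: pick an admissible representation $\delta\colon D\to X$ with $\EQ(\delta)$ projective (so $D$ and $\EQ(\delta)$ are projective subspaces of $\calN$, hence in $\sfC$ by the previous step), let $p_1,p_2\colon\EQ(\delta)\to D$ be the two projections, and check that $\delta$ together with $X$ is a co-equalizer of $p_1,p_2$ in $\QCBZ$; as $\sfC$ inherits finite co-limits, $X\in\sfC$. Conversely, $\QCBZ(\mathbf{P})$ itself is closed under $\QCBZ$-exponentials and finite limits/co-limits by the finite-ordinal instances of Propositions~\ref{p:exp}, \ref{l:equalizers} and~\ref{l:QCBZ:coequalizers} (one only ever uses finitely many levels, so the limit-ordinal bookkeeping of Section~\ref{sec:hph:qcb} is not needed), and it contains $\IS$ and $\omega$; this makes $\QCBZ(\mathbf{P})$ itself a legitimate candidate for $\sfC$, so it is the smallest one.

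**The main obstacle** I anticipate is bookkeeping rather than mathematics: one must confirm that the co-equalizer step does not secretly re-introduce two extra levels (as the $\mathcal{T}_0$-operator does in Proposition~\ref{p:QTE->QCBZ}, which costs $\alpha\mapsto\alpha+2$), since here the ambient class is the union over \emph{all} finite $k$ and so is already closed under $k\mapsto k+2$; thus the shift is harmless for $\QCBZ(\mathbf{P})$, but I would state this explicitly. The only genuine point to verify carefully is that the finite-level version of Theorem~\ref{charluz} — the surjection $f\colon\Nk{k}\twoheadrightarrow\calN\setminus D$ for $\calN\setminus D\in\bfSig^1_k$ — is available with the \emph{finite-type} functional $\Nk{k}$ and not some limit type; this is precisely Theorem 7.6 of \cite{scs13}, whose inductive proof uses only finitely many applications of Proposition~\ref{p:key} and Lemma~\ref{l:N<k>:properties}, all within $\sfF$'s finite part, so it goes through verbatim.
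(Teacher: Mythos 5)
Your proposal is correct and follows exactly the route the paper intends: the paper gives no separate argument for Theorem~\ref{fin-is:ccc}, stating only that the finite versions are ``proved in the same way'' as Theorems~\ref{th:QCBZ(HP):is:ccc} and~\ref{th:QCBZ(HP):is:min}, and your adaptation (using the assumed object $\omega$ in place of the countable-product construction of $\Pomega$, the finite-level case of Theorem~\ref{charluz}, and the observation that the $+2$ shift from the co-equalizer construction is absorbed by the union over all finite levels) is precisely that adaptation. No gaps.
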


\begin{theorem}\label{fin-minccc}
 The full subcategory $\sfF:=\{\mathbf{1},\,\Nk{k}\mid k\in \omega\}$ of $\QCBZ$ is cartesian closed and closed under finite $\QCBZ$-product. Moreover, $\sfF$ is the smallest such subcategory in the following sense: if $\sfC$ is a full cartesian closed subcategory  of $\QCBZ$ which is closed under finite product inherited  from $\QCBZ$ and
 contains the space $\omega$ as an object, then any $\sfF$-object is homeomorphic to a $\sfC$-object.
 \end{theorem}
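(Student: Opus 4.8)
The plan is to repeat the argument of Theorem~\ref{minccc} with finite ordinals in place of arbitrary countable ones, the point being that for $k\in\omega$ the space $\Nk{k}$ is built from $\omega$ using only \emph{finite} products and exponentials, so countable products are never needed. For the first assertion I would record the finite instances of Proposition~\ref{p:Na*Nb=Nb} and Lemma~\ref{l:N<k>:properties}\eqref{e:Na^Nb}: for $j,k\in\omega$ one has $\Nk{j}\times\Nk{k}\cong\Nk{\max\{j,k\}}$ and $\Nk{j}^{\Nk{k}}\cong\Nk{\max\{j,k+1\}}$, while trivially $\mathbf{1}\times Z\cong Z$, $\mathbf{1}^{Z}\cong\mathbf{1}$ and $Z^{\mathbf{1}}\cong Z$ for every qcb$_0$-space $Z$, and $\mathbf{1}$ is the terminal object of $\QCBZ$ (the empty product). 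Hence the $\QCBZ$-products and $\QCBZ$-exponentials of $\sfF$-objects again lie in $\sfF$; since $\sfF$ is a full subcategory of the cartesian closed $\QCBZ$, the corresponding universal properties are inherited, so $\sfF$ is cartesian closed and closed under finite $\QCBZ$-product, with the structure it inherits from $\QCBZ$.

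For the minimality assertion, let $\sfC$ be a full cartesian closed subcategory of $\QCBZ$ that is closed under finite $\QCBZ$-product and contains $\omega$; without loss of generality $\sfC$ is closed under homeomorphic equivalence. The terminal object $\mathbf{1}$ is the empty product and hence lies in $\sfC$. As in the proof of Theorem~\ref{minccc}, I would invoke the argument inside the proof of Proposition~8.2 of~\cite{scs13}, according to which exponentials formed in $\sfC$ are homeomorphic to the corresponding $\QCBZ$-exponentials. Granting this, an easy induction on $k$ finishes the proof: $\Nk{0}=\omega\in\sfC$, and if $\Nk{k}$ is homeomorphic to a $\sfC$-object then, using $\omega\in\sfC$ and the identification of the $\sfC$-exponential with the $\QCBZ$-function space, $\Nk{k+1}=\omega^{\Nk{k}}$ is homeomorphic to a $\sfC$-object as well. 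Thus every $\sfF$-object is homeomorphic to a $\sfC$-object.

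The only genuinely non-formal ingredient is this transfer result borrowed from~\cite{scs13}: a priori a full cartesian closed subcategory need not inherit exponentials, since the exponential in $\sfC$ is only required to be universal against $\sfC$-objects and could be carried by a topologically different space. So the main obstacle is to check that the argument of Proposition~8.2 of~\cite{scs13}, which pins down $\omega^{\Nk{k}}$ as the genuine $\QCBZ$-function space, still applies with only finite products available (it should, as that argument uses exponentiation and at most finite products); everything else is a routine specialisation to finite ordinals of the proofs of Proposition~\ref{p:Na*Nb=Nb}, Lemma~\ref{l:N<k>:properties} and Theorem~\ref{minccc}.
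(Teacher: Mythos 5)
Your proposal follows essentially the same route as the paper: the paper proves this "finite" theorem by the same argument as Theorem~\ref{minccc}, i.e.\ closure of $\sfF$ under products and exponentials via Proposition~\ref{p:Na*Nb=Nb} and Lemma~\ref{l:N<k>:properties}\eqref{e:Na^Nb}, and minimality via the transfer result from Proposition~8.2 of \cite{scs13} (exponentials in $\sfC$ agree with $\QCBZ$-exponentials) followed by induction on $k$ starting from $\omega\in\sfC$. Your explicit attention to the point that this transfer argument uses only exponentiation and finite products is exactly the check needed for the specialisation, so the proposal is correct.
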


\begin{remark}
 The above results characterize some subcategories of $\QCBZ$ in terms of minimality
 in a class of subcategories of $\QCBZ$ that enjoy certain structural properties.
 In most cases we have demanded that the corresponding constructions of new spaces are inherited from $\QCBZ$.
 However, for some of these constructions this requirement is not necessary.
 We omit the details.
\end{remark}

%
%

\section{Final Remarks}\label{sec:final}

We hope that the established closure properties of $\QCBZ(\mathbf{HP})$ 
motivates to study many other spaces of interest for Computable Analysis. 
We give two examples of sequences of function spaces 
which are interesting objects of investigation.

The first example is the sequence of hyperprojective
qcb$_0$-spaces $\{\Ra\}_{\alpha<\omega_1}$, defined by induction on $\alpha$ as follows:
 $$\Rk{0}:=\mathbb{R},\; \Rk{\alpha+1}:=\mathbb{R}^{\Rk{\alpha}}, \text{ and } 
\Rk{\lambda}:=\prod_{\alpha<\lambda}\Rk{\alpha},$$
 where $\mathbb{R}$ denotes the space of real numbers endowed with the standard Euclidean topology, 
$\alpha,\lambda<\omega_1$ and $\lambda$ is a non-zero limit ordinal. 
We call $\Ra$  \emph{the space of continuous functionals of  type $\alpha$} over $\mathbb{R}$.
Again, for finite ordinals we obtain the functionals over $\mathbb{R}$ of finite types 
which are rather popular. 
Propositions~\ref{p:exp} and \ref{p:prod:coprod} yield $\Rk{\alpha+1} \in \QCBZ(\bfPi^1_\alpha)$
with an analogous proof as for Theorem~\ref{th:luzrep}. 
With more effort one can establish $\Rk{\alpha+1} \notin \QCBZ(\bfSig^1_\alpha)$.

Recall that for any qcb$_0$-space $X$ 
by $\mathcal{O}(X)$ we denote the hyperspace of open sets in $X$ endowed with the Scott topology.
The space $\mathcal{O}(X)$ is  homeomorphic to the function space $\mathbb{S}^X$,
where $\mathbb{S}$ is  the Sierpinski space.
We define the sequence of  hyperprojective qcb$_0$-spaces 
$\{\mathcal{O}^\alpha(X)\}_{\alpha<\omega_1}$ by induction on $\alpha$ as follows:
$\mathcal{O}^0(X):=X$, $\mathcal{O}^{\alpha+1}(X):=\mathcal{O}(\mathcal{O}^\alpha(X))$, 
and $\mathcal{O}^\lambda(X):=\prod_{\alpha<\lambda}\mathcal{O}^\alpha(X)$, 
where again $\alpha,\lambda<\omega_1$ and $\lambda \neq 0$ is a limit ordinal. 
It seems to be worth investigating
the behavior of  $\{\Ra\}_{\alpha<\omega_1}$ 
and $\{\mathcal{O}^\alpha(X)\}_{\alpha<\omega_1}$ with respect to the hyperprojective hierarchy 
of qcb$_0$-spaces.


\end{document}